\documentclass[letterpaper,onecolumn,11pt,accepted=2019-04-24]{quantumarticle}
\usepackage{fullpage}

\pdfoutput=1
\usepackage[numbers,sort&compress]{natbib}
\usepackage[utf8]{inputenc}
\usepackage[english]{babel}
\usepackage[T1]{fontenc}
\usepackage{amsmath}
\usepackage{hyperref}

\usepackage{amsfonts}
\usepackage{amsthm}
\usepackage{amssymb}
\usepackage{braket}
\usepackage{latexsym}
\usepackage[usenames,dvipsnames]{color}
\usepackage{soul}
\usepackage{paralist}
\usepackage{bm}

\newtheorem{theorem}{Theorem}[section]

\newtheorem{lemma}[theorem]{Lemma}

\newtheorem{definition}[theorem]{Definition}

\makeatletter
\newtheorem*{rep@theorem}{\rep@title}
\newcommand{\newreptheorem}[2]{%
\newenvironment{rep#1}[1]{%
 \def\rep@title{#2 \ref{##1}}%
 \begin{rep@theorem}}%
 {\end{rep@theorem}}}
\makeatother

\newreptheorem{theorem}{Theorem}

\usepackage{graphicx}

\newcommand{\comment}[1]{}
\newcommand{\pr}{\operatorname{Pr}}
\newcommand{\tinyspace}{\mspace{1mu}}

\newcommand{\abs}[1]{\left\lvert\tinyspace #1 \tinyspace\right\rvert}

\newcommand{\norm}[1]{\left\lVert\tinyspace #1 \tinyspace\right\rVert}

\newcommand{\tr}{\operatorname{Tr}}

\newcommand{\setft}[1]{\mathrm{#1}}
\newcommand{\lin}[1]{\setft{L}\left(#1\right)}
\newcommand{\density}[1]{\setft{D}\left(#1\right)}

\newcommand{\class}[1]{\textup{#1}}
\newcommand{\prob}[1]{\textsc{#1}}
\newcommand{\ayes}{A_{\rm yes}}
\newcommand{\ano}{A_{\rm no}}
\DeclareMathOperator*{\argmax}{arg\,max}

\newcommand{\ketbra}[2]{\ket{#1}\!\bra{#2}}
\newcommand{\brakett}[2]{\mbox{$\langle #1  | #2 \rangle$}}

\newcommand{\enorm}[1]{\norm{#1}_{\mathrm{2}}}      
\newcommand{\trnorm}[1]{\norm{#1}_{\mathrm {tr}}}  
\newcommand{\snorm}[1]{\norm{#1}_{\mathrm {\infty}}}    

\newcommand{\hin}{H_{\rm in}}
\newcommand{\hprop}{H_{\rm prop}}
\newcommand{\hout}{H_{\rm out}}
\newcommand{\hstab}{H_{\rm stab}}

\newcommand{\app}{\prob{APX-SIM}}

\newcommand{\tpc}{\prob{APX-2-CORR}}
\newcommand{\spgap}{\prob{SPECTRAL-GAP}}
\newcommand{\spa}[1]{\mathcal{#1}}

\newcommand{\trace}{\tr}

\newcommand{\hw}{{\rm HW}}

\newcommand{\poly}{\textup{poly}}

\newcommand{\mypar}[1]{\vspace{2mm}\noindent\emph{#1.}}

\def\({\left(}
\def\){\right)}

\newcommand{\complex}{{\mathbb C}}
\newcommand{\reals}{{\mathbb R}}

\newcommand{\QMA}{\class{QMA}}
\newcommand{\coQMA}{\class{co-QMA}}
\newcommand{\UQMA}{\class{UQMA}}

\newcommand{\BQP}{\class{BQP}}

\newcommand{\PQMA}{\class{P}^{\class{QMA}[\class{log}]}}
\newcommand{\PNP}{\class{P}^{\class{NP}[\class{log}]}}
\newcommand{\PUQMA}{\class{P}^{\class{UQMA}[\class{log}]}}
\newcommand{\psihist}{\psi_{\rm hist}}
\mathchardef\mhyphen="2D


\begin{document}

\title{The complexity of simulating local measurements on quantum systems}

\author[1,3]{Sevag Gharibian}
\email{sgharibian@upb.de}
\homepage{http://groups.uni-paderborn.de/fg-qi/people.html}
\orcid{0000-0002-9992-3379}

\author[2,3]{Justin Yirka}
\email{yirka@utexas.edu}
\homepage{https://www.justinyirka.com/}
\orcid{0000-0001-6173-2465}

\affil[1]{Department of Computer Science, Paderborn University, Germany}
\affil[2]{Department of Computer Science, The University of Texas at Austin, USA}
\affil[3]{Department of Computer Science, Virginia Commonwealth University, USA}

\date{} 

\maketitle

\begin{abstract}
An important task in quantum physics is the estimation of local quantities for ground states of local Hamiltonians. Recently, [Ambainis, CCC 2014] defined the complexity class $\PQMA$, and motivated its study by showing that the physical task of estimating the expectation value of a local observable against the ground state of a local Hamiltonian is $\PQMA$-complete.
In this paper, we continue the study of $\PQMA$, obtaining the following lower and upper bounds.\vspace{2mm}

\noindent Lower bounds (hardness results):
    \begin{itemize}
        \item The $\PQMA$-completeness result of [Ambainis, CCC 2014] requires $O(\log n)$-local observables and Hamiltonians. We show that simulating even a \emph{single qubit} measurement on ground states of $5$-local Hamiltonians is $\PQMA$-complete, resolving an open question of Ambainis.
    \item We formalize the complexity theoretic study of estimating two-point correlation functions against ground states, and show that this task is similarly $\PQMA$-complete.
    \item We identify a flaw in [Ambainis, CCC 2014] regarding a $\PUQMA$-hardness proof for estimating spectral gaps of local Hamiltonians. By introducing a ``query validation'' technique, we build on [Ambainis, CCC 2014] to obtain $\PUQMA$-hardness for estimating spectral gaps under polynomial-time Turing reductions.
    \end{itemize}
\noindent Upper bounds (containment in complexity classes):
    \begin{itemize}
        \item $\PQMA$ is thought of as ``slightly harder'' than QMA. We justify this formally by exploiting the hierarchical voting technique of [Beigel, Hemachandra, Wechsung, SCT 1989] to show $\PQMA\subseteq \class{PP}$. This improves the containment $\class{QMA}\subseteq \class{PP}$ [Kitaev, Watrous, STOC 2000].
    \end{itemize}
    This work contributes a rigorous treatment of the subtlety involved in studying oracle classes in which the oracle solves a \emph{promise} problem. This is particularly relevant for {quantum} complexity theory, where most natural classes such as BQP and QMA are defined as promise classes.
\end{abstract}

\tableofcontents

\section{Introduction}\label{scn:intro}

\subsection{Background on Quantum Hamiltonian Complexity}\label{sscn:background}
The use of computational complexity theory to study the inherent difficulty of computational problems has proven remarkably fruitful over the last decades. For example, the theory of NP-completeness~\cite{C72,L73,K72} has helped classify the worst-case complexity of hundreds of computational problems which elude efficient classical algorithms. In the quantum setting, the study of a quantum analogue of NP, known as Quantum Merlin Arthur\footnote{More precisely, QMA is Merlin-Arthur (MA) with a quantum proof and quantum verifier.} (QMA), was started in 1999 by the seminal ``quantum Cook-Levin theorem'' of Kitaev~\cite{KSV02}, which showed that estimating the ground state energy of a given $k$-local Hamiltonian is QMA-complete for $k\geq 5$. Here, a $k$-local Hamiltonian $H$ can be thought of as a quantum constraint satisfaction system in which each quantum clause acts non-trivially on $k$ qubits. More formally, $H\in\complex^{2^n\times 2^n}$ is an exponentially large Hermitian matrix acting on $n$ qubits, but with a succinct description\footnote{Implicitly, if $H_i$ acts on a subset $S_i\subseteq[n]$ of qubits non-trivially, then more accurately one writes $H_i\otimes I_{[n]\setminus S_i}$. We write $H=\sum_i H_i$ for simplicity.} $H=\sum_i H_i$, where each local clause $H_i\in\complex^{2^k\times 2^k}$ acts non-trivially on $k$ qubits. The ``largest total weight of satisfiable clauses'' is given by the \emph{ground state energy} of $H$, i.e. the smallest eigenvalue of $H$. Physically, the ground state energy and its corresponding eigenvector, the \emph{ground state}, are motivated in that they represent the energy level and state of a given quantum system at low temperature, respectively. For this reason, since Kitaev's work~\cite{KSV02}, a number of physically motivated problems have been shown complete for QMA (this has given rise to the field of \emph{Quantum Hamiltonian Complexity}, see, e.g.,~\cite{O11},~\cite{Boo14} and~\cite{GHLS15} for surveys\footnote{Since these surveys were published, the study of ground spaces through the field of {Quantum Hamiltonian Complexity} has continued to grow. For example, recent variants on circuit-to-Hamiltonian mappings such as the space-time construction of Breuckmann and Terhal~\cite{BT14}, of Bausch and Crosson~\cite{BC18} for considering complex weights and branching transitions, of Bausch, Cubitt and Ozols~\cite{BCO17} for embedding $\class{QMA}_{\class{exp}}$ computations into 1D translation invariant systems of local dimension $42$, or of Caha, Landau, and Nagaj~\cite{CLN18} for achieving an arbitrarily high success probability of extracting a computation result from a history state while only needing to increase the clock size logarithmically, have been given. Bravyi and Hastings~\cite{BH17} have shown that the local Hamiltonian problem for the quantum Ising model is StoqMA-complete, completing the complexity classification scheme of Cubitt and Montanaro~\cite{CM16}. Gily\'{e}n and Sattath~\cite{GS17} have given a constructive Quantum Lovasz Local Lemma which efficiently prepares a frustration-free Hamiltonian's ground state under the assumption that the system is ``uniformly'' gapped. (This list of results is a small sample of recent work.)}). Many of these QMA-complete problems focus on estimating ground state energies of local Hamiltonians.

\mypar{Beyond ground state energies} In recent years, however, new directions in quantum complexity theory involving other physical properties of local Hamiltonians have appeared. We attempt to survey a number of such results here.

Brown, Flammia and Schuch~\cite{BFS11} (also Shi and Zhang~\cite{SZ}) introduced a quantum analogue of $\class{\#P}$, denoted $\class{\#BQP}$, and showed that computing the ground state degeneracy or density of states of local Hamiltonians is $\class{\#BQP}$-complete. Gharibian and Sikora~\cite{GS18} showed that determining whether the ground space of a local Hamiltonian has an ``energy barrier'' is QCMA-complete, where QCMA~\cite{AN02} is Merlin-Arthur (MA) with a classical proof and quantum prover. This was strengthened by Gosset, Mehta, and Vidick~\cite{GMV17}, who showed that QCMA-completeness holds even for \emph{commuting} local Hamiltonians. Bravyi and Gosset~\cite{BG17} studied the complexity of quantum impurity problems, which involve a bath of free fermions coupled to an interacting subsystem dubbed an ``impurity''. Cubitt, Montanaro, and Piddock~\cite{CMP18} have shown that certain simple spin-lattice models are ``universal'', in the sense that they can replicate the entire physics of any other quantum many-body system.

From a hardness of approximation perspective, Gharibian and Kempe~\cite{GK12} introduced cq-${\rm\Sigma_2}$, a quantum generalization of $\Sigma_2^p$, and showed that determining the smallest subset of interaction terms of a given local Hamiltonian which yields a frustrated ground space is cq-${\rm\Sigma_2}$-complete (and additionally, cq-${\rm\Sigma_2}$-hard to approximate). Aharonov and Zhou~\cite{AZ18} studied the task of ``Hamiltonian sparsification'' or ``degree-reduction'', in which one attempts to simulate an input local Hamiltonian with a new local Hamiltonian with an interaction graph of bounded degree while preserving only the ground space and spectral gap (in general,~\cite{AZ18} show this is impossible). This was in pursuit of answering whether classical proof techniques for the classical PCP theorem carry over to the quantum setting.

Finally, various authors have studied spectral gaps of local Hamiltonians from a \emph{computability theory} perspective. (In computability theory, one asks whether a decision or promise problem can be decided by a Turing machine running in a finite number of steps. In the current paper, our focus is instead on complexity theory, in which problems are typically known to be computable; the question is rather to obtain an estimate of the resources the Turing machine requires to solve the problem.) Gosset and Mozgunov~\cite{GM16} and Bravyi and Gosset~\cite{BG15} have studied spectral gaps for frustration-free 1D translation-invariant systems (the latter, in particular, shows that distinguishing between gapped and gapless phases is decidable in the spin-1/2 chains studied). Cubitt, Perez-Garcia and Wolf~\cite{CPW15} have shown undecidability of estimating spectral gaps in the thermodynamic limit for translation-invariant, nearest-neighbor Hamiltonians on a 2D square lattice. This has very recently been improved to undecidability for 1D translation invariant systems by Bausch, Cubitt, Lucia, and Perez-Garcia~\cite{BCLP18}. Note that both the current paper and~\cite{A14} also study spectral gaps, but from a complexity theory perspective; in particular, in contrast to the undecidability studies listed above, which consider the thermodynamic limit (i.e. the number of qubits $n$ goes to infinity), in our setting the number of qubits $n$ (expressed in unary) is part of the input to the problem, as is standard in complexity theory.

\subsection{Simulating local measurements on low-temperature quantum systems}
In Section~\ref{sscn:background}, we listed a number of results studying properties of local Hamiltonians beyond ground state energies. We intentionally omitted one particular problem, however, which is extremely well motivated physically and which is the starting point of this work. Suppose one cools a many-body quantum system to low temperature in the lab and \emph{a priori} does not know the system's properties (such as its ground state energy, ground space, etc); this is a natural assumption, as many of these properties are $\QMA$-complete to estimate to begin with. The most ``basic'' action an experimenter can now take is to perform a local measurement (typically on a constant number of qubits) in an attempt to extract information about the system. The question is: How difficult is it to computationally simulate this ``basic'' action?

This computational task was formalized by Ambainis~\cite{A14} (formal definitions in Section~\ref{scn:preliminaries}) and designated \emph{Approximate Simulation (APX-SIM)}: \emph{Given a $k$-local Hamiltonian $H$ and an $l$-local observable $A$, estimate the expectation value of the measurement $A$ against the ground state of $H$, i.e. estimate $\langle A\rangle:=\bra{\psi}A\ket{\psi}$ for $\ket{\psi}$ a ground state of $H$}. It turned out that not only is this problem ``hard'', but that it is in fact harder then even QMA.

To formalize this, Ambainis introduced~\cite{A14} the complexity class $\PQMA$, which intuitively is ``slightly harder'' than QMA, and showed that \app\ is $\PQMA$-complete when the Hamiltonian $H$ and observable $A$ are both $O(\log n)$-local. To help set context before stating our results, we now discuss $\PQMA$ and its classical analogue $\PNP$ in further depth.

\subsection{Oracle complexity classes, $\PQMA$, and $\PNP$}\label{sscn:oracles}

Formally, $\PQMA$ is the class of decision problems which can be decided by a polynomial-time deterministic Turing machine making $O(\log n)$ queries to an oracle for QMA. Thus, it is an example of an \emph{oracle complexity class}.

Let us make three remarks. First, by ``oracle for QMA'', one typically means an oracle for a QMA-complete problem $\Pi$ (since any other problem in QMA may be reduced in polynomial-time to $\Pi$); in this paper, we shall set $\Pi$ as the QMA-complete $2$-local Hamiltonian problem ($2$-LH)~\cite{KKR06}, an instance $(H,a,b)$ of which asks: \emph{Is the ground state energy of $H$ at most $a$ (YES case), or at least $b$ (NO case), for $b-a\geq1/\poly(n)$?} Second, although $\PQMA$ uses a QMA oracle, it in fact also contains the complementary class co-QMA. This is because to solve any co-QMA problem, a $\PQMA$ machine can plug the co-QMA problem instance into the QMA oracle, and subsequently flip the oracle's answer in polynomial-time to solve the co-QMA problem. Thus, $\PQMA\neq\QMA$ unless $\coQMA\subseteq\QMA$ (which is unlikely), and so $\PQMA$ is likely strictly harder than QMA. The third remark is that $\PQMA$ uses an oracle for a class of \emph{promise} problems; this is a subtle but crucial point, which we discuss shortly; first, let us set the stage by reviewing the analogous classical class $\PNP$.

\mypar{The class $\PNP$} $\PNP$ is defined analogously to $\PQMA$ except it utilizes an NP oracle. Analogous to $\PQMA$, $\PNP$ contains both NP and co-NP and thus is likely strictly harder than NP. As an upper bound, $\PNP\subseteq \class{NP}^{\class{NP}}=\Sigma_2^p$, where an $\class{NP}^{\class{NP}}$ machine is an NP machine which nondeterministically makes up to a polynomial number of calls to an NP oracle, and $\Sigma_2^p$ is the second level of the Polynomial-Time Hierarchy (PH). In contrast, it is unlikely for $\PQMA$ to be in PH, as even $\BQP\subseteq\QMA\subseteq\PQMA$ is generally not believed to be in PH \cite{Aa10,FSUV12,C16,R16,RT18}. A natural question is whether $\PQMA$ might instead be contained in an appropriate quantum analogue of PH. The answer is not clear, since unlike the fact that $\Sigma_2^p=\class{NP}^{\class{NP}}$, it is not clear that ``quantum $\Sigma_2^p$'' should equal (say) $\class{QMA}^{\class{QMA}}$; see~\cite{GSSSY18} for a discussion and treatment of quantum analogues of PH. Additional references on ``quantum PH'' are~\cite{Y02,GK12,LG17}.

As far as we are aware, whether $\class{P}^{\class{NP}[k]}$ (i.e. $k\in O(1)$ NP queries), $\class{P}^{\class{NP}[\log^k n]}$ (i.e. $O(\log^k n)$ NP queries for $k\geq 1$), and $\class{P}^{\class{NP}}$ (i.e. polynomially many NP queries) coincide remains open. Notably, if $\class{P}^{\class{NP}[1]}=\class{P}^{\class{NP}[2]}$, then $\class{P}^{\class{NP}[1]}=\PNP$ and PH collapses~\cite{H87}. But, it \emph{is} known is that if the queries to the NP oracle are \emph{non-adaptive}, meaning they are all made in parallel, then the resulting class $\class{P}^{||\class{NP}}$ equals $\PNP$~\cite{BH91,H89}. (The analogous statement $\class{P}^{||\class{QMA}}=\PQMA$ has recently been shown~\cite{GPY19}.) Similarly, a P machine making $O(\log^k n)$ adaptive NP queries is equivalent in power to a P machine making $O(\log^{k+1} n)$ non-adaptive queries for all $k\geq 1$~\cite{CS05}. In terms of complete problems, determining the election winner in Lewis Carroll's 1876 voting system is $\PNP$-complete~\cite{HHR97}, and model checking for certain branching-time temporal logics is $\class{P}^{\class{NP}[\log^2 n]}$-complete~\cite{S03}.

Finally, it is important to note that NP is a class of \emph{decision} problems. Formally, this means any NP problem is specified by a language $L\subseteq\set{0,1}^*$ with the corresponding decision problem: Given input $x\in\set{0,1}^*$, is $x\in L$? In the context of $\PNP$, this means \emph{any} string $x\in\set{0,1}^*$ is a valid problem instance or \emph{query} to an  NP oracle for language $L$, since any such $x$ is either in the language or not. Unfortunately, an analogous statement cannot be made about $\PQMA$, complicating its study; this brings us to our third, crucial remark about $\PQMA$ from earlier.

\mypar{Oracles for promise classes and the issue of invalid queries} In contrast to NP, QMA is a class of \emph{promise} problems. Formally, this means the space of all inputs $\set{0,1}^*$ is partitioned into \emph{three} sets, $A,B,C$, where $A$ and $C$ are YES and NO instances, respectively, and $B$ is the set of ``invalid'' instances. (Classes of decision problems are a special case of this in which $A=L$, $B=\emptyset$, and $C=\set{0,1}^*\setminus L$.) Why might this pose a problem for studying $\PQMA$?

Recall that we assume all calls by the $\PQMA$ machine to the QMA oracle $Q$ are for instances $(H,a,b)$ of $2$-LH: \emph{Is the ground state energy of $H$ at most $a$ (YES case), or at least $b$ (NO case), for $b-a\geq1/\poly(n)$}? Unfortunately, a P machine cannot in general tell\footnote{By definition, the P machine can only execute polynomial-time computations. Thus, even if we assume without loss of generality that the P machine uses specific circuit-to-Hamiltonian constructions (for preparing queries to the QMA oracle) whose promise gaps are precisely known (e.g.~\cite{BC18,CLN18}), it is not clear how the machine should know whether the quantum circuit it feeds into said construction satisfies a promise gap to begin with --- where such circuit-to-Hamiltonian constructions require the input circuit to satisfy an inverse polynomial promise gap, which is unlikely to be relaxed to more easily verified, say, inverse exponentially small promise gaps, since it would imply $\class{PSPACE}\subseteq\class{QMA}$, which follows since QMA with exponentially small gap equals PSPACE~\cite{FL18}.} whether the instance $(H,a,b)$ it feeds to $Q$ satisfies the promise conditions of LH; in particular, the ground state energy may lie in the interval $(a,b)$. We call any such instance or query $(H,a,b)$ violating the $2$-LH promise ``invalid'' (these instances belong in the set $B$ of instances above). For any invalid query, the oracle $Q$ is allowed to accept or reject arbitrarily.  This raises a potential issue: if the oracle is allowed to respond arbitrarily to invalid queries, how does one ensure a YES instance (or NO instance) of a $\PQMA$ problem is well-defined (normally, the P machine conditions its future actions on the response the oracle returns for each query)? To do so, we stipulate (see, e.g., Definition 3 of Goldreich~\cite{G06}) that the P machine must output the \emph{same} answer regardless of how any invalid queries are answered by the oracle. We view the issue of formally handling invalid queries as one of the central themes and contributions of this work.

\subsection{Our results}\label{sscn:results}
Our results fall into two categories: \emph{Lower bounds} (hardness results) and \emph{upper bounds} (containment in complexity classes).

\paragraph{Lower bounds (hardness results).} We begin by showing three hardness results; two focus on computational problems introduced in~\cite{A14} (\app\ and \spgap) and one introduces a new problem (\tpc).\\

\noindent \emph{1. $\PQMA$-completeness of \app\ for $O(1)$-local Hamiltonians and single-qubit observables.} Recall that in~\cite{A14}, Ambainis introduced $\PQMA$ and showed that \app\ (i.e. given $k$-local Hamiltonian $H$ and $l$-local observable $A$, simulate measurement $A$ on the ground state of $H$) is $\PQMA$-complete. This proof required both the Hamiltonian $H$ and observable $A$ to be $O(\log n)$-local. From a physical standpoint, however, it is typically desirable to have $O(1)$-local Hamiltonians and observables --- whether $\PQMA$-hardness holds in this regime was left as an open question in~\cite{A14}. We thus first ask: \emph{Is \app\ still hard for $O(1)$-local Hamiltonians and $1$-local observables?}

Let us develop some intuition before answering this question. Typically, computational problems (such as estimating ground state energies) of $1$-local \emph{Hamiltonians} are easy, since the qubits do not interact. This intuition does {not}, however, carry over to the setting of simulating $1$-local \emph{measurements}. For example, by embedding a $3$-SAT instance $\phi$ into a $3$-local Hamiltonian and then using the ability to repeatedly measure observable $Z$ against single qubits of the ground state, we can extract a solution to $\phi$! Thus, the $3$-local Hamiltonian and $1$-local observable case is at least NP-hard. Indeed, here we show it is much harder, resolving Ambainis's open question.

\begin{theorem}\label{thm:main1}
    Given a $5$-local Hamiltonian $H$ on $n$ qubits and a $1$-local observable $A$, estimating $\langle A\rangle $ (i.e. \app) is $\PQMA$-complete.
\end{theorem}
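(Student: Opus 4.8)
The plan is to prove Theorem~\ref{thm:main1} in two parts: containment in $\PQMA$, and $\PQMA$-hardness. Containment is the easy direction and essentially follows from \cite{A14}: deciding whether $\langle A\rangle$ is large or small reduces to a constant number of ground-state energy estimation queries (one computes the minimum energy of $H$, then the minimum energy of a penalized Hamiltonian $H + \delta A$ or similar, and compares), each of which is a QMA query; binary search over a polynomially-bounded range costs only $O(\log n)$ queries. So the crux is hardness, and I would focus the argument there.

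For hardness, I would start from Ambainis's $\PQMA$-hardness proof for \app\ with $O(\log n)$-local Hamiltonians and $O(\log n)$-local observables, and ``compile it down'' to the $5$-local, $1$-local regime. The natural route is a two-stage reduction. First, reduce the locality of the \emph{observable}: replace an $O(\log n)$-local observable $A$ acting on a block of qubits by a $1$-local observable (say $Z$) on a single fresh ancilla qubit, together with extra Hamiltonian terms that force this ancilla, in the ground state, to encode the value of the original measurement. Concretely, one can introduce a qubit whose computational-basis value is tied by penalty terms to the outcome of a verification circuit that computes $A$'s measurement statistics on the relevant block; the energy penalties ensure that any low-energy state has the ancilla correctly correlated with $\langle A\rangle$, so measuring $Z$ on the ancilla simulates measuring $A$. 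Second, reduce the locality of the \emph{Hamiltonian} from $O(\log n)$ to $5$: this is exactly the setting where one invokes a standard circuit-to-Hamiltonian / perturbation-gadget toolkit (Kitaev's $5$-local construction, or gadget reductions of Kempe--Kitaev--Regev and Oliveira--Terhal). The subtlety is that these locality-reduction gadgets must preserve not just the ground \emph{energy} but the structure of the \emph{ground space} near the designated ancilla qubit — so that $\langle Z\rangle$ on that ancilla is unchanged (up to a small additive error) after the reduction. I would use a gadget whose low-energy space is spanned by states of the form (encoded original ground state) $\otimes$ (fixed ancilla configuration), with the measured qubit left untouched by the gadget, so that the expectation value is manifestly preserved.

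The main obstacle I anticipate is precisely this interplay between \emph{promise-problem oracles} and \emph{approximation error accumulation}, which the abstract flags as a central theme. In a $\PQMA$ computation, the queries the base machine makes need not be ``valid'' QMA instances — they can fall in the promise gap — and Ambainis's construction arranges the final Hamiltonian so that its ground state encodes a history over all the (possibly invalid) queries. When I push the locality down with gadgets, each gadget introduces a perturbative error $\varepsilon$, and I must ensure that after composing the observable-localization step and the Hamiltonian-localization step, the total error on $\langle A\rangle$ stays below the promise gap of \app. This forces a careful choice of gadget strengths (polynomially large penalties) and a careful tracking of how errors on the ``invalid query'' branches propagate; getting the quantitative bookkeeping right, while keeping all penalty coefficients at most polynomial so the instance is efficiently constructible, is the real work. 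A secondary obstacle is verifying that the $1$-local observable genuinely suffices — i.e. that no ``hidden'' $O(1)$-locality creeps back in through the ancilla-coupling terms — which I would handle by keeping the measured qubit coupled to the rest of the Hamiltonian only through the $5$-local penalty terms, never through the observable itself.
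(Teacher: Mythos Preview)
Your plan diverges substantially from the paper's, and the route you sketch has a real obstacle. You propose to start from Ambainis's $O(\log n)$-local instance and push locality down with perturbation gadgets. But the standard gadget reductions (Kempe--Kitaev--Regev, Oliveira--Terhal) shave off a \emph{constant} amount of locality per application; taking an $O(\log n)$-local Hamiltonian down to $5$-local would require $\omega(1)$ rounds, and each round multiplies the interaction strengths by a polynomial. Keeping the final norms polynomial after super-constantly many rounds is not automatic, and no off-the-shelf theorem does this. On top of that, gadgets are engineered to approximate the low-energy \emph{spectrum}; you correctly flag that you also need the ground \emph{state} near the measured qubit to be preserved so that $\langle A\rangle$ survives, but you give no mechanism for this beyond ``keep the measured qubit out of the gadgets,'' which is hard to arrange when the gadgets are rewriting every high-locality term.

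The paper sidesteps all of this by never building an $O(\log n)$-local object in the first place. It applies Kitaev's circuit-to-Hamiltonian construction directly to the \emph{classical} $\class{P}$ circuit $U$ (treating the query-answer register $Q$ as a read-only proof), yielding a $5$-local $H_1=\Delta(H_{\rm in}+H_{\rm prop}+H_{\rm stab})$ with $H_{\rm out}$ deliberately omitted; the observable is then simply $A=\ketbra{0}{0}$ on $U$'s output wire, which is $1$-local by construction. To force $Q$ to hold a correct query string, it adds Ambainis's query Hamiltonian $H_2$, made $4$-local by encoding $Q$ in \emph{unary} (Kitaev's clock trick, Lemma~\ref{l:amb}) rather than by gadgets. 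The soundness analysis then uses a corollary of the Projection Lemma (Corollary~\ref{cor:kkr}) to show any $\delta$-approximate ground state of $H_1+H_2$ is close to a genuine history state, together with Lemma~\ref{l:amborig} (which handles invalid queries) to show that history state's $Q$ register is essentially a correct query string. No gadgets, no iterated error bookkeeping --- the unary encoding is the one idea that replaces your entire Stage~2.
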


\noindent Thus, measuring just a \emph{single} qubit of the ground state of a local Hamiltonian $H$ with a single-qubit observable $A$ (in fact, $A$ is fixed, independent of $H$ in our construction) is harder than QMA (assuming $\class{QMA}\neq \PQMA$, which is likely as otherwise $\class{co-QMA}\subseteq\class{QMA}$).\\

\noindent \emph{2. $\PQMA$-completeness of \tpc\ for $O(1)$-local Hamiltonians and single-qubit observables.} We next introduce a second natural problem related to \app, denoted \tpc. \tpc\ is defined similarly to $\app$ except one is given Hamiltonian $H$ and observables $A$ and $B$ and asked to estimate the \emph{two-point correlation function} $\langle A\otimes B\rangle -\langle A \rangle\langle B\rangle$ (recall $\langle A\rangle:=\bra{\psi}A\ket{\psi}$ for $\ket{\psi}$ a ground state of $H$). By modifying the construction behind the proof of Theorem~\ref{thm:main1}, we also show \tpc\ is $\PQMA$-complete.

\begin{theorem}\label{thm:main2}
    Given a $5$-local Hamiltonian $H$ on $n$ qubits and a pair of $1$-local observables $A$ and $B$, estimating $\langle A\otimes B\rangle -\langle A \rangle\langle B\rangle$ (i.e. \tpc) is $\PQMA$-complete.
\end{theorem}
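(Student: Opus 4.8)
The plan is to prove the two directions of $\PQMA$-completeness separately, in both cases building on the $\app$ machinery, and for hardness on the reduction behind Theorem~\ref{thm:main1}.

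\emph{Membership.} To see $\tpc\in\PQMA$: on a ground state $\ket{\psi}$ of $H$ each of $\langle A\otimes B\rangle$, $\langle A\rangle$, $\langle B\rangle$ is an expectation value of precisely the kind estimated in $\app$ (with $A\otimes B$ a $2$-local observable when $A,B$ act on distinct qubits). Since $\app\in\PQMA$, one polynomial-time machine can run the three estimation subroutines --- each costing only $O(\log n)$ adaptive QMA queries --- collect the estimates, and compare $\langle A\otimes B\rangle-\langle A\rangle\langle B\rangle$ against the thresholds, all within $O(\log n)$ queries total. The one subtlety, of the kind this paper emphasizes, is that the three numbers be well defined on a common ground state: under the natural promise of an effectively unique low-energy subspace this is immediate, while for an arbitrarily degenerate ground space one additionally binary-searches over candidate values of $\langle A\rangle$ and $\langle B\rangle$ (the bilinear term $\langle A\rangle\langle B\rangle$ not being a linear functional of the witness), still with $O(\log n)$ queries in total. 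This follows Ambainis's argument that $\app\in\PQMA$.

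\emph{Hardness.} For $\PQMA$-hardness I would reduce from $\app$ with a $5$-local Hamiltonian and a $1$-local observable, which is $\PQMA$-hard by Theorem~\ref{thm:main1}, and in fact re-run that theorem's reduction with a single modification. Recall it produces a verification circuit $C$ that writes the $\PQMA$-decision bit $b$ into a designated answer qubit $a^{\ast}$, together with a padded Kitaev-style clock Hamiltonian all of whose low-energy states have reduced state on $a^{\ast}$ within $1/\poly$ of $\ketbra{b}{b}$. Before forming the clock Hamiltonian I would replace $C$ by the circuit $C'$ that runs $C$, then Hadamards a fresh ancilla $q^{\ast\ast}$ initialized to $\ket{0}$, then applies a CNOT with control $q^{\ast\ast}$ and target $a^{\ast}$, then pads with $\poly$ identity gates; running the reduction of Theorem~\ref{thm:main1} on $C'$ gives a $5$-local Hamiltonian $H'$ (its analysis uses only properties of the circuit and clock shared by $C'$ and $C$). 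Every low-energy state of $H'$ then has reduced state on $(q^{\ast\ast},a^{\ast})$ within $1/\poly$ in trace distance of the Bell-type state $\tfrac1{\sqrt2}\bigl(\ket{0}\ket{b}+\ket{1}\ket{b\oplus 1}\bigr)$, on which each single-qubit $Z$ has expectation $0$ whereas $\langle Z\otimes Z\rangle=(-1)^{b}$. Hence, taking $A=B=Z$ (fixed, as in Theorem~\ref{thm:main1}), the correlation $\langle A\otimes B\rangle-\langle A\rangle\langle B\rangle$ is within $1/\poly$ of $+1$ on every low-energy state when $b=0$, and within $1/\poly$ of $-1$ when $b=1$: a valid $\tpc$ instance whose promise gap approaches $2$, and hence a correct many-one reduction from $\app$ (or, if the threshold convention is reversed, from its complement --- equivalent since $\PQMA$ is closed under complement).

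The genuinely new ingredient is the small ``correlation-from-expectation'' gadget above. Because a $\PQMA$ machine's output is \emph{deterministic} once its queries are answered, the answer qubit $a^{\ast}$ of the Theorem~\ref{thm:main1} construction carries essentially no correlation with the rest of the ground state, so one cannot simply measure two of its qubits; the Hadamard-and-CNOT pair manufactures entanglement whose presence and sign track $b$. The step I expect to need the most care is checking that the \emph{robustness} half of Theorem~\ref{thm:main1}'s analysis --- that \emph{all} low-energy states, not just the exact history state, carry the claimed reduced state on the answer qubit --- transfers to the reduced state on the pair $(q^{\ast\ast},a^{\ast})$, so that a NO instance of $\app$ really does force correlation close to $-1$ on every low-energy state of $H'$; together with, on the membership side, the promise-oracle subtleties (invalid queries, effective uniqueness of the ground space, non-linearity of $\langle A\rangle\langle B\rangle$) that recur throughout this paper.
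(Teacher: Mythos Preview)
Your hardness argument is essentially the paper's: both modify the circuit from Theorem~\ref{thm:main1} by appending a small gadget that turns the classical answer bit into two-qubit entanglement detectable by $Z\otimes Z$. The paper uses two fresh ancillas $W_2,W_3$ and, controlled on the output qubit $W_1$, maps $\ket{00}_{W_2W_3}\mapsto\ket{\phi^+}$, so that the correlation is (roughly) $1$ in the YES case and $0$ in the NO case; you instead use one fresh ancilla plus the output qubit and a Hadamard--CNOT pair, getting correlation $\pm 1$. Both work, and your version with padding arguably gives a cleaner gap. The robustness step you flag is handled in the paper exactly as in Theorem~\ref{thm:main1} (Corollary~\ref{cor:kkr} plus the block structure of $H_2$), and carries over without change.

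Your membership argument, however, has a real gap. Running the $\app$ protocol three separate times for $\langle A\otimes B\rangle$, $\langle A\rangle$, $\langle B\rangle$ is precisely the approach the paper rules out: Ambainis's verifier uses phase estimation on many proof copies to ``snap'' each into some low-energy state $\ket{\psi_i}$, and when the ground space is degenerate these need not coincide. For $\app$ this is harmless because the NO promise bounds \emph{every} low-energy state, but the correlation function is nonlinear in $\rho$, so estimates of the three terms on three different ground states need not combine into a valid bound on $f(\ket{\psi},A,B)$ for any single $\ket{\psi}$. Your proposed fix---``binary-search over candidate values of $\langle A\rangle$ and $\langle B\rangle$''---does not obviously repair this within $O(\log n)$ queries: a grid over the pair $(\langle A\rangle,\langle B\rangle)$ is polynomially large, and even for a fixed pair $(\alpha,\beta)$ you still need a QMA verifier that certifies a \emph{single} state simultaneously has $\langle A\rangle\approx\alpha$, $\langle B\rangle\approx\beta$, and the right $\langle A\otimes B\rangle$, which is the same consistency problem again. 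The paper's fix is different and does not go through $\app$ at all: the prover sends \emph{classical} descriptions of the $k$-qubit reduced states $\{\rho_S\}$ together with a QMA proof for the Consistency problem~\cite{L06} (the Chailloux--Sattath trick~\cite{CS11}); the verifier checks consistency and then computes both the energy and $f$ classically from the $\{\rho_S\}$, so all three quantities are by construction evaluated on the same state.
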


\noindent\emph{3. $\PUQMA$-hardness of estimating spectral gaps.} A third well-motivated problem involving Hamiltonians is \spgap~\cite{A14}: \emph{Given a $k$-local Hamiltonian $H$, estimate $\lambda_1(H)-\lambda_2(H)$, where $\lambda_i(H)$ denotes the $i$-th smallest eigenvalue of $H$}. (For clarity, if the ground space of $H$ is degenerate, we define its spectral gap as $0$.) In~\cite{A14}, it is shown that $\spgap\in\PQMA$, and a claimed proof is given that \spgap\ for $O(\log n)$-local Hamiltonians $H$ is $\PUQMA$-hard. Here, $\PUQMA$ is $\PQMA$ except with a Unique QMA oracle, and Unique QMA is roughly QMA with a unique accepting quantum witness in the YES case (see Section~\ref{scn:spectralGap} for formal definitions).

Recall now that, as discussed in Section~\ref{sscn:oracles}, a central theme of this work is the subtlety involved in the study of oracle classes in which the oracle solves a \emph{promise} problem (such as $\PQMA$), as opposed to a decision problem (such as $\PNP$). In particular, for $2$-LH queries to the QMA oracle which violate the promise gap for $2$-LH, the oracle is allowed to give an arbitrary answer. We observe that this point appears to have been missed in~\cite{A14}'s claimed proofs of $\PQMA$-hardness of $\app$ and of $\PUQMA$-hardness of \spgap, rendering the proofs incorrect. Though, as shown by our first result, we are able to correct and improve on the proof that $\app$ is $\PQMA$-hard.

In our last result, we also overcome this difficulty to recover $\PUQMA$-hardness of \spgap, with the tradeoff that we obtain a ``slightly weaker'' hardness claim, in the sense that it follows from a Turing reduction as opposed to a Karp or mapping reduction\footnote{A Karp, many-one, or mapping reduction from a decision problem $A$ to a decision problem $B$ maps any input $\Pi_A$ for $A$ into an input $\Pi_B$ for $B$ such that $\Pi_A$ is a YES (NO) instance of $A$ if and only if $\Pi_B$ is a YES (NO) instance for $B$. A Turing reduction generalizes the notion of a Karp reduction; here, one is given access to an oracle solving $B$, and the goal is to solve $\Pi_A$ using potentially multiple calls to the oracle. In this paper, both notions of reduction are assumed to be deterministic polynomial-time.} (\cite{A14} claimed hardness under the latter). In the process, we also improve the locality of $H$ to $O(1)$.

\begin{theorem}\label{thm:spgap}
	Given a $4$-local Hamiltonian $H$, estimating its spectral gap (i.e. $\spgap$) is $\PUQMA$-hard under polynomial-time Turing reductions.
\end{theorem}
\noindent Thus, the ability to solve $\PUQMA$ problems in polynomial time would imply a polynomial-time algorithm for \spgap. Note that whether $\UQMA=\QMA$ remains open. For the cases of NP~\cite{VV86} or MA and QCMA~\cite{ABBS08}, it is known that Unique NP, Unique MA, and Unique QCMA reduce to NP, MA, and QCMA, respectively, under randomized reductions.

\paragraph{Upper bounds (containment in complexity classes).} As mentioned earlier, since both \QMA\ and \coQMA\ are contained in $\PQMA$, it is likely that $\PQMA$ is strictly more powerful than both \QMA\ and \coQMA. This raises the question: \emph{What is an upper bound on the power of $\PQMA$?} Two points of reference are worth mentioning here. First, the classical analogue of $\PQMA$, $\PNP$, is known to be upper bounded by PP~\cite{BHW89}. Here, PP is the set of promise problems solvable in probabilistic polynomial time with \emph{unbounded} error; in other words, a PP machine accepts YES instances with probability strictly larger than $1/2$ and accepts NO instances with probability less than or equal to $1/2$. The second point of reference is that $\class{QMA}$ is bounded by $\class{PP}$~\cite{KW00,MW05} (\cite{Vy03} actually shows the slightly stronger containment $\class{QMA}\subseteq \class{A}_0\class{PP}$). We thus ask whether, like its classical analogue, $\PQMA$ can bounded by PP, and so is ``slightly harder'' than $\QMA$. Indeed, we show this is the case.
\begin{theorem}\label{thm:inPP}
    $\PQMA\subseteq\class{PP}$.
\end{theorem}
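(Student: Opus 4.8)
The plan is to combine two ingredients: (i) the known containment $\class{QMA} \subseteq \class{PP}$ of Kitaev and Watrous~\cite{KW00}, strengthened so that a single PP machine can simulate the QMA oracle, and (ii) the hierarchical voting / "tree of queries" technique of Beigel, Hemachandra, and Wechsung~\cite{} used to show $\PNP \subseteq \class{PP}$, adapted to handle the fact that the oracle here solves a \emph{promise} problem. Recall that a $\PQMA$ computation makes $m = O(\log n)$ adaptive queries $q_1, \dots, q_m$ to a QMA oracle; since $m$ is logarithmic, there are only $\poly(n)$ possible query-answer transcripts (strings in $\{0,1\}^m$), and once the vector of oracle answers is fixed the polynomial-time base machine is deterministic. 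So it suffices for a PP machine to (a) "guess" the correct answer vector and (b) verify, with the right error bias, that the guessed vector is consistent with the QMA oracle on the induced query sequence.

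The first step is to upgrade $\class{QMA}\subseteq\class{PP}$ to a form usable as a subroutine: for a QMA verifier $V$ with completeness $1-2^{-r}$ and soundness $2^{-r}$ (amplified so the gap is exponentially small compared to the number of queries), there is a $\GapP$ function $f$ and a threshold such that $f(q) > 0$ when $q$ is a yes-instance and $f(q) < 0$ when $q$ is a no-instance; this is exactly what underlies $\class{QMA}\subseteq\class{PP}$, where the $\GapP$ function counts accepting minus rejecting paths of the amplified Kitaev--Watrous verification procedure (maximized, in effect, over Merlin via a suitable encoding). The second step is the hierarchical voting: I would have the PP machine nondeterministically guess the full answer vector $\ve{a}=(a_1,\dots,a_m)\in\{0,1\}^m$, run the deterministic base machine $M$ along the unique computation path that $\ve{a}$ induces to obtain the queries $q_1,\dots,q_m$ actually asked, and then, for each $i$, run an independent copy of the amplified QMA-verification $\GapP$ computation on $q_i$ and check it agrees with $a_i$. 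The overall $\GapP$ value is engineered (by the standard tensoring/arithmetic-of-$\GapP$ tricks, padding with dummy paths, and choosing the amplification parameter $r$ large enough relative to $m=O(\log n)$) so that exactly one guessed vector — the true one — contributes a dominant positive (resp.\ negative) amount matching $M$'s final accept/reject decision, while all incorrect guesses contribute a negligible net amount that cannot flip the sign of the sum.

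The main obstacle — and the reason the paper emphasizes the subtlety of promise-problem oracles — is that the QMA oracle's behavior is only \emph{defined} on valid (yes or no) instances: if some query $q_i$ lies in the promise gap, the $\GapP$ value $f(q_i)$ can be anything in between, so the "verify $a_i$ agrees with the oracle" step has no well-defined target. The fix is that $\PQMA$-computations are only required to be correct when \emph{every} query is valid, or (in the Turing-reduction formulation) the machine must be correct for \emph{some} valid assignment of oracle answers; so I would define the PP machine's acceptance by a weighted vote over all guessed vectors, where a vector contributes with a magnitude that decays geometrically with the number of queries $q_i$ on which the verification $\GapP$-value is close to the threshold (i.e.\ "ambiguous"). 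Concretely, for an in-gap query the contribution is dominated in absolute value by vectors whose answers come from genuinely valid queries, and one must check that the final sign still reflects a legal output of $M$. Making this weighting precise while keeping everything inside a single $\GapP$ function — in particular choosing the amplification exponent $r = \poly(n)$ so the "good" vector's contribution provably dominates the sum of all $2^m - 1 = \poly(n)$ competing contributions — is the delicate bookkeeping that the bulk of the proof will carry out; the rest is routine closure properties of $\GapP$ and the observation that $\class{PP} = \{\, L : \exists f\in\GapP,\ x\in L \iff f(x) > 0 \,\}$.
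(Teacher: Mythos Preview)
Your high-level plan matches the paper's: combine the maximally-mixed-proof trick underlying $\QMA\subseteq\class{PP}$ with the hierarchical voting of Beigel--Hemachandra--Wechsung, and then deal carefully with invalid queries. Two points, however, need correction.

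First, a definitional slip: a $\PQMA$ machine must output the \emph{same} answer for \emph{every} way the oracle resolves invalid queries, not merely for \emph{some} assignment. This matters, because it is exactly what guarantees that every ``correct'' query string $y$ (one that is right on all valid queries, arbitrary on invalid ones) leads $\Pi$ to the same output; the analysis below relies on this.

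Second, and more seriously, your fix for invalid queries---``a vector contributes with a magnitude that decays geometrically with the number of queries $q_i$ on which the verification $\GapP$-value is close to the threshold''---is not a well-defined $\GapP$ computation. A $\GapP$ function cannot branch on whether an intermediate $\GapP$ value is ``close to'' anything; you only get arithmetic closure (sums, products, constants). If instead you mean to use $f(q_i)$ itself as a multiplicative weight, note that an invalid query can have $f(q_i)$ of either sign or zero, so the product over queries need not be small for wrong vectors nor large for correct ones, and the sign of the total sum is uncontrolled. The paper sidesteps this entirely: it works directly in $\class{PQP}=\class{PP}$, runs the QMA verifier $V$ on $\rho=I/2^M$ for each query (so even an invalid query has a \emph{fixed} acceptance probability $p_{y,i}$), and then applies the BHW hierarchical voting over \emph{lexicographic order} of the resulting bit string $y$. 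The heart of the argument is then a purely probabilistic inductive lemma (Lemma~\ref{l:LB} in the paper): one iteratively selects correct query strings $y_1^*,y_2^*,\ldots\in A$ by maximizing a ``bounded divergence probability'' at each step, and shows that each $y_i^*$ individually outweighs the slab of incorrect strings lexicographically between $y_{i-1}^*$ and $y_i^*$. This is where the invalid-query difficulty is actually discharged, and nothing like it appears in your sketch; the ``geometric decay'' idea as stated does not substitute for it.
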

\noindent Thus, $\QMA\subseteq\PQMA\subseteq\class{PP}$, which rigorously justifies the intuition that $\PQMA$ should be thought of as ``slightly harder'' than $\QMA$.\\

\subsection{Proof techniques} We now outline our proof techniques for both our upper bound and lower bound results.

\paragraph{Lower bounds.} We begin with proof techniques techniques for our lower bounds involving \app\ (Theorem~\ref{thm:main1}), \tpc\ (Theorem~\ref{thm:main2}), and \spgap\ (Theorem~\ref{thm:spgap}).\\

\noindent \emph{1. $\PQMA$-hardness of $\app$.} To show Theorem~\ref{thm:main1}, our intuition is simple: To design our local Hamiltonian $H$ so that its ground state encodes a so-called history state\footnote{A \emph{history state} can be seen as a quantum analogue of the ``tableau'' which appears in the proof of the Cook-Levin theorem, i.e. a history state encodes the history of a quantum computation. In contrast to tableaus, however, the history encodes information in quantum {superposition}.} $\ket{\psi}$ for a given $\PQMA$ instance such that measuring observable $Z$ on the designated ``output qubit'' of $\ket{\psi}$ reveals the answer of the computation. At a high level, this is achieved by combining a variant of Kitaev's circuit-to-Hamiltonian construction~\cite{KSV02} (which forces the ground state to follow the P circuit) with Ambainis's ``query Hamiltonian''~\cite{A14} (which forces the ground state to encode correctly answered queries to the QMA oracle). Making this rigorous requires a careful analysis of the ground space of Ambainis's query Hamiltonian when queries violating the promise gap of the oracle are allowed (Lemma~\ref{l:amborig}), showing a simple but useful extension of Kempe, Kitaev, and Regev's Projection Lemma~\cite{KKR06} (Lemma~\ref{cor:kkr}) that any low energy state of $H$ must be close to a valid history state, which thus allows us to use just a single-qubit observable, and applying Kitaev's unary encoding trick~\cite{KSV02} to bring the locality of the Hamiltonian $H$ down to $O(1)$ (Lemma~\ref{l:amb}).\\

 \noindent \emph{2. Containment of \tpc\ in $\PQMA$.} The hardness proof for $\tpc$ is similar to that of $\app$, so we focus instead on the containment proof of $\tpc\in\PQMA$, for which a trick is required. For containment, the naive approach would be to run Ambainis's $\PQMA$ protocol~\cite{A14} for \app\ independently for each term $\langle A\otimes B\rangle$, $\langle A\rangle$, and $\langle B\rangle$. However, if a cheating prover does not send the \emph{same} ground state $\ket{\psi}$ for each of these measurements, soundness of the protocol can be violated.

 To address this, we exploit a trick of Chailloux and Sattath~\cite{CS11} from the setting of QMA(2): we observe that the correlation function requires only knowledge of the two-body reduced density matrices $\set{\rho_{ij}}$ of $\ket{\psi}$. A prover can send classical descriptions of the $\set{\rho_{ij}}$ (which is possible since each $\rho_{ij}$ is of constant size), along with a ``consistency proof'' for the QMA-complete Consistency problem~\cite{L06} to ensure the $\set{\rho_{ij}}$ indeed correspond to some state. The verifier can then freely copy each $\rho_{ij}$ and thus can calculate each term.\\

\noindent\emph{3. $\PUQMA$-hardness of estimating spectral gaps.} As mentioned in Sections~\ref{sscn:oracles} and~\ref{sscn:results}, a central theme of this work is the fact that a $\PQMA$ machine can make invalid queries to the QMA oracle, and this point appears to have been missed in~\cite{A14}, where all queries were assumed to satisfy the LH promise. This results in the proofs of two key claims of~\cite{A14} being incorrect. The first claim was used in the proof of $\PQMA$-completeness for \app\ (Claim 1 in~\cite{A14}); we provide a corrected statement and proof in Lemma~\ref{l:amborig} (which suffices for the $\PQMA$-hardness results in~\cite{A14} regarding \app\ to hold).

The error in the second claim (Claim 2 of~\cite{A14}), wherein $\PUQMA$-hardness of determining the spectral gap of a local Hamiltonian is shown, appears arguably more serious. The construction of~\cite{A14} requires a certain ``query Hamiltonian'' to have a spectral gap, which indeed holds if the $\PUQMA$ machine makes no invalid queries. However, if the machine makes invalid queries, this gap can close, and it is not clear how one can recover $\PUQMA$-hardness under mapping reductions.

To overcome this, we introduce a technique of ``query validation'': given a query to the $\UQMA$ oracle, we would like to determine if the query is valid or ``far'' from valid. While it is not clear how a P machine alone can perform such ``query validation'', we show how to use a $\spgap$ oracle to do so, allowing us to eliminate ``sufficiently invalid'' queries. Combining this idea with Ambainis's original construction~\cite{A14}, we show Theorem~\ref{thm:spgap}, i.e. $\PUQMA$-hardness for $\spgap$, with the additional improvement that the input Hamiltonians are now $O(1)$-local, versus $O(\log n)$-local as in~\cite{A14}, via the same unary encoding trick from our previous results. Since our ``query validation'' requires a polynomial number of calls to the $\spgap$ oracle, this result requires a polynomial-time \emph{Turing} reduction. Whether this can be improved to a mapping reduction is left as an open question.

\paragraph{Upper bounds.} We now move to our upper bound result, which is the most technically involved. To show  $\PQMA\subseteq\class{PP}$ (Theorem~\ref{thm:inPP}), we exploit the technique of \emph{hierarchical voting}, used by Beigel, Hemachandra, and Wechsung~\cite{BHW89} to show $\PNP\subseteq \class{PP}$, in conjunction with the QMA strong amplification results of Marriott and Watrous~\cite{MW05}.

To give some intuition as to how the technique works, we sketch its application in the classical case of $\PNP$~\cite{BHW89}. There, we have a P machine making queries to an NP oracle for SAT (i.e. the $i$-th query feeds the oracle some SAT formulae $\phi_i$), and the goal is to simulate this by a PP machine. Since the PP machine does not have access to an NP oracle, the best it can do is \emph{guess} the answer to each query; so, it begins by  guessing the answers to each NP query by picking a random assignment $x_i$ for each formula $\phi_i$ in the hope that $\phi_i(x_i)=1$. Of course, such a guess $x_i$ can be satisfying only if $\phi_i$ is satisfiable to begin with; with a bit of thought, this implies that the ``correct'' query string $y^*$ (i.e. the $i$-th bit of $y^*$ equals $1$ if and only if $\phi_i$ is satisfiable) is the lexicographically \emph{largest} string $y^*$ attainable by this guessing process.

Unfortunately, the PP machine might have a very small probability of guessing $y^*$; thus, it next ``boosts'' this probability via several rounds of ``hierarchical voting''. Conceptually, this technique does not actually \emph{increase} the probability of outputting $y^*$ but rather decreases the probability of outputting other, lexicographically smaller query strings $y'\neq y^*$. This works because the machine is a PP machine (i.e. it can have unbounded error), and so it suffices to reduce the probability of obtaining any such $y'$ being output to be strictly smaller than the probability of $y^*$ being output, in which case the PP machine is more likely to output the correct answer in its simulation of the $\PNP$ computation than not, as needed.

We develop a quantum variant of this scheme which is quite natural. However, its analysis is markedly more involved than the classical setting due to both the probabilistic nature of QMA and the possibility of ``invalid queries'' violating the QMA promise gap. To give a sense of the problems which arise, for $\PQMA$ it is no longer necessarily true that the lexicographically largest obtainable $y^*$ is a correct query string.

\subsection{Discussion and open questions}\label{sscn:discussionopenquestions}
The problems studied here explore the line of research recently initiated by Ambainis~\cite{A14} on $\PQMA$ and focus on central problems for local Hamiltonian systems. The complexity theoretic study of such problems is appealing in that it addresses the original motivation of physicist Richard Feynman in proposing quantum computers~\cite{F85}, who was interested in avenues for simulating quantum systems. Indeed, hardness results, such as Kitaev's quantum Cook-Levin theorem, rigorously justify Feynman's intuition that such simulation problems are hard. Our work (e.g. Theorem~\ref{thm:main1}) strongly supports this view by demonstrating that even some of the simplest and most natural simulation tasks, such as measuring a \emph{single qubit (!)} of a ground state, can be harder than QMA. Our study of spectral gaps (Theorem~\ref{thm:spgap}) further highlights another theme: the subtleties which must be carefully treated when studying oracle classes for promise problems (such as $\PQMA$). As quantum complexity theory commonly focuses on such {promise} problems, we believe this theme could be of interest to a broader computer science audience.

Moving to open questions, although we resolve one of the open questions from~\cite{A14}, there are others we leave open, along with some new ones. Do our results for \app\ and \tpc\ hold for more restricted classes of Hamiltonians, such as $2$-local Hamiltonians, local Hamiltonians on a 2D lattice, or specific Hamiltonian models of interest (see e.g.~\cite{CM16,PM17} for QMA-completeness results for estimating ground state energies of the spin-$1/2$ Heisenberg anti-ferromagnet)? Is \spgap\ $\PUQMA$-complete or $\PQMA$-complete (recall $\spgap\in\PQMA$ and that \cite{A14} and our work together show $\PUQMA$-hardness)? What is the relationship between $\PQMA$ and $\PUQMA$? Finally, what is the complexity of other physical tasks ``beyond'' estimating ground state energies?\\\vspace{-2mm}

\noindent\emph{Remark added later:} Since the original preprint of this paper appeared, the present authors, together with Stephen Piddock, have partially answered the first open question above by showing~\cite{GPY19} that APX-SIM remains $\PQMA$-complete for any family of local Hamiltonians which can simulate ``spatially sparse''~\cite{OT05} Hamiltonians. This, in turn, implies (e.g. using~\cite{PM17},~\cite{SV09},~\cite{PM18}) that APX-SIM remains $\PQMA$-complete even on physically motivated models, such as the Heisenberg interaction on a 2D square lattice.\\\vspace{-2mm}

\paragraph{Organization.} This paper is organized as follows: In Section~\ref{scn:preliminaries}, we give notation and formal definitions. In Section~\ref{scn:lemmas}, we show various lemmas regarding Ambainis's query Hamiltonian. In Section~\ref{scn:1local}, Section~\ref{scn:corr}, and Section~\ref{scn:spectralGap} we show Theorem~\ref{thm:main1}, Theorem~\ref{thm:main2}, and Theorem~\ref{thm:spgap}, respectively. Section~\ref{scn:PP} proves Theorem~\ref{thm:inPP}. A result used in Section~\ref{scn:lemmas} is proved in Appendix~\ref{scn:appendix}.

\section{Preliminaries}\label{scn:preliminaries}

\noindent\textbf{Notation.} For $x\in\set{0,1}^n$, $\ket{x}\in(\complex^2)^{\otimes n}$ denotes the computational basis state labeled by $x$. Let $\spa{X}$ be a complex Euclidean space. Then, $\lin{\spa{X}}$ and $\density{\spa{X}}$ denote the sets of linear and density operators acting on $\spa{X}$, respectively. For subspace $\spa{S}\subseteq\spa{X}$, $\spa{S}^\perp$ denotes the orthogonal complement of $\spa{S}$. For Hermitian operator $H$, $\lambda(H)$ and $\lambda(H|_{\spa{S}})$ denote the smallest eigenvalue of $H$ and the smallest eigenvalue of $H$ restricted to space $\spa{S}$, respectively. The spectral and trace norms are defined $\snorm{A} := \max\{\norm{A\ket{v}}_2 : \norm{\ket{v}}_2 = 1\}$ and  $\trnorm{A}:=\trace{\sqrt{A^\dagger A}}$, respectively, where $:=$ denotes a definition. We set $[m]:=\set{1,\ldots,m}$.\\\vspace{-2mm}

\noindent\textbf{Definitions and lemmas.}

\mypar{$\class{PP}$ and $\class{PQP}$} The class PP~\cite{G77} is the set of decision problems for which there exists a polynomial-time probabilistic Turing machine which accepts any YES instance with probability $>1/2$ and accepts any NO instance with probability $\leq 1/2$. The quantum analogue of PP, denoted PQP, is defined analogously to BQP except in the YES case the verifier accepts with probability $>1/2$ and in the NO case the verifier accepts with probability $\leq 1/2$.

\mypar{$\PQMA$} Defined by Ambainis~\cite{A14}, $\PQMA$ is the set of decision problems decidable by a polynomial-time deterministic Turing machine with the ability to query an oracle for a QMA-complete problem $O(\log n)$ times, where $n$ is the size of the input. For clarity,  without loss of generality we assume that for any QMA problem instance $\Pi$ the P machine wishes to solve, it applies a polynomial-time Karp/many-one reduction to map $\Pi$ to an instance of the QMA-complete problem $2$-LH~\cite{KKR06} which it then feeds to the QMA oracle. $2$-LH is defined as: Given a $2$-local Hamiltonian $H$ and inverse polynomially separated thresholds $a,b\in\reals$, decide whether $\lambda(H)\leq a$ (YES-instance) or $\lambda(H)\geq b$ (NO-instance). Note that the P machine is allowed to make queries which violate the promise gap of $2$-LH, i.e. with $\lambda(H)\in(a,b)$; in this case, the oracle can output YES or NO arbitrarily. The P machine is nevertheless required to output the same final answer (i.e. accept or reject) regardless of how such ``invalid'' queries are answered~\cite{G06}.

For any P machine $M$ making $m$ queries to a QMA oracle, we use the following terminology throughout this article. A \emph{valid} (\emph{invalid}) query satisfies (violates) the promise gap of the QMA oracle. A \emph{correct} query string $y\in\set{0,1}^m$ encodes a sequence of correct answers to all of the $m$ queries. Note that for any invalid query by $M$, any answer is considered correct, yielding the possible existence of multiple correct query strings. An \emph{incorrect} query string is one which contains at least one incorrect query answer.

\mypar{The problem APX-SIM}
\begin{definition}[$\app(H,A,k,l,a,b,\delta)$ (Ambainis~\cite{A14})]\label{dfn:apxsim}
	Given a $k$-local Hamiltonian $H$, an $l$-local observable $A$, and real numbers $a$, $b$, and $\delta$ such that $b-a\geq n^{-c}$ and $\delta\geq n^{-c'}$, for $n$ the number of qubits $H$ acts on and $c,c'>0$ some constants, decide:
	\begin{itemize}
		\item If $H$ has a ground state $\ket{\psi}$ satisfying $\bra{\psi}A\ket{\psi}\leq a$, output YES.
		\item If for all $\ket{\psi}$ satisfying $\bra{\psi}H\ket{\psi}\leq \lambda(H)+\delta$, it holds that $\bra{\psi}A\ket{\psi}\geq b$, output NO.
	\end{itemize}
\end{definition}

\mypar{Kitaev's circuit-to-Hamiltonian construction} Next, we briefly review Kitaev's circuit-to-Hamiltonian construction from the ``quantum Cook-Levin theorem''~\cite{KSV02}. Given a quantum circuit $U=U_L\cdots U_1$ consisting of $1$- and $2$-qubit gates $U_i$ and acting on registers $Q$ (proof register) and $W$ (workspace register),
this construction maps $U$ to a $5$-local Hamiltonian $H=\hin+\hout+\hprop+\hstab$ acting on registers $Q$ (proof register), $W$ (workspace register), and $C$ (clock register). Each of $H$'s terms are defined below:
\begin{eqnarray*}
    \hin&:=&
    \left(\sum_{i=1}^{p} \ketbra{1}{1}_{W_i}\right)\otimes \ketbra{0}{ 0}_C\\
    \hout&:=&\ketbra{0}{0}_{W_1}\otimes
     \ketbra{ L}{ L}_C\\
    \hprop &:=& \sum_{j=1}^{L} H_j {\rm,~where~ }H_j{\rm ~is ~defined~ as}\\
    && \hspace{-10mm}-\frac{1}{2}U_j\otimes\ketbra{ j}{ {j-1}}_C -\frac{1}{2}U_j^\dagger\otimes\ketbra{{j-1}}{ j}_C +\frac{1}{2}I\otimes(\ketbra{ j}{ j}+\ketbra{ {j-1}}{ {j-1}})_C\\
    \hstab&:=&\sum_{i=1}^{L-1}\ketbra{01}{01}_{C_i,C_{i+1}}.
\end{eqnarray*}
Above, the notation $R_i$ refers to the $i$-th qubit of any given register $R$. For any candidate proof $\ket{\psi}$ to be evaluated in expression $\trace(H\ketbra{\psi}{\psi})$, each penalty term of $H$ forces a structure on any minimizing $\ket{\psi}$: at time zero, $\hin$ ensures the workspace register $W$ is set to zero; $\hout$ checks that at the end of the computation, i.e. at time step $L$, the output qubit is close to $\ket{1}$; the propagation Hamiltonian $\hprop$ ensures all steps of $U$ appear in superposition in $\ket{\psi}$ with equal weights. Finally, although we have labeled (for ease of exposition) the clock register in $\hout$ and $\hprop$ in binary, note that in Kitaev's construction it is actually encoded in unary. In other words, time $t$ in clock register $C$ is encoded as $\ket{1^t0^{L-t}}$ (note for $\hstab$ above, register $C$ is already written in unary); $\hstab$'s role is thus to prevent invalid encodings of time steps. (For the interested reader, we remark that instead of working explicitly with Kitaev's construction, one can also consider the more general Quantum Thue System framework~\cite{BCO17}.)

In this paper, we shall use two key properties of $\hin+\hprop+\hstab$ (note $\hout$ is omitted). First, the null space of $\hin+\hprop+\hstab$ is spanned by \emph{history states}, which for any proof $\ket{\psi}$ have form
\begin{equation}\label{eqn:hist}
    \ket{\psihist}=\frac{1}{\sqrt{L+1}}\sum_{t=0}^LU_t\cdots U_1\ket{\psi}_Q\ket{0\cdots 0}_W\ket{t}_C,
\end{equation}
where $C$ is a clock register keeping track of time \cite{KSV02}, and each $U_i$ acts on at most two qubits from registers $Q$ and $W$. Second, we use the following lower bound\footnote{This bound is stated as $\Omega(\Delta/L^3)$ in \cite{GK12}; the constant $\pi^2/64$ can be derived from the analysis therein
.} on the smallest non-zero eigenvalue of $\hin+\hprop+\hstab$: \begin{lemma}[Lemma 3 (Gharibian, Kempe \cite{GK12})]\label{l:GKgap}
    The smallest non-zero eigenvalue of $\Delta(\hin+\hprop+\hstab)$ is at least $\pi^2\Delta/(64L^3)\in\Omega(\Delta/L^3)$,~for $\Delta\in\reals^+$ and $L\geq 1$.
\end{lemma}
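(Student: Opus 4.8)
\textbf{Proof proposal for Theorem~\ref{thm:spgap}.}

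The plan is to follow Ambainis's original reduction strategy for $\spgap$~\cite{A14} but to repair the gap created by invalid queries by prepending a ``query validation'' phase implemented with the $\spgap$ oracle itself, which is exactly what forces a Turing (rather than mapping) reduction. First I would fix a $\PUQMA$-complete language $L$, decided by a P machine $M$ making $m = O(\log n)$ queries to a UQMA oracle instantiated as a Unique Local Hamiltonian instance. The target is to build, from an input $x$, a $4$-local Hamiltonian $H_x$ together with thresholds so that: if $x\in L$ then $H_x$ has a large (inverse-polynomial) spectral gap above its ground energy, and if $x\notin L$ then the gap is small (or vice versa, depending on how the final accept bit is wired). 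The core object is Ambainis's query Hamiltonian $H_{\mathrm{query}}$ acting on a register indexed by potential query-answer strings $y\in\{0,1\}^m$, combined with a Kitaev-style circuit-to-Hamiltonian construction for the P-machine portion of the computation; the circuit-to-Hamiltonian piece uses $H_{\mathrm{in}}+H_{\mathrm{prop}}+H_{\mathrm{stab}}$ from the construction reviewed in the preliminaries, so that Lemma~\ref{l:GKgap} supplies the requisite $\Omega(\Delta/L^3)$ lower bound on its nonzero spectrum. Applying Kitaev's unary encoding trick to the clock register of this circuit construction is what brings the total locality down to $4$ (matching the claimed parameter), and I would spell out the encoding to confirm no term exceeds $4$-locality.

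Next I would analyze the ground space of the combined Hamiltonian under the assumption that \emph{all queries $M$ makes are valid}. Here Ambainis's analysis essentially goes through: there is a unique ``correct'' query string $y^\ast$, the penalty structure in $H_{\mathrm{query}}$ plus the circuit terms selects the history state consistent with $y^\ast$ as the ground state, the UQMA uniqueness property ensures the relevant witness subspaces are (effectively) one-dimensional so the ground state is non-degenerate, and the spectral gap above it is inverse-polynomial by a combination of Lemma~\ref{l:GKgap}, the promise gap $b-a\ge 1/\poly(n)$ of the oracle instances, and the Projection Lemma of Kempe--Kitaev--Regev (via Corollary~\ref{cor:kkr}). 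The obstruction is precisely that $M$ may issue an \emph{invalid} query $(H,a,b)$ with $\lambda(H)\in(a,b)$: then the corresponding UQMA instance has neither a low-energy unique witness nor a clean energy floor, the penalty term that was supposed to be gapped can have eigenvalues sliding continuously into the gap, and the spectral gap of $H_x$ can close even when $x\in L$. This is the flaw in~\cite{A14} that the theorem's Turing reduction is designed to sidestep.

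To handle invalid queries, I would run a preprocessing loop: for each of the $m = O(\log n)$ queries in turn (and, since the query fed at step $i$ depends on answers to queries $1,\dots,i-1$, branching over the $O(\log n)$-bit prefix of prior answers — only polynomially many branches since $m$ is logarithmic), construct an auxiliary local Hamiltonian whose spectral gap encodes whether that query instance $(H_i,a_i,b_i)$ is ``valid'' versus ``sufficiently far from valid'', and call the $\spgap$ oracle on it. Concretely, $\lambda(H_i)\le a_i$ or $\lambda(H_i)\ge b_i$ can be distinguished from $\lambda(H_i)$ lying well inside $(a_i,b_i)$ by examining the gap between $\lambda(H_i)$ and a shifted copy — this is the ``property testing''-flavored step alluded to in the introduction. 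Using the oracle's answers, the reduction identifies, for each query, either its forced correct answer or the fact that it is invalid; invalid queries can then be ``hard-wired'' (the P machine must give the same output regardless of how they are answered, so we may freely pick either answer and still land in the correct YES/NO case), and the surviving sub-instance has only valid queries. I then output the $H_x$ built from that cleaned-up sub-instance, whose gap behaves as in the valid-query analysis above, and wire the final bit so that $\spgap$ on $H_x$ decides $x\in L$. I expect the main obstacle to be the quantitative bookkeeping in the validation step: choosing the shift and the comparison thresholds so that ``valid'' and ``far-from-valid'' instances are separated by an inverse-polynomial gap the $\spgap$ oracle can resolve, while simultaneously guaranteeing that any query the honest $M$ would care about is either cleanly decided or provably irrelevant — and then propagating these inverse-polynomial margins through the unary-encoding and Projection-Lemma steps without the gaps collapsing. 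Managing the interaction between the logarithmic number of adaptive queries and the polynomial number of oracle calls needed for validation, while keeping everything a genuine polynomial-time Turing reduction, is the delicate part.
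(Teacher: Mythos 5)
Your write-up does not address the statement you were asked to prove. The statement is Lemma~\ref{l:GKgap}: a quantitative lower bound of $\pi^2\Delta/(64L^3)$ on the smallest \emph{non-zero} eigenvalue of $\Delta(\hin+\hprop+\hstab)$, i.e.\ a purely spectral fact about Kitaev's circuit-to-Hamiltonian construction with $\hout$ omitted (the paper itself imports this from Lemma~3 of~\cite{GK12}, noting only that the constant $\pi^2/64$ can be extracted from the analysis there). What you have written instead is a high-level strategy for Theorem~\ref{thm:spgap}, the $\PUQMA$-hardness of $\spgap$ under Turing reductions --- a different result --- and, tellingly, your sketch \emph{uses} Lemma~\ref{l:GKgap} as a black box (``so that Lemma~\ref{l:GKgap} supplies the requisite $\Omega(\Delta/L^3)$ lower bound''), which would be circular if offered as a proof of the lemma itself. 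Nowhere do you analyze the spectrum of $\hin+\hprop+\hstab$, which is the entire content of the claim.

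A genuine proof would have to follow the standard Kitaev-style spectral analysis: split the Hilbert space into the legal and illegal clock subspaces ($\hstab$ gives an $\Omega(1)$ penalty, scaled by $\Delta$, on the illegal part); on the legal part, conjugate by the unitary $W=\sum_{t}U_t\cdots U_1\otimes\ketbra{t}{t}_C$ so that $\hprop$ becomes essentially the Laplacian of a path on $L+1$ vertices, whose smallest non-zero eigenvalue is $2\sin^2\bigl(\pi/(2(L+1))\bigr)=\Theta(1/L^2)$; and then lower bound the smallest non-zero eigenvalue of the \emph{sum} $\hin+\hprop$ via Kitaev's geometric lemma on the angle between the two null spaces, which costs roughly another factor of $L$ and produces the stated $\Omega(\Delta/L^3)$ bound with the explicit constant $\pi^2/64$. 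None of these steps --- the block decomposition, the change of basis, the path-Laplacian eigenvalue estimate, or the angle/geometric-lemma argument combining the null spaces --- appears in your proposal, so as a proof of Lemma~\ref{l:GKgap} it is not a partial argument with a fixable gap; it is an answer to a different question.
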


\mypar{Miscellaneous useful facts} Let $V$ denote a QMA verification circuit acting on $M$ proof qubits with completeness $c$ and soundness $s$. If one runs $V$ on ``proof'' $\rho=I/2^M$, then for a YES instance, $V$ accepts with probability $\geq c/2^M$ (since $I/2^M$ can be viewed as ``guessing'' a correct proof with probability $\geq 1/2^M$), and in a NO instance, $V$ accepts with probability $\leq s$ (see, e.g.,~\cite{MW05,W13}).

A useful fact for complex unit vectors $\ket{v}$ and $\ket{w}$ is (see, e.g., Equation~1.33 of~\cite{G13})
\begin{equation}\label{eqn:enorm}
    \trnorm{\ketbra{v}{v}-\ketbra{w}{w}}=2\sqrt{1-\abs{\brakett{v}{w}}^2}\leq 2\enorm{\ket{v}-\ket{w}}.
\end{equation}

\section{Ambainis's Query Hamiltonian}\label{scn:lemmas}
We now show various results regarding Ambainis's ``query Hamiltonian''~\cite{A14}, which intuitively aims to have its ground space contain correct answers to a sequence of QMA queries. Let $U$ be a $\PQMA$ computation\footnote{In this work, we let $U$ denote a uniformly generated classical circuit, as opposed to a Turing machine; this is to ease the transition to quantum circuits and corresponding application of Kitaev's circuit-to-Hamiltonian construction later.}, and let $H_{\spa{Y}_i}^{i,y_{1}\cdots y_{i-1}}$ be the $2$-local Hamiltonian\footnote{While each query is encoded as a $2$-local Hamiltonian, our hardness results in Section~\ref{scn:hardness} will be for $5$- and $4$-local Hamiltonians, since the query Hamiltonians are just components of our larger, final Hamiltonian constructions.} corresponding to the $i$-th query made by $U$ given that the answers to the previous $i-1$ queries are given by $y_{1}\cdots y_{i-1}\in\set{0,1}^{0,1}$. (Without loss of generality, we may assume $H_{\spa{Y}_i}^{i,y_{1}\cdots y_{i-1}}\succeq 0$ by adding multiples of the identity and rescaling.) The oracle query made at step $i$ corresponds to an input $(H_{\spa{Y}_i}^{i,y_{1}\cdots y_{i-1}} , \epsilon, 3\epsilon)$ to $2$-LH, with (as in~\cite{A14}) completeness/``YES case'' and soundness/``NO case'' parameters $\epsilon$ and $3\epsilon$, respectively, for $\epsilon>0$ a fixed inverse polynomial. Then, Ambainis's~\cite{A14} $O(\log(n))$-local query Hamiltonian $H$ acts on $\spa{X}\otimes\spa{Y}$, where $\spa{X}=(\spa{X}_i)^{\otimes m}=(\complex^{2})^{\otimes m}$ and $\spa{Y}=\otimes_{i=1}^m\spa{Y}_i$, such that $\spa{X}_i$ is intended to encode the answer to query $i$ with $\spa{Y}_i$ encoding the ground state of the corresponding query Hamiltonian $H_{\spa{Y}_i}^{i,y_{1}\cdots y_{i-1}}$. Specifically,
	\begin{eqnarray}
		H &=& \sum_{i=1}^m\frac{1}{4^{i-1}}\sum_{y_1,\ldots,y_{i-1}}\bigotimes_{j=1}^{i-1}\ketbra{y_j}{y_j}_{\spa{X}_{j}}\otimes\left(2\epsilon \ketbra{0}{0}_{\spa{X}_{i}}\otimes I_{\spa{Y}_i} + \ketbra{1}{1}_{\spa{X}_{i}}\otimes H_{\spa{Y}_i}^{i,y_{1}\cdots y_{i-1}}\right)\nonumber\\
		&=:&\sum_{i=1}^m\frac{1}{4^{i-1}}\sum_{y_1,\ldots,y_{i-1}}M_{y_1 \cdots y_{i-1}}.\label{eqn:amb1}
	\end{eqnarray}

	Recall from Section~\ref{scn:preliminaries} that a sequence of query answers $y=y_1\cdots y_m\in\set{0,1}^m$ is \emph{correct} if it corresponds to a possible execution of $U$. Since $U$ can make queries to its QMA oracle which violate the QMA promise gap, the set of correct $y$ is generally {not} a singleton. However, we henceforth assume without loss of generality that $U$ makes at least one valid query (i.e. which satisfies the QMA promise gap). (Assume to the contrary that $U$ makes only invalid queries. Then, since recall we stipulate~\cite{G06} that the oracle's answer on any invalid query must not affect the final output of $U$, it follows that for \emph{all} possible query strings, $U$ outputs the same answer. Moreover, since $m\in O(\log n)$, we can efficiently \emph{verify} this property --- simply iterate via brute force through all possible polynomially many query strings, and check that $U$ outputs the same answer for each one. Therefore, we can assume that any reduction includes such a check as an implicit subroutine, prepended to the start of the reduction, which will handle any case in which $U$ makes only invalid queries.) We now prove the following about $H$:
\begin{lemma}\label{l:amborig}
    Define for any $x\in\set{0,1}^m$ the space
    $
        \spa{H}_{x_1\cdots x_m} := \bigotimes_{i=1}^m \ketbra{x_i}{x_i}\otimes \spa{Y}_i.
    $
    Then, there exists a correct query string $x\in\set{0,1}^m$ such that the ground state of $H$ lies in $\spa{H}_{x_1\cdots x_m}$. Moreover, suppose this space has minimum eigenvalue $\lambda$. Then, for any incorrect query string $y_1\cdots y_m$, any state in $\spa{H}_{y_1\cdots y_m}$ has energy at least $\lambda+\frac{\epsilon}{4^{m}}$.
\end{lemma}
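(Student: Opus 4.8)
The plan is to exploit the fact that $H$ in~(\ref{eqn:amb1}) is block-diagonal with respect to the computational basis of $\spa{X}$, so each $\spa{H}_{x_1\cdots x_m}$ is an invariant subspace and it suffices to understand $\lambda(H|_{\spa{H}_x})$ for every $x$. First I would compute this restriction explicitly: projecting each term of~(\ref{eqn:amb1}) onto $\spa{H}_x$ kills the cross terms and leaves $H|_{\spa{H}_x}=\sum_{i=1}^m \frac{1}{4^{i-1}}G_i^{(x)}$, where $G_i^{(x)}$ acts only on $\spa{Y}_i$ and equals $2\epsilon\, I_{\spa{Y}_i}$ if $x_i=0$ and $H_{\spa{Y}_i}^{i,x_1\cdots x_{i-1}}$ if $x_i=1$. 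Since the $G_i^{(x)}$ act on disjoint tensor factors, $\lambda(H|_{\spa{H}_x})=\sum_{i=1}^m 4^{-(i-1)}\mu_i(x)$, where $\mu_i(x):=2\epsilon$ if $x_i=0$ and $\mu_i(x):=\lambda(H_{\spa{Y}_i}^{i,x_1\cdots x_{i-1}})\geq 0$ if $x_i=1$ (using the stated assumption $H_{\spa{Y}_i}^{i,\cdots}\succeq 0$). The whole lemma then reduces to a statement about the real-valued function $E(x):=\sum_i 4^{-(i-1)}\mu_i(x)$ on $\set{0,1}^m$.

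The one structural fact I would isolate is a ``local preference'' observation. At a \emph{valid} query $i$ with prefix $p$: if the correct answer is $1$ then $\lambda(H_{\spa{Y}_i}^{i,p})\leq\epsilon$, so answering $1$ costs $\leq\epsilon$ while answering $0$ costs $2\epsilon$; if the correct answer is $0$ then $\lambda(H_{\spa{Y}_i}^{i,p})\geq 3\epsilon$, so answering $0$ costs $2\epsilon$ while answering $1$ costs $\geq 3\epsilon$. In both cases the \emph{correct} answer costs at most $2\epsilon$ and the \emph{incorrect} answer costs at least $\epsilon$ more than it. At an invalid query, answering $0$ is correct and costs exactly $2\epsilon$. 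Consequently, from any correct prefix one can always extend to a full correct string $\tilde x$ by answering ``$0$ unless forced to answer $1$'', which guarantees $\mu_j(\tilde x)\leq 2\epsilon$ for every $j$. Note I never need an \emph{upper} bound on the $\lambda(H_{\spa{Y}_i}^{i,\cdots})$; I use only $\mu_i\geq 0$ together with the existence of this cheap correct completion.

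Both halves of the lemma then follow from the same one-line estimate driven by the geometric weights. For the first part, let $x^*$ minimize $E$; if $x^*$ were not a correct query string, let $i$ be the first index at which $x^*_i$ is an incorrect answer (so query $i$ under prefix $x^*_1\cdots x^*_{i-1}$ is valid and $x^*_i$ is the wrong bit). Replacing $x^*_i$ by the correct answer and completing cheaply yields a correct string $\tilde x$ with $E(x^*)-E(\tilde x)\geq \frac{\epsilon}{4^{i-1}}-2\epsilon\sum_{j>i}4^{-(j-1)}=\frac{\epsilon}{3\cdot 4^{i-1}}>0$, contradicting minimality of $x^*$. Hence every minimizer of $E$ is a correct query string; since $H$ is block-diagonal, a ground state of $H$ lies in some $\spa{H}_x$ with $x$ correct, and that block's minimum eigenvalue is $\lambda=\min_x E(x)$. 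For the second part, given an incorrect string $y$, let $i$ be the first incorrect answer in $y$, and build the same cheap correct completion $\tilde x$ of the prefix $y_1\cdots y_{i-1}$ followed by the correct answer to query $i$. The identical estimate gives $E(y)-E(\tilde x)\geq \frac{\epsilon}{3\cdot 4^{i-1}}\geq \frac{\epsilon}{3\cdot 4^{m-1}}>\frac{\epsilon}{4^m}$, and since $\tilde x$ is correct, $E(\tilde x)\geq\lambda$; therefore every state in $\spa{H}_y$ has energy at least $\lambda(H|_{\spa{H}_y})=E(y)\geq\lambda+\frac{\epsilon}{4^m}$.

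The main obstacle --- and the subtlety this section is flagging --- is exactly that invalid queries create multiple ``correct'' strings and threaten to collapse the eigenvalue separations used in the original construction. The device that makes the plan go through cleanly is to avoid reasoning about the (possibly large) magnitudes of the query Hamiltonians' ground energies, and instead combine the single robust inequality $\mu_i\geq 0$ with the fact that a correct string can always be chosen with per-query cost $\leq 2\epsilon$; the ratio-$4$ weighting then ensures that one mis-answered \emph{valid} query (an extra $\geq\epsilon$ at weight $4^{-(i-1)}$) dominates the total possible later discrepancy (total weight $\tfrac{1}{3}4^{-(i-1)}$), which is also precisely why the clean $\epsilon/4^m$ separation appears.
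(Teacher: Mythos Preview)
Your proposal is correct and follows essentially the same approach as the paper's proof: both exploit the block-diagonal structure to reduce the problem to comparing the scalar energies $E(x)$, then use an exchange argument in which the first incorrect bit $i$ of an incorrect string is flipped (saving at least $\epsilon/4^{i-1}$) and the tail is replaced by a correct completion with per-query cost at most $2\epsilon$, so that the geometric series bound $\epsilon/4^{i-1}-2\epsilon\sum_{j>i}4^{-(j-1)}\geq \epsilon/(3\cdot 4^{i-1})$ gives the claimed gap. Your explicit introduction of $E(x)$ and $\mu_i(x)$ makes the bookkeeping somewhat cleaner, but the argument is the same as the paper's.
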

\noindent As discussed in Section~\ref{scn:intro}, Claim 1 of~\cite{A14} proved a similar statement under the assumption that the correct query string $x$ is unique. In that setting,~\cite{A14} showed the ground state of $H$ is in $\spa{H}_{x}$, and that for \emph{all} query strings $y\neq x$, the space $\spa{H}_{y}$ has energy at least $\lambda+\frac{\epsilon}{4^{m-1}}$. However, in general invalid queries must be allowed, and in this setting this claim no longer holds --- two distinct correct query strings can have eigenvalues which are arbitrarily close if they contain queries violating the promise gap. The key observation we make here is that even in the setting of non-unique $x$, a spectral gap between the ground space and all \emph{incorrect} query strings can be shown.
\begin{proof}[Proof of Lemma \ref{l:amborig}]
    Observe first that $H$ in Equation~(\ref{eqn:amb1}) is block-diagonal with respect to register $\spa{X}$, i.e. to understand the spectrum of $H$, it suffices to understand the eigenvalues in each of the blocks corresponding to fixing $\spa{X}_i$ to some string $y\in\set{0,1}^m$. Thus, we can restrict our attention to spaces $\spa{H}_{y}$ for $y\in\set{0,1}^m$. To begin, let $x\in\set{0,1}^m$ denote a correct query string which has lowest energy among all \emph{correct} query strings against $H$, i.e. the block corresponding to $x$ has the smallest eigenvalue among such blocks. (Note that $x$ is well-defined, though it may not be unique; in this latter case, any such $x$ will suffice for our proof.) For any $y\in\set{0,1}^m$, define $\lambda_y$ as the smallest eigenvalue in block $\spa{H}_y$. We show that for any \emph{incorrect} query string $y=y_1\cdots y_m$, $\lambda_y\geq\lambda_x+\epsilon/(4^m)$.

    We use proof by contradiction, via an exchange argument\footnote{At a high level, in an \emph{exchange argument}, one begins with a solution $y$ to some problem instance, where $y$ is lacking some additional desired property $P$. One then shows how to ``exchange'' $y$ for another solution $y'$ such that $y'$ is at least as ``good'' as $y$ for the problem at hand, and so that $y'$ now has the desired property $P$. In our application here, we will exchange query string $y$ for a query string $y'$ satisfying $\lambda_{y'}<\lambda_y$. The property $P$ in this case will be whether the query string correct.}. Suppose there exists an incorrect query string $y=y_1\cdots y_m$ such that $\lambda_y <\lambda_x+\epsilon/(4^m)$. Since $y$ is an incorrect query string, there exists an $i\in[m]$ such that $y_i$ is the wrong answer to a valid query $H_{\spa{Y}_i}^{i,y_{1}\cdots y_{i-1}}$. Let $i$ denote the first such position. Now, consider operator $M_{y_1 \cdots y_{i-1}}$, which recall is defined as
    \[
       M_{y_1 \cdots y_{i-1}}= \bigotimes_{j=1}^{i-1}\ketbra{y_j}{y_j}_{\spa{X}_{j}}\otimes\left(2\epsilon \ketbra{0}{0}_{\spa{X}_{i}}\otimes I_{\spa{Y}_i} + \ketbra{1}{1}_{\spa{X}_{i}}\otimes H_{\spa{Y}_i}^{i,y_{1}\cdots y_{i-1}}\right)
    \]
    and let $\lambda_{y_1\cdots y_{i-1}\overline{y}_i}$ denote the smallest eigenvalue of $M_{y_1 \cdots y_{i-1}}$ restricted to space $\spa{H}_{y_1\cdots y_{i-1}\overline{y}_i}$, where $\overline{y}_i$ denotes the complement of bit $y_i$;
    for clarity, here we define
    \[
        \spa{H}_{y_1\cdots y_{i-1}\overline{y}_i} :=\ketbra{y_1}{y_1}_{\spa{X}_1}\otimes\cdots\otimes \ketbra{\overline{y}_i}{\overline{y}_i}_{\spa{X}_i}\otimes\left(\bigotimes_{j=i+1}^{m}\spa{X}_j\right)\otimes \spa{Y}.
    \]
    Note that string $y_1\cdots y_{i-1}\overline{y}_i$ corresponds to $i$ correct query bits, with $\overline{y}_i$ the correct answer to query $i$.
    Then, we claim that any state $\ket{\phi}\in \spa{H}_{y_1\cdots y_i}$ (i.e. $\ket{\phi}$ is of the same dimension as $H$, with registers $\spa{X}_1$ through $\spa{X}_i$ fixed to state $\ket{y_1}\otimes\cdots \otimes \ket{y_i}$) satisfies the bound of the following lemma.
    \begin{lemma}\label{l:bound1}
        For $\ket{\phi}$, $M_{y_1\cdots y_{i-1}}$, and $\lambda_{y_1\cdots y_{i-1}\overline{y}_i}$ as defined above, we have
        \[
            \bra{\phi}M_{y_1\cdots y_{i-1}}\ket{\phi}\geq \lambda_{y_1\cdots y_{i-1}\overline{y}_i}+\epsilon.
        \]
    \end{lemma}
    \noindent Here, for clarity, $M_{y_1\cdots y_{i-1}}$ is of the same dimension as $\ket{\phi}$, since we assume $M_{y_1\cdots y_{i-1}}$ has an implicit tensor product with term $I_{\spa{X}_{i+1}\otimes\spa{Y}_{i+1}\cdots\otimes\spa{X}_{m}\otimes\spa{Y}_{m}}$. (This is intuitively because $M_{y_1\cdots y_{i-1}}$ corresponds to making query $i$ given responses to queries $1$ through $i-1$, and queries $i+1$ through $m$ are yet to be determined.)
    \begin{proof}
        Constrained to space $\spa{H}_{y_1\cdots y_{i-1}}$, $M_{y_1\cdots y_{i-1}}$ reduces to operator $M':=2\epsilon \ketbra{0}{0}_{\spa{X}_{i}}\otimes I_{\spa{Y}_i} + \ketbra{1}{1}_{\spa{X}_{i}}\otimes H_{\spa{Y}_i}^{i,y_{1}\cdots y_{i-1}}$, which is block-diagonal in the standard basis with respect to $\spa{X}_i$. Thus, if query $i$ is a YES instance, the smallest eigenvalue of $M'$ lies in the block corresponding to setting $\spa{X}_i$ to (the correct query answer) $\ket{1}$, and is at most $\epsilon$ (since we are assuming each query has YES and NO energy thresholds $\epsilon$ and $3\epsilon$). On the other hand, the block with $\spa{X}_i$ set to $\ket{0}$ by definition reduces to $2\epsilon I_{\spa{Y}_i}$, and hence has all eigenvalues equaling $2\epsilon$. Thus, setting $\spa{X}_i$ to the correct query answer $\ket{1}$ yields an energy penalty which is at least $\epsilon$ less than the penalty obtained for setting $\spa{X}_i$ to the wrong query answer, $\ket{0}$.  Analogously, when query $i$ is a NO instance (i.e. the correct query answer is $\ket{0}$), a similar argument shows the $\ket{0}$-block again has eigenvalues equaling $2\epsilon$ and now the $\ket{1}$-block has eigenvalues at least $3\epsilon$. We conclude that in both YES and NO cases, setting $\spa{X}_i$ to the correct query answer achieves an energy penalty at least $\epsilon$ less than setting $\spa{X}_i$ to the wrong query answer. As a result,
        \[
            \lambda_{y_1\cdots y_{i-1}\overline{y}_i}\leq \lambda_{y_1\cdots {y}_i}-\epsilon,
        \]
        which implies the claim.
    \end{proof}

    With Lemma~\ref{l:bound1} in hand, we return to our exchange argument. Let $\widehat{M}_{y_1\cdots y_t}$ denote the set of terms from Equation~(\ref{eqn:amb1}) which are consistent with some prefix $y_1\ldots y_t$ (e.g. $M_{y_1\ldots y_t}$, $M_{y_1\ldots y_t0}$, $M_{y_1\ldots y_t 1}$, etc).
    Recall that the bits $y_{i+1}\cdots y_{m}$ form the tail of the string $y$ following the incorrect query answer $y_i$. Set $y'_{i+1}\cdots y'_{m}$ so that $y':=y_1\cdots \overline{y}_i y'_{i+1}\cdots y'_m$ is a correct query string.
    Care is now required; the new query bits $y'_{i+1}\cdots y'_m$ may lead to different energy penalties than the previous string $y_{i+1}\cdots y_m$ against the Hamiltonian terms in set $\widehat{M}_{y_1\cdots \overline{y}_i}$. In other words, we must upper bound any possible energy penalty \emph{increase} when exchanging $y_1\cdots \overline{y}_i y_{i+1}\cdots y_m$ for $y'$. To do so, recall that all Hamiltonian terms in Equation~(\ref{eqn:amb1}) are positive semidefinite. Thus, for any state $\ket{\psi}$ in space $\spa{H}_{y_1\cdots \overline{y}_i}$, the energy obtained by $\ket{\psi}$ against terms in $\widehat{M}_{y_1\cdots \overline{y}_i}$ is at least $0$. Conversely, in the worst case, since each term in $\widehat{M}_{y_1\cdots \overline{y}_i}$ has minimum eigenvalue at most $2\epsilon$, the eigenvector $\ket{\psi}$ of smallest eigenvalue in block $H_{y'}$ incurs an additional penalty against $H$ for queries $i+1$ through $m$ of at most (taking into account the normalization factor $1/4^{i-1}$ in Equation~(\ref{eqn:amb1}))
    \[
        2\epsilon\sum_{k=i}^\infty \frac{1}{4^k}=\frac{2\epsilon}{3\cdot4^{i-1}}.
    \]
     We conclude that (again taking into account the normalization factor $1/4^{i-1}$ in Equation~(\ref{eqn:amb1}))
     \[
         \lambda_{y'}\leq \lambda_y-\frac{\epsilon}{4^{i-1}}+\frac{2\epsilon}{3\cdot4^{i-1}} < \left(\lambda_x+\frac{\epsilon}{4^m}\right)-\frac{\epsilon}{4^{i-1}}+\frac{2\epsilon}{3\cdot4^{i-1}}<\lambda_x
     \]
     where the second inequality follows by the assumption $\lambda_y<\lambda_x+{\epsilon}/{4^m}$. This is a contradiction.
\end{proof}

The next lemma converts $H$ from an $O(\log n)$-local Hamiltonian to an $O(1)$-local one.
\begin{lemma}\label{l:amb}
	For any $x\in\set{0,1}^m$, let $\hat{x}$ denote its unary encoding. Then, for any $\PQMA$ circuit $U$ acting on $n$ bits and making $m\geq 1$ queries to a QMA oracle, there exists a mapping to a $4$-local Hamiltonian $H'$ acting on space $(\complex^2)^{\otimes 2^m-1}\otimes\spa{Y}$ such that there exists a correct query string $x=x_1\cdots x_m$ satisfying:
	\begin{enumerate}
		\item The ground state of $H'$ lies in subspace $\ketbra{\hat{x}}{\hat{x}}\otimes \spa{Y}$.
		\item For any state $\ket{\psi}$ in subspace $\ketbra{\hat{x}'}{\hat{x}'}\otimes \spa{Y}$ where either $\hat{x}'$ is not a unary encoding of a binary string $x'$ or $x'$ is an incorrect query string, one has $\bra{\psi}H'\ket{\psi}\geq \lambda(H')+\epsilon/4^{m}$, for  inverse polynomial $\epsilon$.
		\item For all strings $x'\in\set{0,1}^m$, $H'$ acts invariantly on subspace $\ketbra{\hat{x}'}{\hat{x}'}\otimes \spa{Y}$.
\item The mapping can be computed in time polynomial in $n$ (recall $m\in O(\log n)$).
	\end{enumerate}
\end{lemma}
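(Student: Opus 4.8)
The plan is to apply Kitaev's unary-encoding trick~\cite{KSV02}, here to the query-answer register $\spa{X}$ of the Hamiltonian $H$ of Equation~(\ref{eqn:amb1}). Order the $2^m$ possible query strings $x\in\set{0,1}^m$ lexicographically, identifying $x$ with the integer $\mathrm{val}(x):=\sum_{j=1}^m x_j\,2^{m-j}\in\set{0,\ldots,2^m-1}$, and replace $\spa{X}=(\complex^2)^{\otimes m}$ by a register $\spa{C}=(\complex^2)^{\otimes(2^m-1)}$ whose computational basis state $\ket{\hat{x}}$ (the unary encoding of $x$) has its first $\mathrm{val}(x)$ qubits equal to $1$ and the rest equal to $0$. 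The key observation is that a prefix-and-bit constraint ``$x_1\cdots x_{i-1}=p$ and $x_i=b$'', for $p\in\set{0,1}^{i-1}$ and $b\in\set{0,1}$, holds exactly when $\mathrm{val}(x)$ lies in a \emph{contiguous} integer interval $[a,c]$ (explicitly $a=2^{m-i+1}\mathrm{val}(p)+b\,2^{m-i}$, $c=a+2^{m-i}-1$), and that requiring a legal unary clock value to lie in $[a,c]$ is enforced by the at-most-$2$-local projector $\Pi_{p,b}^{\spa{C}}:=\ketbra{1}{1}_{\spa{C},a}\otimes\ketbra{0}{0}_{\spa{C},c+1}$ (omitting the left factor when $a=0$ and the right factor when $c=2^m-1$). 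Writing $H_{\spa{Y}_i}^{i,p}$ for $H_{\spa{Y}_i}^{i,y_1\cdots y_{i-1}}$ with $y_1\cdots y_{i-1}=p$, I would then set
\[
 H':=\sum_{i=1}^m\frac{1}{4^{i-1}}\sum_{p\in\set{0,1}^{i-1}}\Big(2\epsilon\,\Pi_{p,0}^{\spa{C}}\otimes I_{\spa{Y}_i}+\Pi_{p,1}^{\spa{C}}\otimes H_{\spa{Y}_i}^{i,p}\Big)\;+\;J\sum_{k=1}^{2^m-2}\ketbra{0}{0}_{\spa{C},k}\otimes\ketbra{1}{1}_{\spa{C},k+1},
\]
where the last sum (call it $H_{\mathrm{legal}}$) penalizes non-monotone clock strings and $J=\poly(n)$ is a weight fixed below.

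The routine points come first. Since $\Pi_{p,b}^{\spa{C}}$ acts on at most $2$ qubits of $\spa{C}$ and each $H_{\spa{Y}_i}^{i,p}\succeq0$ is $2$-local on $\spa{Y}_i$, every term of $H'$ is at most $4$-local (the $H_{\mathrm{legal}}$ terms are $2$-local), which gives the locality claim. There are $\sum_{i=1}^m 2^{i-1}=2^m-1$ summands of the form $2\epsilon\,\Pi_{p,0}^{\spa{C}}\otimes I+\Pi_{p,1}^{\spa{C}}\otimes H^{i,p}$ (each obtained by simulating $U$ through its first $i$ queries under query-answer prefix $p$) together with $2^m-2$ terms of $H_{\mathrm{legal}}$; since $m\in O(\log n)$ we have $2^m=\poly(n)$, and $J$ will be $\poly(n)$, so $H'$ is computable in time $\poly(n)$, giving property 4. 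Finally, every $\Pi_{p,b}^{\spa{C}}$ and every term of $H_{\mathrm{legal}}$ is diagonal in the computational basis of $\spa{C}$, so $H'$ is block-diagonal with respect to $\spa{C}$; in particular it acts invariantly on each $\ketbra{\hat{x}'}{\hat{x}'}\otimes\spa{Y}$ with $x'\in\set{0,1}^m$, giving property 3.

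The heart of the argument is a block-by-block comparison of $H'$ with $H$. Fix $x'\in\set{0,1}^m$. Because $\hat{x}'$ is monotone, $\bra{\hat{x}'}H_{\mathrm{legal}}\ket{\hat{x}'}=0$, and for each $i$ and $p\in\set{0,1}^{i-1}$ one has $\bra{\hat{x}'}\Pi_{p,b}^{\spa{C}}\ket{\hat{x}'}=1$ iff $p$ is a prefix of $x'$ and $x'_i=b$ — this is exactly the contiguous-interval identity. Comparing term by term with $M_{y_1\cdots y_{i-1}}$ from Equation~(\ref{eqn:amb1}) then shows that, as operators on $\spa{Y}$, the restriction of $H'$ to $\ketbra{\hat{x}'}{\hat{x}'}\otimes\spa{Y}$ equals the restriction of $H$ to $\spa{H}_{x'}$; hence these blocks of $H'$ and of $H$ have identical spectra. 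On any \emph{illegal}-clock block $\ketbra{c}{c}\otimes\spa{Y}$ ($c$ non-monotone), $H_{\mathrm{legal}}$ contributes at least $J$ and every remaining term of $H'$ is positive semidefinite, so that block's minimum energy is at least $J$. Now take $J:=\snorm{H}+1$; this is $\poly(n)$ (since $\snorm{H}=\poly(n)$ by the triangle inequality together with $\snorm{M_{y_1\cdots y_{i-1}}}\leq\max(2\epsilon,\snorm{H^{i,p}_{\spa{Y}_i}})=\poly(n)$ and the geometric $4^{-(i-1)}$ weights), and it satisfies $J>\snorm{H}\geq\lambda(H)$ and, since $\epsilon/4^m<1$, also $J>\lambda(H)+\epsilon/4^m$. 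Thus no illegal-clock block reaches energy $\lambda(H)$, so $\lambda(H')=\min_{x'\in\set{0,1}^m}\lambda(H|_{\spa{H}_{x'}})=\lambda(H)$, and by Lemma~\ref{l:amborig} this minimum is attained in the block $\ketbra{\hat{x}}{\hat{x}}\otimes\spa{Y}$ of a correct query string $x$ — that is property 1. For property 2: if $\hat{x}'$ is not a legal unary string, then any $\ket{\psi}\in\ketbra{\hat{x}'}{\hat{x}'}\otimes\spa{Y}$ has $\bra{\psi}H'\ket{\psi}\geq J>\lambda(H')+\epsilon/4^m$; and if $\hat{x}'$ encodes an incorrect query string $x'$, then by the block identification and the last sentence of Lemma~\ref{l:amborig}, $\bra{\psi}H'\ket{\psi}\geq\lambda(H)+\epsilon/4^m=\lambda(H')+\epsilon/4^m$.

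I expect no serious obstacle beyond careful bookkeeping. The one point that genuinely must be gotten right is the dictionary between prefix constraints on lexicographically ordered query strings and contiguous intervals of legal unary clock values, including its two boundary cases; once that is in place, the exact block identification $H'|_{\ketbra{\hat{x}'}{\hat{x}'}\otimes\spa{Y}}=H|_{\spa{H}_{x'}}$ transports every spectral statement of Lemma~\ref{l:amborig} to $H'$ essentially verbatim, and the only quantitative thing left to check is that the legality weight $J$ stays polynomially bounded, which is immediate from $\snorm{H}=\poly(n)$.
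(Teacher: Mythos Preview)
Your proof is correct and follows essentially the same approach as the paper: apply Kitaev's unary-encoding trick to the query register $\spa{X}$, add a legality penalty, and argue block-by-block by identifying each legal block of $H'$ with the corresponding block of $H$ and invoking Lemma~\ref{l:amborig}. The minor differences are that you are more explicit about why the result is $4$-local (your contiguous-interval/2-local-projector observation makes this transparent, whereas the paper's wording is somewhat imprecise on this point), and that you take the legality weight $J=\snorm{H}+1$ while the paper uses the smaller constant $3\epsilon$, which in turn requires the additional observation that $\lambda(H)\leq\frac{8}{3}\epsilon<3\epsilon-\epsilon/4^m$.
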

\begin{proof}
	We show how to improve the $O(\log n)$-local construction $H$ of Lemma~\ref{l:amborig} to $4$-local $H'$. Specifically, recall that $H$ from Equation~(\ref{eqn:amb1}) acts on $(\spa{X}\otimes\spa{Y})$. Using a trick of Kitaev~\cite{KSV02}, we encode the $\spa{X}=\spa{X}_1\otimes\cdots\otimes \spa{X}_m$ register in unary. Specifically, we can write
	\begin{eqnarray*}
		M_{y_1\cdots y_{i-1}}&=&\sum_{y_{i+1},\ldots,y_{m}}2\epsilon\bigotimes_{j=1}^{i-1}\ketbra{y_j}{y_j}_{\spa{X}_{j}}\otimes\ketbra{0}{0}_{\spa{X}_{i}}\bigotimes_{k=i+1}^{m}\ketbra{y_k}{y_k}_{\spa{X}_{k}}\otimes I_{\spa{Y}} +\\		&&\mbox{\hspace{3mm}}\bigotimes_{j=1}^{i-1}\ketbra{y_j}{y_j}_{\spa{X}_{j}}\otimes\ketbra{1}{1}_{\spa{X}_{i}}\bigotimes_{k=i+1}^{m}\ketbra{y_k}{y_k}_{\spa{X}_{k}}\otimes H_{\spa{Y}_i}^{i,y_{1}\cdots y_{i-1}}.
	\end{eqnarray*}
	We now replace register $\spa{X}_1\otimes\cdots\otimes \spa{X}_m$ with register $\spa{X}'=(\complex^2)^{\otimes 2^m-1}$ and encode each binary string $x\in\set{0,1}^m$ as the unary string $\hat{x}=\ket{1}^{\otimes \abs{x}}\ket{0}^{\otimes 2^m-\abs{x}-1}$, where $\abs{x}$ is the non-negative integer corresponding to string $x$. In other words, for $M_{x_1\cdots x_{i-1}}$, we replace each string $\ketbra{x}{x}_{\spa{X}_1\otimes\cdots\otimes\spa{X}_m}$ with $\ketbra{1}{1}_{\spa{X}_{1}\otimes\cdots\otimes\spa{X}_{\abs{x}}}\otimes\ketbra{0}{0}_{\spa{X}_{\abs{x}+1}\otimes\cdots\otimes\spa{X}_{2^m-1}}$. Denote the resulting Hamiltonian as $H_1$.
	
	To ensure states in $\spa{X}'$ follow this encoding, add a weighted version of Kitaev's~\cite{KSV02} penalty Hamiltonian, introduced in Section~\ref{scn:preliminaries},
	\[
		\hstab=3\epsilon\sum_{j=1}^{2^m-2}\ketbra{0}{0}_j\otimes\ketbra{1}{1}_{j+1},
	\]
	 i.e., our final Hamiltonian is $H'=H_1+\hstab$. To show that $H'$ satisfies the same properties as $H$ as stated in the claim, we follow the analysis of Kitaev~\cite{KSV02}. Namely, partition the space $\spa{X}'\otimes\spa{Y}$ into orthogonal spaces $\spa{S}$ and $\spa{S}^\perp$ corresponding to the space of valid and invalid unary encodings of $\spa{X}'$, respectively. Since $H_1$ and $\hstab$ act invariantly on $\spa{S}$ and $\spa{S}^\perp$, we can consider $\spa{S}$ and $\spa{S}^\perp$ separately. In $\spa{S}$, $H'$ is identical to $H$, implying the claim. In $\spa{S}^\perp$, the smallest non-zero eigenvalue of $\hstab$ is at least $3\epsilon$. Thus, since $H_1\succeq 0$, if we can show that the smallest eigenvalue of $H$ is at most $3\epsilon-\epsilon/4^m$, we have shown the claim (since, in particular, we will have satisfied statement 2 of our claim). To show this bound on the smallest eigenvalue, suppose $x$ is all zeros, i.e. set register $\spa{X}_1\otimes\cdots\otimes\spa{X}_m$ for $H$ to all zeros. Then, each term $M_{0_1\cdots 0_{i-1}}$ yields an energy penalty of exactly $2\epsilon$, yielding an upper bound on the smallest eigenvalue of $H$ of
$
    2\epsilon\sum_{k=0}^{m-1}\frac{1}{4^k}\leq \frac{8}{3}\epsilon=3\epsilon-\epsilon/3.
$
\end{proof}

\section{Lower bounds (hardness results)}\label{scn:hardness}
\subsection{$\PQMA$-completeness of \app}\label{scn:1local}

In this section, we restate and prove Theorem~\ref{thm:main1}. For this, we first show an extension of the Projection Lemma of Kempe, Kitaev, Regev~\cite{KKR06}.

\begin{lemma}[Extended Projection Lemma (\emph{cf.~\cite{KKR06}})]\label{cor:kkr}
	Let $H=H_1+H_2$ be the sum of two Hamiltonians operating on some Hilbert space $\spa{H}=\spa{S}+\spa{S}^\perp$. The Hamiltonian $H_1$ is such that $\spa{S}$ is a zero eigenspace and the eigenvectors in $\spa{S}^\perp$ have eigenvalue at least $J>2\snorm{H_2}$. Let $K:=\snorm{H_2}$. Then, for any $\delta\geq0$ and $\ket{\psi}$ satisfying $\bra{\psi}H\ket{\psi}\leq \lambda(H)+\delta$, there exists a $\ket{\psi'}\in \spa{S}$ such that:
	\begin{itemize}
        \item (Ground state energy bound)
        \[
		\lambda(H_2|_{\spa{S}})-\frac{K^2}{J-2K}\leq \lambda(H)\leq \lambda(H_2|_{\spa{S}}),
	\]
where $\lambda(H_2|_{\spa{S}})$ denotes the smallest eigenvalue of $H_2$ restricted to space $\spa{S}$.
        \item (Ground state deviation bound)
        \[
            \abs{\brakett{\psi}{\psi'}}^2\geq {1-\left(\frac{K+\sqrt{K^2+\delta(J-2K)}}{J-2K}\right)^2}.
            \]
        \item  (Energy obtained by perturbed state against $H$)
        \[
            \bra{\psi'}H\ket{\psi'}\leq\lambda(H)+\delta+2K\frac{K+\sqrt{K^2+\delta(J-2K)}}{J-2K}.
        \]
    \end{itemize}
\end{lemma}
\begin{proof}
 The first claim is the original Projection Lemma~\cite{KKR06} (which we do not require in this work, but which we also state above for completeness). The second and third claims we prove here. Consider arbitrary $\ket{\psi}$ such that $\bra{\psi}H\ket{\psi}\leq \lambda(H)+\delta$. We can write $\ket{\psi}=\alpha_1\ket{\psi_1}+\alpha_2\ket{\psi_2}$ for unit vectors $\ket{\psi_1}\in \spa{S}$, $\ket{\psi_2}\in \spa{S}^\perp$, $\alpha_1,\alpha_2\in\reals$, $\alpha_1,\alpha_2\geq 0$, and $\alpha_1^2+\alpha_2^2=1$. Following the proof of the Projection Lemma of ~\cite{KKR06}, we have
	\begin{eqnarray}
				\bra{\psi}H\ket{\psi}&\geq& \bra{\psi}H_2\ket{\psi}+J\alpha_2^2\nonumber\\
				&=&(1-\alpha_2^2)\bra{\psi_1}H_2\ket{\psi_1}+2\alpha_1\alpha_2\operatorname{Re}\bra{\psi_1}H_2\ket{\psi_2}+\nonumber\\
&&\alpha_2^2\bra{\psi_2}H_2\ket{\psi_2}+J\alpha_2^2\nonumber\\
				&\geq&\bra{\psi_1}H_2\ket{\psi_1}-K(\alpha_2^2+2\alpha_2+\alpha_2^2)+J\alpha_2^2\nonumber\\
				&=&\bra{\psi_1}H\ket{\psi_1}+(J-2K)\alpha_2^2-2K\alpha_2\label{eqn:key1},
	\end{eqnarray}
    where the last equality follows since $\ket{\psi_1}\in \spa{S}$. Since by assumption $\bra{\psi}H\ket{\psi}\leq \lambda(H)+\delta$, Equation~(\ref{eqn:key1}) implies $\lambda(H)+\delta\geq \bra{\psi_1}H\ket{\psi_1}+ (J-2K)\alpha_2^2-2K\alpha_2$. Combined with the fact $\lambda(H)\leq\bra{\psi_1}H\ket{\psi_1}$, we have
	\[
		0\geq\lambda(H)- \bra{\psi_1}H\ket{\psi_1}\geq (J-2K)\alpha_2^2-2K\alpha_2-\delta,
	\]
	which holds only if
	$
		\abs{\alpha_2}\leq \frac{K+\sqrt{K^2+\delta(J-2K)}}{J-2K}.
	$
	Thus, setting $\ket{\psi'}:=\ket{\psi_1}$ yields the first claim. To see that $\ket{\psi'}$ also satisfies the second claim, by Equation~(\ref{eqn:key1}) we have
	\[
		\lambda(H)+\delta\geq \bra{\psi}H\ket{\psi}\geq  \bra{\psi_1}H\ket{\psi_1} + (J-2K)\alpha_2^2-2K\alpha_2.
	\]
The assumption $J>2K$ implies $(J-2K)\alpha_2^2\geq 0$, and so
    \[
        \bra{\psi_1}H\ket{\psi_1} \leq  \lambda(H)+\delta + 2K\abs{\alpha_2}.
    \]
    Plugging in our upper bound on $\abs{\alpha_2}$ now yields the second claim.
\end{proof}

With Lemma~\ref{cor:kkr} in hand, we now prove the main result of this section.

\begin{reptheorem}{thm:main1}
	$\app$~is $\PQMA$-complete for $k=5$ and $l=1$, i.e., for $5$-local Hamiltonian $H$ and $1$-local observable $A$.
\end{reptheorem}

\begin{proof}
Containment in $\PQMA$ was shown\footnote{Intuitively, the idea is as follows. First, the P machine uses $O(\log n)$ queries to perform a version of binary search tailored to promise problems to estimate the ground state energy of the given Hamiltonian $H$ to within additive inverse polynomial error; call this estimate $g$. Then, the P machine makes one final QMA query: Does $H$ have a ground state $\ket{\psi}$ with ground state energy approximately $g$ and with small expectation against the given observable $A$?} for $k,l\in O(\log n)$ in \cite{A14}, for $n$ the input size. We show $\PQMA$-hardness of $\app$ via a polynomial-time mapping/Karp reduction.

Let $U'$ be an arbitrary $\PQMA$ circuit\footnote{As noted in Section~\ref{scn:lemmas}, the base class P is normally defined using a polynomial-time deterministic Turing machine. In this paper, we assume the action of the P machine is, without loss of generality, instead expressed via a polynomial-time uniform family of quantum circuits.} for instance $\Pi$, such that $U'$ acts on workspace register $W$ and query result register $Q$. Suppose $U'$ consists of $L'$ gates and makes $m=c\log(n)$ queries, for $c\in O(1)$. Without loss of generality, $U'$ can be simulated with a similar unitary $U$ which treats $Q$ as a \emph{proof} register which it does not alter at any point: Namely, $U$ does not have access to a $\QMA$ oracle, but rather reads bit $Q_i$ whenever it desires the answer to the $i$-th query. Thus, if a correct query string $y_1\cdots y_m$ corresponding to an execution of $U'$ on input $x$ is provided in $Q$ as a ``proof'', then the output statistics of $U'$ and $U$ are identical. We can also assume that $Q$ is encoded not in binary, but in unary. Thus, $Q$ consists of $2^m-1\in\poly(n)$ bits. For simplicity, however, in our discussion we will speak of $m$-bit query strings $y=y_1\cdots y_m$ in register $Q$.

\paragraph{The construction.} We now construct our instance of $\app$. First, we map $U$ to a $5$-local Hamiltonian $H_1$ via a modification of the circuit-to-Hamiltonian construction of Kitaev~\cite{KSV02}, such that $H_1$ acts on registers $W$ (workspace register), $Q$ (proof register), and $C$ (clock register). Recall (Section~\ref{scn:preliminaries}) that Kitaev's construction outputs Hamiltonian terms $\hin+\hprop+\hstab+\hout$. Set $H_1=\Delta(\hin+\hprop+\hstab)$ for $\Delta$ to be set as needed. In contrast to~\cite{KSV02}, it is crucial that $\hout$ be omitted from $H_1$, as we require our final Hamiltonian $H$ to enforce a certain structure on the ground space  \emph{regardless} of whether the computation should accept or reject. The job of ``checking the output'' will instead be delegated to the observable $A$. Formally, following the discussion in Section~\ref{scn:preliminaries} regarding Equation~(\ref{eqn:hist}), $H_1$ has a non-trivial null space, which is its ground space, consisting of history states $\ket{\psihist}$ (see Equation~(\ref{eqn:hist})) which simulate $U$ on registers $W$ and $Q$. These history states correctly simulate $U'$ \emph{assuming that} $Q$ is initialized to a correct proof.

	To thus enforce that $Q$ is initialized to a correct proof, let $H_2$ be our variant $H'$ of Ambainis's query Hamiltonian from Lemma~\ref{l:amb}, such that $H_2$ acts on registers $Q$ and $Q'$, where for clarity $Q=(\complex^2)^{\otimes 2^m-1}$ for $m\in O(\log n)$; moreover, setting $Q'=\spa{Y}$ from Lemma~\ref{l:amb} enforces the ground space of $H_2$ to contain unary encodings of valid strings of query answers $y_1\cdots y_m$ for input $\Pi$ in register $Q$, as desired. Our final Hamiltonian is $H=H_1+H_2$, which is $5$-local since $H_1$ is $5$-local and $H_2$ is $4$-local.

Suppose without loss of generality that $U$'s output qubit is $W_1$, which is set to $\ket{0}$ until the final time step, in which the correct output is copied to it. Then, set observable $A=(I+Z)/2$ such that $A$ acts on qubit $W_1$. Set $a=1-1/(L+1)$, and $b=1-1/2L$ for $L$ the number of gates in $U$. Fix $\eta\geq\max(\snorm{H_2},1)$ (such an $\eta$ can be efficiently computed by applying the triangle inequality and summing the spectral norms of each term of $H_2$ individually). Set $\Delta= 128L^3\eta\gamma$ for $\gamma$ a monotonically increasing polynomial function of $L$ to be set as needed. Finally, set $\delta=1/\Delta$. This completes the construction.
\\\vspace{-2mm}
	
\noindent\textbf{Correctness.} We now prove correctness, meaning that if $\Pi$ is a YES (NO) instance of a $\PQMA$ problem, then $\app(H, A, 5, 1, a, b, \delta)$ is a YES (NO) instance of $\app$. The following pair of lemmas together show this. They both assume the terminology set up in this proof thus far.

\begin{lemma}
    Suppose $\Pi$ is a YES instance of a $\PQMA$ problem. Then, $(H,A,k,l,a,b,\delta)$ is a YES instance of $\app$.
\end{lemma}
\begin{proof}
    By Lemma~\ref{l:amb}, the ground space of $H_2$ is spanned by states of the form $\ket{\hat{x}}_Q\otimes\ket{\phi}_{Q'}$ where $\hat{x}$ is a correct query string encoded in unary. Fix an arbitrary such ground state $\ket{\hat{x}}_Q\otimes\ket{\phi}_{Q'}$. Since $\Pi$ is a YES instance, setting $Q$ to $\hat{x}$ in this manner causes $U=U_L\cdots U_1$ to accept with certainty. Consider a version of the history state $\ket{\psihist}$ from Equation~\eqref{eqn:hist} on registers $W$ (workspace), $C$ (clock), $Q$, and $Q'$ ($Q$ and $Q'$ together are the ``proof register'', where recall $U$ reads query answers from $Q$ but does not access $Q'$ at any point),
\[
    \ket{\psihist}=\frac{1}{\sqrt{L+1}}\sum_{t=0}^LU_t\cdots U_1\ket{\hat{x}}_Q\ket{\phi}_{Q'}\ket{0\cdots 0}_W\ket{\hat{t}}_C,
\]
which by construction lies in the ground space of $H_1$. Since $U$ can read but does not alter the contents of $Q$, the history state has the tensor product form
\[
     \ket{\psihist}=\ket{\psihist'(x)}_{W,C}\otimes\ket{\hat{x}}_Q\otimes \ket{\phi}_{Q'}
\]
for appropriately defined $\ket{\psihist'(x)}_{W,C}$. Since the state encodes a correct query string in register $Q$, $\ket{\psihist}$ lies in the ground space of $H_2$. We conclude that $\ket{\psihist}$ is in the joint ground space of $H_1$ and $H_2$, and hence in the ground space of $H=H_1+H_2$. Finally, by assumption, the output qubit $W_1$ of the construction is set to $\ket{0}$ in timesteps 0 to $L-1$. In timestep $L$, since $U$ accepts with certainty given the correct query string $\hat{x}$,  $W_1$ is set to $\ket{1}$. It follows that, as desired,
\[
    \bra{\psihist}A\ket{\psihist}=\frac{1}{2}\bra{\psihist}(I+Z)\ket{\psihist}=1-\frac{1}{L+1}=a.
\]
\end{proof}	

\begin{lemma}\label{l:no}
    Suppose $\Pi$ is a NO instance of a $\PQMA$ problem. Then, $(H,A,k,l,a,b,\delta)$ is a NO instance of $\app$.
\end{lemma}
\begin{proof}
	Consider any low energy state $\ket{\psi}$ satisfying $\bra{\psi}H\ket{\psi}\leq \lambda(H)+\delta$. By Lemma~\ref{l:GKgap}, the smallest non-zero eigenvalue of $H_1$ is at least $J=\pi^2\Delta/(64L^3)= \pi^2 \eta\gamma/64$. Recalling that $\delta=1/\Delta$, apply Lemma~\ref{cor:kkr} to obtain that there exists a valid history state $\ket{\psi'}$ on $W$, $C$, $Q$, and $Q'$ such that $\abs{\brakett{\psi}{\psi'}}^2\geq 1-O(\gamma^{-2})$ (see Lemma~\ref{l:proof} in Appendix~\ref{scn:appendix} for a proof), which by Equation~(\ref{eqn:enorm}) implies
\begin{equation}\label{eqn:2}
    \trnorm{\ketbra{\psi}{\psi}-\ketbra{\psi'}{\psi'}}\leq\frac{c}{\gamma}
\end{equation}
for some constant $c>0$. By definition, such a history state $\ket{\psi'}$ simulates $U$ given ``quantum proof'' $\ket{\phi}_{Q,Q'}$ in registers $Q$ and $Q'$, i.e.
\[
    \ket{\psi'}=\sum_{t} U_t\cdots U_1 \ket{\phi}_{Q,Q'}\ket{0\cdots 0}_W\ket{t}_C.
\]
Lemma~\ref{cor:kkr} and the proof of Lemma~\ref{l:proof} also give that
	\[
	\bra{\psi'}H\ket{\psi'} \leq \lambda(H) + \delta + ~O\left(\frac{\eta}{\gamma^2}\right) =: \lambda(H) + \delta + \gamma' .
	\]

Unfortunately, \emph{a priori} the state $\ket{\phi}$ in the proof register $(Q,Q')$ of $\ket{\psi'}$ is arbitrary. Let us now approximate $\ket{\psi'}$ with another history state $\ket{\psi''}$, which contains a string of \emph{correct} query answers in $Q$. This is accomplished by the following lemma, which assumes the definitions introduced in this proof thus far.

\begin{lemma}\label{l:other}
    There exists a history state $\ket{\psi''}$ such that
    \begin{equation}\label{eqn:4}
        \trnorm{\ketbra{\psi}{\psi}-\ketbra{\psi''}{\psi''}}\leq\frac{c}{\gamma}+2\sqrt{\frac{4^m(\delta+\gamma')}{\epsilon}}
    \end{equation}
    and where register $Q$ of $\ket{\psi''}$ contains only correct query answers when written in the standard basis.
\end{lemma}
\begin{proof}
By Lemma~\ref{l:amb}, the ground space $\spa{G}$ of $H_2$ is contained in the span of states of the form $\ket{\hat{x}}_Q\otimes\ket{\phi'}_{Q'}$ where $\hat{x}$ is a correct query string encoded in unary. Since the ground spaces of $H_1$ and $H_2$ have non-empty intersection, i.e. history states acting on ``quantum proofs'' from $\spa{G}$ (which lie in the null space of $H_1$ and obtain energy $\lambda(H_2)$ against $H_2$), we know $\lambda(H)=\lambda(H_2)$. Thus, since $H_1\succeq 0$,
\begin{equation}\label{eqn:3}
    \bra{\psi'}H_2\ket{\psi'}\leq \bra{\psi'}H\ket{\psi'}\leq\lambda(H_2)+(\delta+\gamma').
\end{equation}
 Write $\ket{\phi}=\alpha\ket{\phi_1}+\beta\ket{\phi_2}$ for
    $\ket{\phi_1}\in\operatorname{Span}\set{\ket{\hat{x}}_Q\otimes\ket{\phi'}_{Q'}\mid \text{correct query string } x}$ and $\ket{\phi_2}\in\operatorname{Span}\set{\ket{\hat{x}}_Q\otimes\ket{\phi'}_{Q'}\mid \text{incorrect query string } x}$ ($\ket{\phi_1}$, $\ket{\phi_2}$ normalized), $\alpha ,\beta\in \complex , \abs{\alpha}^2+\abs{\beta}^2=1$.
Since any history state $\ket{\psi'}$, for any amplitudes $\alpha_{x}$ and unit vectors $\ket{\phi'_x}$, has form
\[
    \sum_{t,x}\alpha_{x}U_t\cdots U_1 \ket{0\cdots 0}_W\ket{t}_C\ket{\hat{x}}_{Q}\ket{\phi'_x}_{Q'}=\sum_{x}\alpha_{x}\ket{\psihist'(x)}_{W,C}\ket{\hat{x}}_{Q}\ket{\phi'_x}_{Q'}
\]
 (i.e. for any fixed $x$, $\ket{\hat{x}}_Q$ is not altered), and since $H_2$ is block-diagonal with respect to the standard basis in $Q$, by Equation~(\ref{eqn:3}) and Lemma~\ref{l:amb} we have
\begin{eqnarray*}
    \lambda(H_2)+(\delta+\gamma')&\geq& \bra{\psi'}H_2\ket{\psi'}= \abs{\alpha}^2\bra{\phi_1}H_2\ket{\phi_1}+\abs{\beta}^2\bra{\phi_2}H_2\ket{\phi_2}\\
    &\geq&\abs{\alpha}^2\lambda(H_2)+\abs{\beta}^2\left(\lambda(H_2)+\frac{\epsilon}{4^m}\right),
\end{eqnarray*}
which implies $\abs{\beta}^2\leq 4^m(\delta+\gamma')/\epsilon$. Thus, defining $\ket{\psi''}$ as the history state for ``proof'' $\ket{\phi_1}_{Q,Q'}$, we have that $\trnorm{\ketbra{\psi}{\psi}-\ketbra{\psi''}{\psi''}}$ is at most
\begin{equation*}
    \trnorm{\ketbra{\psi}{\psi}-\ketbra{\psi'}{\psi'}}+\trnorm{\ketbra{\phi}{\phi}-\ketbra{\phi_1}{\phi_1}}
    \leq\frac{c}{\gamma}+2\sqrt{\frac{4^m(\delta+\gamma')}{\epsilon}},
\end{equation*}
which follows from the triangle inequality, Equation~(\ref{eqn:2}), and the structure of the history state.
\end{proof}

With Lemma~\ref{l:other} in hand, we are ready to complete the proof of Lemma~\ref{l:no}. Observe that increasing $\gamma$ by a polynomial factor decreases $\gamma'$ by a polynomial factor,\footnote{In a previous version of this article, $\gamma'\in\Theta(\snorm{H}/\gamma)$. But this was circularly defined (recall $H=H_1+H_2$), since $\snorm{H_1}$ scales with $\Delta$, which in turn scales with $\gamma$. In the present version of the article, $\gamma'\in O\left(\eta/\gamma^2\right)$, where $\eta$ depends only on $\snorm{H_2}$, and is thus independent of $\gamma$. Hence, increasing $\gamma$ now correctly decreases $\gamma'$.} so $\delta+\gamma'$ in Equation~(\ref{eqn:4}) also decreases by a polynomial factor. Thus, set $\gamma$ as a large enough polynomial in $L$ such that
\begin{equation}\label{eqn:5}
    \frac{c}{\gamma}+2\sqrt{\frac{4^m(\delta+\gamma')}{\epsilon}}\leq \frac{1}{2L}.
\end{equation}
 Since $U$ rejects any correct query string (with certainty) in the NO case, and since $\ket{\psi''}$ is a valid history state whose $Q$ register is a superposition over correct query strings (all of which must lead to reject), we conclude that $\bra{\psi''}A\ket{\psi''}=1$. Moreover, we have that
 \[
 \abs{\trace(A\ketbra{\psi}{\psi})-\trace(A\ketbra{\psi''}{\psi''})}\leq
		\snorm{A}\trnorm{\ketbra{\psi}{\psi}-\ketbra{\psi''}{\psi''}}\leq \frac{1}{2L},
\]
	where the first inequality follows from H\"{o}lder's inequality, and the second by Equations~(\ref{eqn:4}) and~(\ref{eqn:5}). We conclude that $\bra{\psi}A\ket{\psi}\geq 1-1/(2L)$, completing the proof.
\end{proof}
\end{proof}

\subsection{$\PQMA$-completeness of \tpc}\label{scn:corr}

We now define \tpc\ and show that it is $\PQMA$-complete using similar techniques to Section~\ref{scn:1local}. For brevity, define $f(\ket{\psi},A,B):= \bra{\psi}A\otimes B \ket{\psi}-\bra{\psi}A\ket{\psi}\bra{\psi}B\ket{\psi}$.

\begin{definition}[\tpc$(H,A,B,k,l,a,b,\delta)$]
	Given a $k$-local Hamiltonian $H$, $l$-local observables $A$ and $B$, and real numbers $a$, $b$, and $\delta$ such that $a-b\geq n^{-c}$ and $\delta\geq n^{-c'}$, for $n$ the number of qubits $H$ acts on and $c,c'\geq 0$ some constants, decide:

	\begin{itemize}
		\item If $H$ has a ground state $\ket{\psi}$ satisfying $f(\ket{\psi},A, B)\geq a$, output YES.
		\item If for any $\ket{\psi}$ satisfying $\bra{\psi}H\ket{\psi}\leq \lambda (H)+\delta$ it holds that $f(\ket{\psi}, A, B)\leq b$, output NO.
	\end{itemize}
\end{definition}

\noindent We now prove Theorem~\ref{thm:main2} by showing $\PQMA$-hardness in Lemma~\ref{lem:2-PHard} and containment in $\PQMA$ in Lemma~\ref{lem:2-PIn}.

\begin{lemma} \label{lem:2-PHard}
    \tpc\ is $\PQMA$-hard for $k=5$ and $l=1$, i.e., for $5$-local Hamiltonian $H$ and $1$-local observables $A$ and $B$.
\end{lemma}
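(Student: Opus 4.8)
The plan is to mimic the proof of Theorem~\ref{thm:main1}, reducing an arbitrary $\PQMA$ instance $\Pi$ to a \tpc\ instance by reusing essentially the same Hamiltonian $H=H_1+H_2$, where $H_1=\Delta(\hin+\hprop+\hstab)$ comes from Kitaev's circuit-to-Hamiltonian map applied to the ``oracle-free'' circuit $U$ (with $\hout$ omitted), and $H_2$ is the $4$-local query Hamiltonian of Lemma~\ref{l:amb}. As before, $H$ is $5$-local, its ground space consists of history states whose proof register $Q$ is a superposition over \emph{correct} query strings, and Corollary~\ref{cor:kkr} together with Lemma~\ref{l:GKgap} guarantees that any $\ket{\psi}$ with $\bra{\psi}H\ket{\psi}\le\lambda(H)+\delta$ is trace-norm close to such a valid history state $\ket{\psi''}$ whose $Q$-register is supported on correct query strings. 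The only genuinely new ingredient is the choice of $1$-local observables $A,B$ and thresholds $a,b$ so that $f(\ket{\psi},A,B)=\langle A\otimes B\rangle-\langle A\rangle\langle B\rangle$ distinguishes the YES and NO cases.

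The key idea for the observables: I want the correlation function to detect whether the output qubit $W_1$ is in state $\ket{0}$ (reject) or $\ket{1}$ (accept), but a single local $Z$-type observable on $W_1$ only measures $\langle Z\rangle$, not a correlation. So the plan is to use a second qubit as a ``reference.'' Concretely, augment $U$ so that at the final time step the output bit is copied to \emph{two} workspace qubits $W_1$ and $W_2$ (this is a trivial modification keeping the circuit $O(1)$-local and only changing $L$ by a constant), and set $A=B=(I+Z)/2$ acting on $W_1$ and $W_2$ respectively. Then for a valid history state $\ket{\psi''}$ whose computation rejects on every correct query string, $W_1$ and $W_2$ are both $\ket{0}$ throughout, so $\langle A\rangle=\langle B\rangle=\langle A\otimes B\rangle=1/(L+1)$ and $f=1/(L+1)-1/(L+1)^2\approx 1/L$; whereas if the computation accepts on every correct query string, $W_1,W_2$ are both flipped to $\ket{1}$ only at the final step, giving the same value! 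That does not separate the cases, so I would instead break the symmetry: keep $A=(I+Z)/2$ on $W_1$, but take $B$ on a qubit that is \emph{always} $\ket{1}$ in the honest history (or use $B=(I-Z)/2$), so that in one case $\langle A\otimes B\rangle$ tracks $\langle A\rangle$ and in the other it is forced to $0$. The cleanest choice is to have the circuit maintain a flag qubit $W_2$ initialized to $\ket{1}$ and, at the final step, set $W_2\leftarrow W_2\wedge(\text{output})$; then $\langle A\otimes B\rangle=\langle B\rangle\approx$ (fraction of history in the final step) times the accept probability, while $\langle A\rangle\langle B\rangle$ is a product of two such small quantities, making $f$ noticeably positive exactly when the computation accepts. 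I would compute both expectations explicitly for the exact history state, obtain a constant-factor (in $1/L$) gap between the YES value and the NO value, and set $a,b$ to be $1/\poly(L)$-separated thresholds strictly inside that gap.

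The correctness argument then splits as in Theorem~\ref{thm:main1}. In the YES case I exhibit, as before, a genuine ground state of $H$ of product form $\ket{\psihist'(x)}_{W,C}\otimes\ket{\hat x}_Q\otimes\ket{\phi}_{Q'}$ for a correct query string $x$ on which $U$ accepts, and verify $f(\ket{\psihist'},A,B)\ge a$ by direct calculation on the history state. In the NO case, I take any $\ket{\psi}$ with $\bra{\psi}H\ket{\psi}\le\lambda(H)+\delta$, invoke Corollary~\ref{cor:kkr}, Lemma~\ref{l:GKgap}, and Lemma~\ref{l:amb} exactly as in the previous proof to get a valid history state $\ket{\psi''}$ (proof register supported on correct query strings, all leading to reject) with $\trnorm{\ketbra{\psi}{\psi}-\ketbra{\psi''}{\psi''}}\le 1/\poly(L)$ after choosing $\gamma$ (hence $\Delta$) a large enough polynomial; then I bound $\abs{f(\ket{\psi},A,B)-f(\ket{\psi''},A,B)}$ by a constant times this trace distance using H\"older's inequality (noting $f$ is a sum of a bilinear term and a product of two linear terms in the state, each $1$-Lipschitz in trace norm up to factors of $\snorm{A},\snorm{B}\le 1$), and conclude $f(\ket{\psi},A,B)\le b$.

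The main obstacle I anticipate is precisely the design of $A$, $B$, and the circuit modification so that the correlation function — not merely a one-point expectation — actually separates accept from reject by an inverse-polynomial margin, \emph{and} is robust: because $f$ is not linear in $\ketbra{\psi}{\psi}$, I must make sure the $\langle A\rangle\langle B\rangle$ cross-term cannot conspire to shrink the gap, and that the perturbation analysis (trace-distance closeness of $\ket{\psi}$ to $\ket{\psi''}$) still controls $f$ to within $o(1/L)$. Verifying the Lipschitz bound on $f$ and bookkeeping the various inverse-polynomial quantities ($\epsilon$, $4^m$, $\delta$, $\gamma'$, the final gap) so that everything fits inside the $(a,b)$ window is the routine-but-delicate part; once $A,B$ are chosen correctly, the rest is a transcription of the Theorem~\ref{thm:main1} argument.
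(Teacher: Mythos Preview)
Your overall framework is correct and mirrors the paper exactly: reuse $H=H_1+H_2$ from Theorem~\ref{thm:main1}, tweak the circuit, and rerun the Corollary~\ref{cor:kkr}/Lemma~\ref{l:amb} closeness argument in the NO case (including the Lipschitz-in-trace-norm bound on $f$, which works as you say). The gap is precisely where you yourself flag the ``main obstacle'': you never exhibit a verified choice of circuit modification and observables. Your first attempt (copy the output to two qubits) you correctly discard. Your second attempt (an AND-flag qubit) \emph{can} be made to work, but the reasoning you give for it is off: with $A=\ketbra{0}{0}_{W_1}$ the output qubit is $\ket{0}$ at all but one time step in \emph{both} cases, so $\langle A\rangle\approx 1$ is not ``small,'' and $\langle A\otimes B\rangle\neq\langle B\rangle$ in general. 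If you carry the computation through, the AND construction does give a YES/NO separation, but only of order $1/L^2$, and not for the reason you state.

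The paper's construction is cleaner and sidesteps this bookkeeping entirely: append to $U$ a gadget that, \emph{controlled on the output qubit} $W_1$, prepares the Bell state $\ket{\phi^+}=(\ket{00}+\ket{11})/\sqrt{2}$ on two fresh ancillas $W_2,W_3$, and take $A=Z_{W_2}$, $B=Z_{W_3}$. In the YES case the last several time steps carry $\ket{\phi^+}$ on $W_2W_3$, for which $\langle Z\otimes Z\rangle=1$ while each marginal $\langle Z\rangle=0$, driving $f$ up to order $1/L$. In the NO case $W_2W_3$ remains $\ket{00}$ at every time step, so $\langle Z\otimes Z\rangle=\langle Z_{W_2}\rangle=\langle Z_{W_3}\rangle=1$ and $f=0$ for the ideal history state (hence $f\le O(1/L)$ after perturbation, with thresholds $a=3/(L+13)$, $b=1/(L+13)$). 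The key idea you are missing is to use a genuinely \emph{quantum} correlation---a Bell pair---so that the two-point term is large while both one-point terms vanish; this is what makes the correlation function, rather than a single expectation, do the work.
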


\begin{proof}
	For an arbitrary $\PQMA$ circuit $U'$, define $U$ as in the proof of Theorem \ref{thm:main1}, consisting of $L$ one- and two-qubit gates. We modify $U$ as follows. Let $U$'s output qubit be denoted $W_1$. We add two ancilla qubits, $W_2$ and $W_3$, which are set to $\ket{00}$ throughout $U$'s computation. We then append to $U$ a sequence of six 2-qubit gates which, controlled on $W_1$, map $\ket{00}$ in $W_2W_3$ to $\ket{\phi^+}=(\ket{00}+\ket{11})/\sqrt{2}$, e.g. apply a controlled Hadamard gate and the 5-gate Toffoli construction from Figure~4.8 of~\cite{NC00}. Appending six identity gates on $W_1$, we obtain a circuit $V=V_{L+12}\cdots V_1$ which has $L+12$ gates. Finally, we construct $H=H_1+H_2$ as in the proof of Theorem \ref{thm:main1}, mapping $V$ to a 5-local Hamiltonian $H_1$ on registers $W$, $Q$, and $C$, and we set $A={Z}_{W_2}$ and $B={Z}_{W_3}$ for Pauli $Z$. Similar to the proof of Theorem~\ref{thm:main1}, set $\Delta= 128L^3\eta\gamma$ and $\delta=1/\Delta$, for $\gamma$ large enough so that
\begin{equation}\label{eqn:6}
    \frac{c}{\gamma}+2\sqrt{\frac{4^m(\delta+\gamma')}{\epsilon}}\leq \frac{1}{2(L+13)},
\end{equation}
for $\gamma '$ as defined in the proof of Theorem~\ref{thm:main1}.	Set $a=3/(L+13)$ and $b=1/(L+13)$. This completes the construction.

	To set up the correctness proof, consider history state $\ket{\psihist}$ for $V$ given quantum proof $\ket{\phi}_{Q,Q'}$, and define for brevity $\ket{\phi_t}:=V_t\cdots V_1\ket{\phi}_{Q,Q'}\ket{0\cdots 0}_W\ket{00}_{W_2W_3}$. Then,
	\begin{equation} \label{eq:firstTerm}
        \bra{\psihist}Z_{W_2}\otimes Z_{W_3}\ket{\psihist} = \frac{1}{L+13} \sum_{t=0}^{L+12}\trace\left((\ketbra{\phi_t}{\phi_t}_{Q,Q',W}\otimes\ketbra{t}{t}_C) Z_{W_2}\otimes Z_{W_3}\right),
	\end{equation}
since $Z_{W_2}\otimes Z_{W_3}$ acts invariantly on the clock register. Defining $\ket{v}:=\sum_{t=L+1}^{L+12}\ket{\phi_t}_{Q,Q',W}\ket{t}_C$, we have that since $W_2W_3$ is set to $\ket{00}$ for times $0\leq t\leq L$, Equation~(\ref{eq:firstTerm}) simplifies to
	\[
        \frac{1}{L+13}\left(L+1 + \bra{v}Z_{W_2}\otimes Z_{W_3}\ket{v}\right).
    \]
Thus, via similar reasoning
	\begin{eqnarray}
		f(\ket{\psihist},Z_{W_2},Z_{W_3})&=&\frac{1}{L+13}\left[(L+1) + \bra{v}Z_{W_2}\otimes Z_{W_3}\ket{v}\right] -\nonumber\\ && \frac{1}{(L+13)^2} \left[(L+1) + \bra{v}Z_{W_2}\ket{v}\right]\left[(L+1) + \bra{v}Z_{W_3}\ket{v})\right] .\label{eqn:reduced}
	\end{eqnarray}
	Suppose now that $\Pi$ is a YES instance. Then there exists a history state $\ket{\psihist}$ in the ground space of $H$ (i.e. with quantum proof $\ket{\phi}_{Q,Q'}=\ket{\hat{x}}_Q\otimes\ket{\phi'}_{Q'}$ for a correct query string $x$) for which $W_2W_3$ is set to $\ket{\phi ^+}$ in the final seven timesteps (since $U'$ is deterministic). Since $\bra{\phi^+}Z\otimes Z\ket{\phi^+}=1$ and $\bra{\phi^+}Z\otimes I\ket{\phi^+}=0$, via Equation~(\ref{eqn:reduced}) we have
	\[
        f(\ket{\psihist},Z_{W_2},Z_{W_3})\geq\frac{(L+1)-5+7}{L+13}-\frac{((L+1)+5)^2}{(L+13)^2} = \frac{1}{L+13}\left(4-\frac{49}{L+13}\right) ,
    \]
where the $\pm5$ terms correspond to timesteps $t=L+1,\ldots,L+5$ and use the fact that $\snorm{Z}=1$.
	
    Conversely, suppose $\Pi$ is a NO instance, and consider any $\ket{\psi}$ satisfying $\bra{\psi}H\ket{\psi}\leq \lambda(H)+\delta$. Then, as argued in the proof of Theorem~\ref{thm:main1}, there exists a history state $\ket{\psi''}$ on ``proof'' $\ket{\phi_1}_{Q,Q'}$ (consisting of a superposition of correct query strings) satisfying
\begin{equation}\label{eqn:13}
    \trnorm{\ketbra{\psi}{\psi}-\ketbra{\psi''}{\psi''}}
    \leq \frac{1}{2(L+13)},
\end{equation}
         by Equations~(\ref{eqn:4}), (\ref{eqn:5}) and~(\ref{eqn:6}). Since the history state $\ket{\psi''}$ has $W_2W_3$ set to $\ket{00}$ in all time steps, we know because $Z\otimes Z \ket{00}=\ket{00}$ that $f(\ket{\psi''},Z_{W_2},Z_{W_3})=0$. Thus, using Equation~(\ref{eqn:reduced}) and applying the H\"{o}lder inequality to {the second} term of $f(\ket{\psi},Z_{W_2},Z_{W_3})$ yields
    \[
        f(\ket{\psi},Z_{W_2},Z_{W_3})\leq 1-\left( 1-\frac{1}{2(L+13)}\right)^2 = \frac{1}{L+13}\left(1-\frac{1}{4(L+13)}\right) .
    \]
\end{proof}

\begin{lemma} \label{lem:2-PIn} \tpc\ is in $\PQMA$.
\end{lemma}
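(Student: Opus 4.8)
\medskip
\noindent\emph{Proof plan.} The plan is to run Ambainis's $\PQMA$ protocol for \app, modified to cope with the quadratic term $\bra{\psi}A\ket{\psi}\bra{\psi}B\ket{\psi}$ via a trick from the QMA(2) setting; as in \cite{A14} this will give containment for $k,l\in O(\log n)$. The $\PQMA$ machine first uses a binary search with $O(\log n)$ queries to a $\QMA$ oracle for the $k$-local Hamiltonian problem to compute a value $\lambda^{*}$ with $\lambda(H)\le\lambda^{*}\le\lambda(H)+\delta/\poly(n)$, handling (exactly as in the containment proof for \app) the fact that some of these queries may violate the $\QMA$ promise gap, so that every valid $\lambda^{*}$ in this window yields the same decision. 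It then makes a single final $\QMA$ query. The crucial point is that $f(\ket{\psi},A,B)$ depends on $\ket{\psi}$ only through the reduced density matrix $\rho_{S\cup T}$ of $\ket{\psi}$ on $S\cup T$, where $S=\operatorname{supp}(A)$ and $T=\operatorname{supp}(B)$ (a register of $O(l)\in O(\log n)$ qubits), since $f=\operatorname{Tr}(\rho_{S\cup T}(A\otimes B))-\operatorname{Tr}(\rho_{S}A)\operatorname{Tr}(\rho_{T}B)$. In the final query the prover therefore sends a $\poly(n)$-bit classical description of all $k$-local reduced density matrices $\set{\rho_{C}}$ of a claimed low-energy state of $H$ (with $C$ ranging over the supports of the terms of $H$ together with $S$, $T$, $S\cup T$), plus a $\QMA$ witness for the QMA-complete Consistency problem~\cite{L06} certifying that $\set{\rho_{C}}$ is (approximately) consistent with some global state $\rho$. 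Arthur accepts iff (i) the Consistency witness is accepted; (ii) $\sum_{C}\operatorname{Tr}(h_{C}\rho_{C})\le\lambda^{*}+\delta/\poly(n)$, the left side being the energy of any state consistent with $\set{\rho_{C}}$ and computed directly from the classical data; and (iii) the value $f$ computed from $\rho_{S\cup T}$ exceeds $\tfrac{a+b}{2}$. The $\PQMA$ machine outputs the oracle's answer to this query, which is itself a $\QMA$ problem, so $O(\log n)$ queries suffice overall.

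Completeness is straightforward: in a YES instance the true ground state $\ket{\psi}$ has $f(\ket{\psi},A,B)\ge a$; its RDMs are consistent (witnessed by $\ket{\psi}$), have energy $\lambda(H)\le\lambda^{*}$, and give $f\ge a>\tfrac{a+b}{2}$, so the honest classical data and Consistency witness pass all three checks, the $\poly(n)$-bit rounding shifting $f$ and the energy by only $o(a-b)$ and $o(\delta)$, which the threshold slack absorbs.

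Soundness is the main obstacle. If a prover makes Arthur accept on a NO instance, then with high probability $\set{\rho_{C}}$ is $\epsilon$-consistent with some global state $\rho$; since $\set{\rho_{C}}$ contains the RDMs on all supports of terms of $H$, $\operatorname{Tr}(H\rho)=\sum_{C}\operatorname{Tr}(h_{C}\rho_{C})+O(\epsilon)\le\lambda(H)+\delta/\poly(n)+O(\epsilon)$ and $f(\rho)>\tfrac{a+b}{2}>b$. The difficulty is that $\rho$ may be \emph{mixed}, whereas the NO-promise of \tpc\ only bounds $f$ on \emph{pure} low-energy states, and $f$ is neither convex nor concave in the state --- classically correlating two pure branches can strictly increase $f$, e.g.\ $\tfrac12(\ketbra{00}{00}+\ketbra{11}{11})$ versus $\ket{00},\ket{11}$ under $A=B=Z$. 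I would bridge this in two steps. First, a gentle-measurement argument: if $\operatorname{Tr}(H\rho)\le\lambda(H)+\eta$ then $\rho$ is $O(\sqrt{\eta/\delta}\,)$-close in trace norm to a state $\widetilde{\rho}$ supported on the span $\spa{L}$ of the eigenvectors of $H$ of energy at most $\lambda(H)+\delta$, and since $f$ is a fixed quadratic polynomial in the matrix entries of the state with $O(1)$-bounded coefficients, $\abs{f(\rho)-f(\widetilde{\rho})}$ is small. Second, I would argue that the maximum of $f$ over density matrices supported on $\spa{L}$ equals its maximum over \emph{pure} states in $\spa{L}$ --- which is at most $b$ by the NO-promise --- by invoking convexity of the joint numerical range of the three Hermitian operators $P_{\spa{L}}(A\otimes B)P_{\spa{L}}$, $P_{\spa{L}}AP_{\spa{L}}$, $P_{\spa{L}}BP_{\spa{L}}$ on $\spa{L}$; this convexity holds whenever $\dim\spa{L}\ge 3$, and the remaining cases $\dim\spa{L}\in\set{1,2}$ I would handle separately (the case $\dim\spa{L}=1$ being trivial, and the case $\dim\spa{L}=2$ by also exploiting the ``fatness'' of the low-energy set outside $\spa{L}$, or by a $\poly(n)$-cost preprocessing of $H$). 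Chaining the two steps gives $f(\rho)\le b+o(a-b)<\tfrac{a+b}{2}$, contradicting (iii). I expect this pure-versus-mixed reconciliation --- together with choosing the chain of $\poly(n)$ slack parameters ($\epsilon$, the binary-search precision, the energy margin, the gentle-measurement closeness) mutually compatibly --- to be the delicate heart of the proof; everything else is bookkeeping.
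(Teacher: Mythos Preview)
Your high-level architecture is exactly the paper's: binary search for $\lambda(H)$ with $O(\log n)$ oracle calls, followed by a single QMA query in which the prover sends classical $k$-local reduced density matrices together with a witness for Liu's Consistency problem (the Chailloux--Sattath trick), and the verifier checks energy and $f$ from the classical data. The paper gives only a proof sketch and does not go beyond this.

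Where you diverge is on soundness. The paper's sketch simply writes that the prover certifies consistency ``with a global $n$-qubit pure state $\ket{\psi}$,'' but Liu's Consistency problem only guarantees consistency with \emph{some} global state, possibly mixed, and the paper does not explain how purity is to be enforced. You correctly flag this as the crux: the NO promise of $\tpc$ bounds $f$ only on pure low-energy states, $f$ is neither convex nor concave, and a dishonest prover could in principle send RDMs of a mixed low-energy $\rho$ with large $f(\rho)$. Your proposed fix via convexity of the joint numerical range of the three operators $P_{\spa{L}}(A\otimes B)P_{\spa{L}}$, $P_{\spa{L}}AP_{\spa{L}}$, $P_{\spa{L}}BP_{\spa{L}}$ (Au-Yeung--Poon for $\dim\spa{L}\ge 3$) is a genuine addition not present in the paper, and it cleanly reduces the mixed case to the pure case when it applies. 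So on this point you are more careful than the paper's sketch, not less.

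The residual gap is your treatment of $\dim\spa{L}\le 2$. The suggestions you float (``fatness'' outside $\spa{L}$, or unspecified preprocessing of $H$) are not yet arguments; in particular, tensoring $H$ with an ancilla identity to enlarge $\spa{L}$ does not help, since pure states on the enlarged system correspond exactly to mixed states on the original one and you recover the same difficulty. You would need a concrete device here---for instance, a direct two-parameter analysis of $f$ over the Bloch ball, or an argument that a small generic perturbation of $H$ forces $\dim\spa{L}\ge 3$ without disturbing the YES/NO promise. Until one of these is written down, the $\dim\spa{L}=2$ case remains open in your plan.
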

\begin{proof}
	The proof combines ideas from Ambainis's original proof of $\app\in\PQMA$~\cite{A14} (see Theorem 6 therein) and a trick of Chailloux and Sattath~\cite{CS11} from the study of $\class{QMA(2)}$. We give a proof sketch here. Specifically, let $\Pi=(H,A,B,k,l,a,b,\delta)$ be an instance of \tpc. Similar to~\cite{A14}, the $\PQMA$ verification procedure proceeds, at a high level, as follows:
	\begin{enumerate}
		\item Use logarithmically many QMA oracle queries to perform a binary search to obtain an estimate $\gamma\in\reals$ satisfying $\lambda(H)\in[\gamma ,\gamma +\frac{\delta}{2}]$.
		\item Use a single QMA oracle query to verify the statement: ``There exists $\ket{\psi}$ satisfying (1) $\bra{\psi}H\ket{\psi}\leq \lambda(H)+\delta$ and (2) $f(\ket{\psi},A,B)\geq a$''.
	\end{enumerate}
The first of these steps is performed identically to the proof of $\app\in\PQMA$~\cite{A14}; we do not elaborate further here. The second step, however, differs from~\cite{A14} for the following reason. Intuitively,~\cite{A14} designs a QMA protocol which takes in many copies of a proof $\ket{\psi}$, performs phase estimation on each copy, postselects to ``snap'' each copy of $\ket{\psi}$ into a low-energy state $\ket{\psi_i}$ of $H$, and subsequently uses states $\set{\ket{\psi_i}}$ to estimate the expectation against an observable $A$. If the ground space of $H$ is degenerate, the states $\set{\ket{\psi_i}}$ may not all be identical. This does not pose a problem in~\cite{A14}, as there soundness of the protocol is guaranteed since {all} low energy states have high expectation against $A$. In our setting, however, if we use this protocol to individually estimate each of the terms $\bra{\psi}A\otimes B \ket{\psi}$, $\bra{\psi}A \ket{\psi}$, and $\bra{\psi} B \ket{\psi}$, soundness \emph{can} be violated if each of these three terms are not estimated using the same state $\ket{\psi_i}$, since the promise gap of the input does not necessarily say anything about the values of each of these three terms individually.

To circumvent this, we observe that to evaluate $f(\ket{\psi},A,B)$, we do not need the ground state $\ket{\psi}$ itself, but only a classical description of its local reduced density matrices (a similar idea was used in~\cite{CS11} to verify the energy of a claimed product state proof against a local Hamiltonian in the setting of QMA(2)). Specifically, suppose $\Pi$ consists of a $k$-local Hamiltonian $H$ acting on $n$ qubits. Then, the prover sends classical descriptions of $k$-qubit density matrices $\set{\rho_S}$ for each subset $S\subseteq[n]$ of size $\abs{S}=k$, along with a QMA proof that the states $\set{\rho_S}$ are consistent with a global $n$-qubit pure state $\ket{\psi}$ (recall the problem of verifying consistency is QMA-complete~\cite{L06}). The verifier runs the QMA circuit for consistency, and assuming this check passes it uses the classical $\set{\rho_S}$ to classically verify that (1) $\bra{\psi}H\ket{\psi}\leq \lambda(H)+\delta$ and (2) $f(\ket{\psi},A,B)\geq a$ (since both of these depend only on the local states $\set{\rho_S}$).

\end{proof}

\subsection{$\PUQMA$-hardness of \spgap}\label{scn:spectralGap}

	We now restate and prove Theorem~\ref{thm:spgap}. We begin by defining $\spgap$ and \UQMA.
\begin{definition}[$\spgap(H,\alpha)$~(Ambainis~\cite{A14})]
	Given a Hamiltonian $H$ and a real number $\alpha\geq n^{-c}$ for $n$ the number of qubits $H$ acts on and $c>0$ some constant, decide:
	\begin{itemize}
		\item If $\lambda_2 - \lambda_1 \leq \alpha$, output YES.
		\item If $\lambda_2 - \lambda_1 \geq 2\alpha$, output NO.
	\end{itemize}
where $\lambda_2$ and $\lambda_1$ denote the second and first smallest eigenvalues of $H$, respectively.
\end{definition}
\noindent For clarity, if the ground space of $H$ is degenerate, then we define its spectral gap as $0$.
\begin{definition}[Unique QMA (UQMA)~(Aharonov \emph{et al.}~\cite{ABBS08})]
      We say a promise problem $A=(\ayes,\ano)$ is in Unique QMA if and only if there exist polynomials $p$, $q$ and a polynomial-time uniform family of quantum circuits $\set{Q_n}$, where $Q_n$ takes as input a string $x\in\Sigma^n$, a quantum proof $\ket{y}\in (\complex^2)^{\otimes p(n)}$, and $q(n)$ ancilla qubits in state $\ket{0}^{\otimes q(n)}$, such that:
    \begin{itemize}
    \item (Completeness) If $x\in\ayes$, then there exists a proof $\ket{y}\in (\complex^2)^{\otimes p(n)}$ such that $Q_n$ accepts $(x,\ket{y})$ with probability at least $2/3$, and for all $\ket{\hat{y}}\in (\complex^2)^{\otimes p(n)}$ orthogonal to $\ket{y}$, $Q_n$ accepts $(x,\ket{\hat{y}})$ with probability at most $1/3$.
    \item (Soundness) If $x\in\ano$, then for all proofs $\ket{y}\in (\complex^2)^{\otimes p(n)}$, $Q_n$ accepts $(x,\ket{y})$ with probability at most $1/3$.
    \end{itemize}
\end{definition}

The main theorem of this section is the following.
		
\begin{reptheorem}{thm:spgap}
	$\spgap$ is $\PUQMA$-hard for $4$-local Hamiltonians $H$ under polynomial-time Turing reductions (i.e. Cook reductions).
\end{reptheorem}

	\noindent We remark that Ambainis~\cite{A14} showed that $\spgap\in\PQMA$, and gave a claimed proof that $\spgap$ is $\PUQMA$-hard for $O(\log)$-local Hamiltonians under mapping reductions. ($\PUQMA$ is defined as $\PQMA$, except with a UQMA oracle in place of a QMA oracle.) As discussed in Section~\ref{scn:intro}, however, Ambainis' proof of the latter result does not hold if the $\PUQMA$ machine makes invalid queries (which in general is the case). Here, we build on Ambainis' approach~\cite{A14} to show $\PUQMA$-hardness of $\spgap$ under Turing reductions even when invalid queries are allowed, and we also improve the hardness to apply to $O(1)$-local Hamiltonians.

We begin by showing the following modified version of Lemma \ref{l:amb} tailored to UQMA (instead of QMA). It is crucial to observe that the mapping of Lemma~\ref{l:amb} which produces Hamiltonian $H$ is \emph{efficient}, i.e. $H$ can be computed in polynomial time. However, in Lemma~\ref{lem:spgap} below, our mapping to a Hamiltonian $H$ is \emph{not} clearly efficient, meaning the lemma only shows the \emph{existence} of $H$. Roughly, this is because the proof of Lemma~\ref{lem:spgap} proceeds by replacing invalid queries with ``dummy'' NO queries to obtain the desired spectral gap. But a polynomial-time machine alone cannot identify such invalid queries, and hence cannot simulate this mapping.

Henceforth, we assume that all calls to the \UQMA~oracle $Q$ are for an instance $(H,a,b)$ of the Unique-Local Hamiltonian Problem (U-LH) (which is complete for \UQMA~\cite{A14}): Is the ground state energy of $H$ at most $\epsilon$ with all other eigenvalues at least $3\epsilon$ (YES case), or is the ground state energy at least $3\epsilon$ (NO case), for $\epsilon\geq 1/\poly(n)$?
\begin{lemma}\label{lem:spgap}
	For any $x\in\set{0,1}^m$, let $\hat{x}$ denote its unary encoding. Then, for any $\PUQMA$ circuit $U$ acting on $n$ bits and making $m$ queries to a \UQMA~oracle, there exists a $4$-local Hamiltonian $H$ acting on space $(\complex^2)^{\otimes 2^m-1}\otimes\spa{Y}$ such that there exists a correct query string $x=x_1\cdots x_m$ such that:
	\begin{enumerate}
		\item The {unique} ground state of $H$ lies in subspace $\ketbra{\hat{x}}{\hat{x}}\otimes \spa{Y}$.
		\item The spectral gap of $H$ is at least $(\epsilon - \delta )/4^{m}$ for inverse polynomial $\epsilon,\delta$ with $\epsilon-\delta\geq 1/\poly(n)$.
		\item For all strings $x'\in\set{0,1}^m$, $H$ acts invariantly on subspace $\ketbra{\hat{x}'}{\hat{x}'}\otimes \spa{Y}$.
	\end{enumerate}
\end{lemma}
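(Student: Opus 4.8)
Following Lemma~\ref{l:amborig} and Lemma~\ref{l:amb}, the plan is to reuse Ambainis's query Hamiltonian of Equation~(\ref{eqn:amb1}) together with Kitaev's unary encoding of the $\spa{X}$ register, with two changes: the $2$-local query Hamiltonians $H_{\spa{Y}_i}^{i,y_1\cdots y_{i-1}}$ become (PSD) instances of U-LH, and the ``$\ketbra{0}{0}_{\spa{X}_i}$ branch'' operator $2\epsilon\,I_{\spa{Y}_i}$ --- maximally degenerate on $\spa{Y}_i$, which is harmless for Lemma~\ref{l:amb} but fatal once a \emph{unique} ground state is wanted --- is augmented by a $2\epsilon$-weight sum of single-qubit $\ketbra{1}{1}$ penalties on the qubits of $\spa{Y}_i$, so that on that branch the unique ground state is $\ket{0\cdots 0}_{\spa{Y}_i}$, with eigenvalue $2\epsilon$ and spectral gap $2\epsilon$ to the rest. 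On the ``$\ketbra{1}{1}_{\spa{X}_i}$ branch'' I keep $H_{\spa{Y}_i}^{i,y_1\cdots y_{i-1}}$, which as a U-LH instance has, in the YES case, a unique ground state of eigenvalue at most $\epsilon$ and spectral gap at least $2\epsilon$, and, in the NO case, all eigenvalues at least $3\epsilon$. Then, exactly as in Lemma~\ref{l:amb}, I encode $\spa{X}_1\otimes\cdots\otimes\spa{X}_m$ in unary on $2^m-1$ qubits and add the weighted chain penalty $\hstab$; using the boundary-detection rewriting of the $M$-terms (a projector ``bit $a$ is $1$ and bit $b+1$ is $0$'' on the unary register selects an entire prefix), each term of $H$ now touches at most two unary qubits and at most two qubits of $\spa{Y}$, so $H$ is $4$-local, agrees with the $O(\log n)$-local version on the valid-unary subspace $\spa{S}$, and on $\spa{S}^\perp$ has energy well above the bottom of the spectrum.

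Granting for the moment that every query of $U$ is valid (justified in the next paragraph), the analysis is block-by-block. Since $H$ is block-diagonal over the unary-encoded $\spa{X}$ register, within the block of a query string $y$ it decomposes as $\sum_{i=1}^m 4^{-(i-1)}h_i^{(y)}$ with $h_i^{(y)}$ acting only on $\spa{Y}_i$ (the forced-$\ket{0\cdots 0}$ operator above if $y_i=0$, and $H_{\spa{Y}_i}^{i,y_1\cdots y_{i-1}}$ if $y_i=1$). Hence the block's minimum energy is $\lambda_{\hat y}=\sum_i 4^{-(i-1)}\mu_i^{(y)}$ with $\mu_i^{(y)}$ the smallest eigenvalue of $h_i^{(y)}$, its within-block ground state is the tensor product of the per-register ground states (hence \emph{unique} when every $h_i^{(y)}$ is nondegenerate), and its within-block gap equals $\min_i 4^{-(i-1)}(\text{gap of }h_i^{(y)})$. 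For the block of a \emph{correct} query string every $h_i$ is either the forced operator (gap $2\epsilon$) or a valid-YES U-LH instance answered $1$ (gap $\ge 2\epsilon$), so that gap is at least $2\epsilon/4^m$. The exchange argument of Lemma~\ref{l:amborig} --- which only helps here, since the added penalties are never incurred by the optimal vector in an incorrect block while flipping an incorrectly answered bit to the correct answer still saves at least $\epsilon/4^{i-1}$ --- shows some correct $x$ attains the global minimum $\lambda$ and every incorrect query string's block sits at energy at least $\lambda+\epsilon/4^m$. Under the standing assumption the minimizing correct $x$ is moreover unique (see below), so combining the within-block gap with this cross-block separation gives item~2 (the $(\epsilon-\delta)/4^m$ form absorbing the play in the U-LH thresholds, the $\hstab$ weight, and the $\spa{S}^\perp$ analysis), item~1 follows as the unique ground state lies in $\ketbra{\hat x}{\hat x}\otimes\spa{Y}$, and item~3 is immediate from block-diagonality.

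The crux is precisely this \emph{global} uniqueness, which is why the standing assumption is needed: two things can go wrong, both caused solely by queries violating the U-LH promise. First, two distinct correct query strings could have blocks of equal (or $4^{-m}$-close) minimum energy --- impossible through valid queries, since a valid query answered correctly contributes exactly one of two well-separated values, but possible if an invalid query is answered ``$1$'' with ground energy near $2\epsilon$. Second, if the minimizing correct block sets $y_i=1$ for an \emph{invalid} query $i$, then $H_{\spa{Y}_i}^{i,\ldots}$ need not have a unique ground state, destroying the within-block gap. The resolution --- and the reason Theorem~\ref{thm:spgap} yields only a Turing reduction --- is ``query validation'': before invoking this lemma, the surrounding reduction uses the $\spgap$ oracle to detect and eliminate ``sufficiently invalid'' queries, so that one may assume every query of $U$ is valid; then the correct query string is unique and every $h_i^{(x)}$ with $x_i=1$ inherits the genuine U-LH spectral gap, and both failure modes vanish. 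I would accordingly state and prove Lemma~\ref{lem:spgap} under this standing assumption, dispatching residual base cases (e.g.\ $U$ making no valid query) as in Section~\ref{scn:lemmas}. What then remains is only the routine bookkeeping of the geometric $4^{-(i-1)}$ weights against the $\epsilon$-scale thresholds to pin down the exact constant $(\epsilon-\delta)/4^m$.
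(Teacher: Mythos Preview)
Your proposal shares the paper's two key ideas --- replacing the degenerate $2\epsilon I$ on the $\ketbra{0}{0}$ branch by a nondegenerate operator (the paper simply says ``any fixed $2$-local $A$ with unique ground state of eigenvalue $2\epsilon$ and gap $\epsilon$''; your $2\epsilon I+2\epsilon\sum_j\ketbra{1}{1}_j$ is a fine concrete choice), and using query validation to kill the two failure modes you correctly identify. The difference is \emph{where} the validation lives. The paper builds validation into the lemma itself: it non-constructively replaces each $G'_{y_1\cdots y_{i-1}}$ whose spectral gap is at most $\epsilon-\delta$ by a dummy NO term $G_{y_1\cdots y_{i-1}}=\ketbra{0}{0}\otimes A+\ketbra{1}{1}\otimes 3\epsilon I$, and then argues the gap for the resulting $H''$ (and hence $H$) directly. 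The Theorem later makes this replacement effective via the $\spgap$ oracle. You instead want to push validation to the Theorem and prove the lemma under the standing assumption that every query of $U$ is valid.

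That restructuring has a real gap. The oracle-based validation can only flag queries whose associated $G'$ has gap $\le\epsilon-\delta$; it cannot certify that the surviving queries are \emph{valid}, only that they are ``approximately valid'' (their $G'$ has gap $>\epsilon-\delta$, so $\lambda(H_{\spa{Y}_i}^{i,\ldots})$ is either $\le\epsilon+\delta$ or $\ge 3\epsilon-\delta$, but possibly still in the forbidden interval). Thus the hypothesis ``every query of $U$ is valid'' is not something the surrounding reduction can deliver, and your within-block argument --- which relies on a genuine U-LH gap of $2\epsilon$ whenever $x_i=1$ --- does not go through for these residual almost-valid queries. This is exactly where the $\delta$ in the bound $(\epsilon-\delta)/4^m$ comes from in the paper (not from ``play in the U-LH thresholds, the $\hstab$ weight, and the $\spa{S}^\perp$ analysis'' as you suggest): the paper must rerun the exchange argument of Lemma~\ref{l:amborig} and the within-block gap count with $\epsilon$ degraded to $\epsilon-\delta$ for these surviving-but-invalid queries. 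If you weaken your standing assumption to ``every query is approximately valid in this sense'' and redo the two cases accordingly, your argument becomes essentially the paper's, just with the existence claim and the oracle implementation swapped in order.
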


\begin{proof}
We begin by giving the construction of $H$.
\paragraph{Construction.}
	As done in~\cite{A14}, we begin with $O(\log n)$-local Hamiltonian
	\begin{equation}\label{eqn:H'}
		H' = \sum_{i=1}^m\frac{1}{4^{i-1}}\sum_{y_1,\ldots,y_{i-1}}\bigotimes_{j=1}^{i-1}\ketbra{y_j}{y_j}_{\spa{X}_{j}}\otimes G'_{y_{1}\cdots y_{i-1}},
    \end{equation}
where we define
\begin{equation}\label{eqn:HamG}
    G'_{y_{1}\cdots y_{i-1}} := \ketbra{0}{0}_{\spa{X}_i} \otimes A_{\spa{Y}_i} + \ketbra{1}{1}_{\spa{X}_i}\otimes  H_{\spa{Y}_i}^{i,y_{1}\cdots y_{i-1}}
\end{equation}
with $A$ any fixed $2$-local Hermitian operator with unique ground state of eigenvalue $2\epsilon$ and spectral gap $\epsilon$. Our approach is intuitively now as follows. We first run a \emph{query validation} phase, in which we modify $H'$ to obtain a new Hamiltonian $H''$ by {replacing} ``sufficiently invalid'' queries $H_{\spa{Y}_i}^{i,y_{1}\cdots y_{i-1}}$ with high-energy dummy queries. This creates the desired spectral gap. We then apply the technique of Lemma~\ref{l:amb} to reduce the locality of $H''$, obtaining a $4$-local Hamiltonian $H$, as desired. Note that our proof shows \emph{existence} of $H$; unlike Lemma~\ref{l:amb}, however, it is not clear how to construct $H$ in polynomial-time given $H'$, as detecting invalid UQMA queries with a P machine seems difficult.

The query validation runs as follows. Fix any $1/\poly(n)<\delta<\epsilon$ such that $\epsilon-\delta\geq 1/\poly(n)$; for example, set $\delta=\epsilon/2$. For any $G'_{y_{1}\cdots y_{i-1}}$ whose spectral gap is at most $\epsilon-\delta$, replace it with
\begin{equation}\label{eqn:newG}
    G_{y_{1}\cdots y_{i-1}} := \ketbra{0}{0}_{\spa{X}_i} \otimes A_{\spa{Y}_i} + \ketbra{1}{1}_{\spa{X}_i}\otimes  3\epsilon I
\end{equation}
in $H'$, denoting the new Hamiltonian as $H''$. Two remarks: By Case 2 of Lemma~\ref{l:hgap} below, the validation phase does not catch \emph{all} invalid queries. Second, setting $1/\poly(n)<\delta$ and $\epsilon-\delta\geq 1/\poly(n)$ (as opposed to, say, $1/\exp(n)$) is required for our proof of Theorem~\ref{thm:spgap} later.

\paragraph{Correctness.} Intuitively, the query validation phase roughly replaces each ``sufficiently invalid'' query $i$ with a valid NO query. This is formalized by Lemma~\ref{l:hgap}, shown subsequently, which should be roughly interpreted as saying a ``sufficiently invalid'' query $i$ is one for which the spectral gap of $G'_{y_{1}\cdots y_{i-1}}$ is at most $\epsilon-\delta$. For any such replaced query $i$, henceforth in this proof, a query string which answers YES (i.e. $\ket{1}$) to query $i$ is considered incorrect with respect to $H''$ (since $i$ is now a valid NO query in $H''$). Crucially, if a string $x$ is a correct query string for $H''$, then it is also a correct query string for $H'$. The converse is false; nevertheless, $H''$ has at least one correct query string (since any invalid query would have allowed both $\ket{0}$ and $\ket{1}$ as answers), which suffices for our purposes.

As in the proof of Lemma~\ref{l:amborig}, observe that $H''$ is block-diagonal with respect to register $\bigotimes_{i=1}^m\spa{X}_i$. Let $x\in\set{0,1}^m$ denote a correct query string which has minimal energy among all \emph{correct} query strings against $H''$, and for any $y\in\set{0,1}^m$, define $\lambda_y$ as the smallest eigenvalue in block $\spa{H}_y$. A similar analysis to that of Lemma~\ref{l:amborig} shows that for any incorrect query string $y$, $\lambda_y\geq \lambda_x+\epsilon/4^m$. This is because replacing the term $2\epsilon I$ in $M_{y_1\cdots y_{i-1}}$ from Lemma~\ref{l:amborig} with $A$ in $G_{y_{1}\cdots y_{i-1}}$ here preserves the property that answering NO on query $i$ yields minimum energy $2\epsilon$.
	
We now argue that $x$ is in fact \emph{unique}, and all other eigenvalues of $H''$ corresponding to correct query strings have energy at least $\lambda_x+(\epsilon - \delta )/4^{m}$. There are two cases to consider: eigenvalues arising from different query strings and eigenvalues arising from the same query string.

\paragraph*{Case 1: Eigenvalues from different query strings.} Let $y=y_1\cdots y_m\neq x$ be a correct query string for $H''$. Since both $x$ and $y$ are correct strings, there must exist an invalid query $i$ where $x_i \neq y_i$.

First consider the case where $G'_{y_{1}\cdots y_{i-1}}$ has spectral gap at most $\epsilon - \delta$. Then, after the validation phase, query $i$ is replaced with a valid NO query $G_{y_{1}\cdots y_{i-1}}$. Thus, whichever of $x$ or $y$ has a $1$ as bit $i$ is an incorrect string for $H''$, and from our previous analysis has energy at least $\lambda_x + \epsilon / 4^m$ against $H''$. (This, in particular, implies $x_i=0$ and $y_i=1$, by the minimality of $x$.)

Alternatively, suppose $G'_{y_{1}\cdots y_{i-1}} = G_{y_{1}\cdots y_{i-1}}$ has spectral gap strictly larger than $\epsilon - \delta$. By Lemma~\ref{l:hgap}, the invalid query $i$ must be a ``YES instance violating the uniqueness promise'' with corresponding query Hamiltonian $H_{\spa{Y}_i}^{i,y_{1}\cdots y_{i-1}}$ satisfying $\lambda(H_{\spa{Y}_i}^{i,y_{1}\cdots y_{i-1}})\leq \epsilon$. Thus, the $0$-block of Equation~(\ref{eqn:HamG}) has minimum eigenvalue $2\epsilon$ by construction, whereas the $1$-block of Equation~(\ref{eqn:HamG}) has minimum eigenvalue at most $\epsilon$. It follows from our previous analysis that whichever of $x$ or $y$ has a $0$ as bit $i$ has energy at least $\lambda_x + \epsilon / 4^m$ against $H''$. (This, in particular, implies $x_i=1$ and $y_i=0$, by the minimality of $x$.)

\paragraph*{Case 2: Eigenvalues from the same query string.} Let us next consider eigenvalues in the minimal block $\spa{H}_x$ with correct query string $x$. In this block, $H''$ is equivalent to operator
\[
		M:=\sum_{i=1}^m\frac{1}{4^{i-1}} B_{x_{1}\cdots x_{i-1}},
\]
where $B_{x_{1}\cdots x_{i-1}} = A$ if $x_i=0$ and $B_{x_{1}\cdots x_{i-1}}$ can equal either $H_{\spa{Y}_i}^{i,y_{1}\cdots y_{i-1}}$ or $3\epsilon I$ (depending on how the validation phase proceeded) if $x_i=1$. To show our claim, it suffices to show that $M$ has spectral gap $(\epsilon-\delta)/4^m$.

Recall now that each $B_{x_{1}\cdots x_{i-1}}$ acts non-trivially only on space $\spa{Y}_i$. Therefore, the minimum eigenvalue of $M$ is obtained by setting each space $\spa{Y}_i$ to the ground state vector $\ket{\psi_i}$ of $B_{x_{1}\cdots x_{i-1}}$. If we can now show that each $B_{x_{1}\cdots x_{i-1}}$ has a spectral gap of $\epsilon-\delta$, it will hence follow that swapping $\ket{\psi_i}$ for any excited state $\ket{\psi^\perp_i}$ orthogonal to $\ket{\psi_i}$ on $\spa{Y}_i$ yields an additive increase in energy of at least $\epsilon-\delta$ against any $B_{x_{1}\cdots x_{i-1}}$. It would then follow that the spectral gap of $M$ is at least $(\epsilon-\delta)/4^m$, as desired.

So let us argue that each $B_{x_{1}\cdots x_{i-1}}$ has spectral gap at least $\epsilon-\delta$. There are three cases to consider:
\begin{itemize}
    \item If $B_{x_{1}\cdots x_{i-1}}=A$, it has spectral gap $\epsilon$ by definition of $A$.
    \item If $B_{x_{1}\cdots x_{i-1}}=H_{\spa{Y}_i}^{i,y_{1}\cdots y_{i-1}}$, then consider first the case where query $i$ was valid. Since $B_{x_{1}\cdots x_{i-1}}=H_{\spa{Y}_i}^{i,y_{1}\cdots y_{i-1}}$ only if $x_i=1$, it follows that the correct answer to query $i$ is $1$. Thus, $H_{\spa{Y}_i}^{i,y_{1}\cdots y_{i-1}}$ is a valid YES instance of U-LH, and so has a spectral gap of at least $2\epsilon$ by definition.

        If instead query $i$ was invalid, then since $B_{x_{1}\cdots x_{i-1}}=H_{\spa{Y}_i}^{i,y_{1}\cdots y_{i-1}}$ and not $B_{x_{1}\cdots x_{i-1}}=3\epsilon I$, it must be that the query validation phase did not catch this invalid query, implying the spectral gap of $G'_{y_{1}\cdots y_{i-1}}$ was strictly larger than $\epsilon-\delta$. But then Lemma~\ref{l:hgap} implies that the  spectral gap of $H_{\spa{Y}_i}^{i,y_{1}\cdots y_{i-1}}$ must be at least $\epsilon-\delta$, as desired.
    \item Suppose $B_{x_{1}\cdots x_{i-1}}=3\epsilon I$; this happens only when the query validation phase replaced an invalid query $i$ with a valid NO query $3\epsilon I$. But since it also holds that $B_{x_{1}\cdots x_{i-1}}=3\epsilon I$ only if $x_i=1$, this implies that $x$ is not a correct query string with respect to $H''$ (since it answered $1$ on query $i$ instead of $0$), which is a contradiction.
\end{itemize}
 This completes the correctness proof.

\paragraph{Getting $H''$ down to $4$-local $H$.} Finally, the approach of Lemma~\ref{l:amb} allows us to convert $O(\log n)$-local $H''$ to $4$-local $H$.
\end{proof}

The following lemma is used in the proof of Lemma~\ref{lem:spgap} above, and assumes  the notation introduced therein.
\begin{lemma}\label{l:hgap}
    In Equation~(\ref{eqn:HamG}), suppose $H_{\spa{Y}_i}^{i,y_{1}\cdots y_{i-1}}$ is  invalid. Precisely one of the following cases holds.
    \begin{enumerate}
        \item (Neither YES or NO case) If there exists $0< \delta\leq\epsilon$ such that $\lambda(H_{\spa{Y}_i}^{i,y_{1}\cdots y_{i-1}})=\epsilon+\delta$ or $\lambda(H_{\spa{Y}_i}^{i,y_{1}\cdots y_{i-1}})=3\epsilon-\delta$, then the spectral gap of $G'_{y_{1}\cdots y_{i-1}}$ is at most $\epsilon - \delta$.
        \item (YES case violating uniqueness promise) If $\lambda(H_{\spa{Y}_i}^{i,y_{1}\cdots y_{i-1}})\leq\epsilon$ and  $\lambda_2(H_{\spa{Y}_i}^{i,y_{1}\cdots y_{i-1}})<3\epsilon$, then for any $0<\delta<\epsilon$, either the spectral gap of $G'_{y_{1}\cdots y_{i-1}}$ is at most $\epsilon - \delta$, or the spectral gap of $H_{\spa{Y}_i}^{i,y_{1}\cdots y_{i-1}}$ is least $\epsilon-\delta$.
    \end{enumerate}
\end{lemma}
\begin{proof}
    Observe that $G'_{y_{1}\cdots y_{i-1}}$ is block-diagonal with respect to register $\spa{X}_i$; we will refer to each of these blocks as the $0$- and $1$-block, respectively. For clarity, when we refer to $\lambda_2(X)$ for an operator $X$ below, if the ground space of $X$ is degenerate then we define $\lambda_2(X)=\lambda(X)$.

   \paragraph{Case 1.} Suppose first that $\lambda(H_{\spa{Y}_i}^{i,y_{1}\cdots y_{i-1}})=\epsilon+\delta$. Then, since the smallest eigenvalue in the $0$-block is $2\epsilon$, we have that $\lambda(G'_{y_{1}\cdots y_{i-1}})=\lambda(H_{\spa{Y}_i}^{i,y_{1}\cdots y_{i-1}})$, and so the spectral gap of  $G'_{y_{1}\cdots y_{i-1}}$ equals
   \[
    \min(2\epsilon,\lambda_2(H_{\spa{Y}_i}^{i,y_{1}\cdots y_{i-1}}))-(\epsilon+\delta)\leq \epsilon-\delta.
   \]
   Suppose next that $\lambda(H_{\spa{Y}_i}^{i,y_{1}\cdots y_{i-1}})=3\epsilon-\delta$. Then, $\lambda(G'_{y_{1}\cdots y_{i-1}})=2\epsilon$ by the $0$-block. Since the spectral gap of the $0$-block is $\epsilon$ by construction, we thus have that $\lambda_2(G'_{y_{1}\cdots y_{i-1}})=\lambda(H_{\spa{Y}_i}^{i,y_{1}\cdots y_{i-1}})$, and so the spectral gap of  $G'_{y_{1}\cdots y_{i-1}}$ equals $(3\epsilon-\delta)-2\epsilon=\epsilon-\delta$.

   \paragraph{Case 2.} Since the smallest eigenvalue of the $0$-block is $2\epsilon$, we have $\lambda(G'_{y_{1}\cdots y_{i-1}})=\lambda(H_{\spa{Y}_i}^{i,y_{1}\cdots y_{i-1}})$. Therefore, the spectral gap of $G'_{y_{1}\cdots y_{i-1}}$ equals
   \begin{equation}\label{eqn:14}
    \min(2\epsilon,\lambda_2(H_{\spa{Y}_i}^{i,y_{1}\cdots y_{i-1}}))-\lambda(H_{\spa{Y}_i}^{i,y_{1}\cdots y_{i-1}}).
   \end{equation}
   For any $\delta$ as defined in the claim, suppose now that the spectral gap of $G'_{y_{1}\cdots y_{i-1}}$ is at least $\epsilon-\delta$. Then if in Equation~(\ref{eqn:14}) $\min(2\epsilon,\lambda_2(H_{\spa{Y}_i}^{i,y_{1}\cdots y_{i-1}}))=2\epsilon$, since $\lambda(H_{\spa{Y}_i}^{i,y_{1}\cdots y_{i-1}})\leq \epsilon$ by assumption, we conclude the spectral gap of $H_{\spa{Y}_i}^{i,y_{1}\cdots y_{i-1}}$ is at least $\epsilon$. Otherwise, if $\min(2\epsilon,\lambda_2(H_{\spa{Y}_i}^{i,y_{1}\cdots y_{i-1}}))=\lambda_2(H_{\spa{Y}_i}^{i,y_{1}\cdots y_{i-1}})$, then the smallest two eigenvalues of $H_{\spa{Y}_i}^{i,y_{1}\cdots y_{i-1}}$ and $G'_{y_{1}\cdots y_{i-1}}$ coincide, and so the spectral gap of the former is at least $\epsilon-\delta$. Thus, we conclude that either the spectral gap of $G'_{y_{1}\cdots y_{i-1}}$ is at most $\epsilon - \delta$, or the spectral gap of $H_{\spa{Y}_i}^{i,y_{1}\cdots y_{i-1}}$ is least $\epsilon-\delta$.
\end{proof}

We now prove the main theorem of this section.
\begin{proof}[Proof of Theorem \ref{thm:spgap}]
As done in~\cite{A14}, we start with the Hamiltonian $H'$ from Equation~(\ref{eqn:H'}). In \cite{A14}, it was shown (Section A.3, Claim 2) that if all query Hamiltonians $H_{\spa{Y}_i}^{i,y_{1}\cdots y_{i-1}}$ correspond to valid UQMA queries, $H'$ has a unique ground state and spectral gap at least $\epsilon/4^m$. When invalid queries are allowed, however, the spectral gap of $H'$ can vanish, invalidating the $\PUQMA$-hardness proof of~\cite{A14}. Thus, we require a technique for identifying invalid queries and ``removing them'' from $H'$. Unfortunately, it is not clear how a P machine alone can achieve such a ``property testing'' task of checking if a query is sufficiently invalid. However, the key observation is that an oracle $Q$ for SPECTRAL GAP can help.

     We proceed as follows. Given an arbitrary $\PUQMA$ circuit $U$ acting on $n$ bits, construct $O(\log n)$-local $H'$ from Equation~(\ref{eqn:H'}). For each term $G'_{y_{1}\cdots y_{i-1}}$ appearing in $H'$, perform binary search using $O(\log n)$ queries to $Q$ to obtain an estimate $\Delta$ for the spectral gap of $G'_{y_{1}\cdots y_{i-1}}$ to within sufficiently small but fixed additive error $\delta\in 1/\poly(n)$. (A similar procedure involving a QMA oracle is used in Ambainis' proof of containment of $\app\in\PQMA$ to estimate the smallest eigenvalue of a local Hamiltonian; we hence omit further details here.) As done in the proof of Lemma~\ref{lem:spgap}, if $\Delta\leq \epsilon-\delta$, we conclude $H_{\spa{Y}_i}^{i,y_{1}\cdots y_{i-1}}$ is ``sufficiently invalid'', and replace $G'_{y_{1}\cdots y_{i-1}}$ with $G_{y_{1}\cdots y_{i-1}}$ from Equation~(\ref{eqn:newG}). Following the construction of Lemma~\ref{lem:spgap}, we hence can map $H'$ to a $4$-local Hamiltonian $H$ such that $H$ has a unique ground state and spectral gap $(\epsilon-\delta)/4^m$, and the ground state of $H$ corresponds to a correct query string for $H'$. Note that implementing the mapping from $H'$ to $H$ requires polynomially many queries to the oracle, hence yielding a polynomial-time \emph{Turing} reduction.
	
	Next, following~\cite{A14}, let $T:= \sum_{y_1 ... y_m} \bigotimes_{i=1}^m \ketbra{y_i}{y_i}\in\lin{\spa{Y}}$, where we sum over all query strings $y_1...y_m$ which cause $U$ to output $0$. Unlike~\cite{A14}, as done in Lemma~\ref{l:amb}, we apply Kitaev's unary encoding trick~\cite{KSV02} and implicitly encode the query strings in $T$ in unary. (We remark the term $\hstab$ contained in $H$ will enforce the correct unary encoding in register $\spa{X}$). Finally, introduce a single-qubit register $\spa{B}$, and define
	\[
		H_{\rm final} := I_B \otimes H_{\spa{X},\spa{Y}} + 4\epsilon \ketbra{0}{0}_B \otimes T_{\spa{X}}\otimes I_{\spa{Y}}.
	\]	
The claim now follows via an analysis similar to \cite{A14}. Let $\ket{\psi}_{\spa{X},\spa{Y}}$ denote the unique ground state of $H$, whose $\spa{X}$ register contains the (unary encoding of) a correct query string for $U$. If $U$ accepts, then $\ket{i}_{\spa{B}}\otimes\ket{\psi}_{\spa{X},\spa{Y}}$ for $i\in\set{0,1}$ are degenerate ground states of $H_{\rm final}$, implying $H_{\rm final}$ has no spectral gap. Conversely, if $U$ rejects, observe that the smallest eigenvalue of $H_{\rm final}$ lies in the $\ket{1}_{\spa{B}}$ block of $H_{\rm final}$. This is because $H_{\rm final}$ is block-diagonal with respect to register $\spa{X}$, and we have from the proof of Lemma~\ref{l:amb} that $\lambda(H)< 3\epsilon$. Restricted to this $\ket{1}_{\spa{B}}$ block, the spectral gap of $H_{\rm final}$ is at least $(\epsilon-\delta)/4^m$ by Lemma~\ref{lem:spgap}. Alternatively, restricted to the $\ket{0}_\spa{B}$ block, any correct query string in $\spa{X}$ leads to spectral gap at least $4\epsilon$ (by construction of $T$, since $U$ outputs $0$ in this case), and any incorrect query string in $\spa{X}$ leads to spectral gap at least $(\epsilon-\delta)/4^m$ by Lemma~\ref{lem:spgap}. Hence, $H_{\rm final}$ has an inverse polynomial spectral gap, as desired.
\end{proof}

\section{Upper bounds (containment in PP)}\label{scn:PP}

We now restate and prove Theorem~\ref{thm:inPP}.  Our approach is to use the strong error reduction technique of Marriott and Watrous~\cite{MW05} to develop a variant of the hierarchical voting scheme used in the proof of $\PNP\subseteq \class{PP}$~\cite{BHW89}. We also require a more involved analysis than present in~\cite{BHW89}, since QMA is a class of promise problems, not decision problems.

\begin{reptheorem}{thm:inPP}
    $\PQMA\subseteq\class{PP}$.
\end{reptheorem}

\begin{proof}
    Let $\Pi$ be a P machine which makes $m=c\log n$ queries to an oracle for $2$-LH, for $c\in O(1)$ and $n$ the input size.
    Without loss of generality, we assume all queries involve Hamiltonians on $M$ qubits ($M$ some fixed polynomial in $n$). Define $q:= (M+2)m$. We give a PQP computation simulating $\Pi$ (i.e. the PQP computation, which does not have access to a $2$-LH oracle, will accept if and only if $\Pi$ is a YES instance). Since $\class{PQP}=\class{PP}$~\cite{W13}, this yields the claim. Let $\abs{y}$ denote the non-negative integer with binary encoding $y$, and let $V$ denote the verification circuit for $2$-LH. The PQP computation is (intuition to follow):
    \begin{enumerate}
        \item For $i$ from $1$ to $m$:
        \begin{enumerate}
            \item Prepare $\rho=I/2^M\in\density{(\complex^2)^{\otimes M}}$.
            \item Run $V$ on the $i$-th query Hamiltonian $H_{\spa{Y}_i}^{i,y_{1}\cdots y_{i-1}}$ (see Equation~(\ref{eqn:amb1})) and proof $\rho$, and measure the output qubit in the standard basis. Set bit $y_i$ to the result.
        \end{enumerate}
        \item Let $y=y_1\cdots y_m$ be the concatenation of bits set in Step 1(b).
        \item For $i$ from $1$ to $n^c-1$:
        \begin{enumerate}
            \item If $\abs{y}< i$, then with probability $1-2^{-q}$, set $y=\#$, and with probability $2^{-q}$, leave $y$ unchanged.
        \end{enumerate}
        \item If $y=\#$, output a bit uniformly at random. Else, run $\Pi$ on query string $y$ and output $\Pi$'s answer.
    \end{enumerate}
    \textbf{Intuition.} In Step 1, one tries to determine the correct answer to query $i$ by guessing a satisfying quantum proof for verifier $V$. Suppose for the moment that $V$ has zero error, i.e. has completeness $1$ and soundness $0$, and that $\Pi$ only makes valid queries. Then, if Step 1(b) returns $y_i=1$, one knows with certainty that the query answer should be $1$. And, if the correct answer to query $i$ is $0$, then Step 1(b) returns $y_i=0$ with certainty. Thus, analogous to the classical case of an NP oracle (as done in~\cite{BHW89}), it follows that the lexicographically \emph{largest} query string $y^*$ obtainable by this procedure must be the (unique) correct query string (note that $y^*\neq 1^m$ necessarily\footnote{Under the assumptions that $V$ has zero error and $\Pi$ makes only valid queries, $y^*=1^m$ can only be obtained by this procedure if all queries are for YES instances of $2$-LH. If, on the other hand, query $i$ is a NO query, then a correct proof cannot be guessed (since it does not exist), and so $y^*_i=0$ necessarily.}). Thus, ideally one wishes to obtain $y^*$, simulate $\Pi$ on $y^*$, and output the result. To this end, Step 3 ensures that among all values of $y\neq \#$, $y^*$ is more likely to occur than all other $y\neq y^*$ combined. We now make this intuition rigorous (including in particular the general case where $V$ is not zero-error and $\Pi$ makes invalid queries).\\

    \noindent \textbf{Correctness.} To analyze correctness of our PQP computation, it will be helpful to refine our partition of the set of query strings $\set{0,1}^m$ into three sets:
    \begin{itemize}
        \item \textbf{(Correct query strings)} Let $A\subseteq\set{0,1}^m$ denote the set of query strings which correspond to correctly answering each of the $m$ queries. Note we may have $\abs{A}>1$ if invalid queries are made.
        \item \textbf{(Incorrect query strings)} Let $B\subseteq\set{0,1}^m\setminus A$ denote the set of query strings such that for any $y\in B$, all bits of $y$ which encode an incorrect query answer are set to $0$ (whereas the correct query answer would have been $1$, i.e. we failed to ``guess'' a good proof for this query in Step 1).
        \item \textbf{(Strongly incorrect query strings)} Let $C=\set{0,1}^m\setminus (A\cup B)$ denote the set of query strings such that for any $y\in C$, at least one bit corresponding to an incorrect query answer is set to $1$ (whereas the correct query answer would have been $0$). Such an error can only arise due to the bounded-error of our QMA verifier in Step 1(b).
    \end{itemize}

    Let $Y$ be a random variable corresponding to the query string $y$ obtained at the end of Step 3. To show correctness, we claim that it suffices to show that
    $
        \Delta:=\pr[Y\in A]-\pr[Y\in B\cup C]>0.
    $
    To see this, let $p_1$, $p_2$, and $p_3$ denote the probability that after Step 3, $y=\#$, $y\in A$, and $y\in B\cup C$, respectively. Then, $p_1+p_2+p_3=1$, and let $p_2-p_3=\Delta>0$. Suppose now that the input to $\Pi$ is a YES instance. Then, our protocol outputs $1$ with probability at least
    $
        \frac{p_1}{2}+p_2=\frac{1-p_2-p_3}{2}+p_2=\frac{1+\Delta}{2}>\frac{1}{2}.
    $
    If the input is a NO instance, the protocol outputs $1$ with probability
    $
        \leq \frac{p_1}{2}+p_3=\frac{1-\Delta}{2}<\frac{1}{2}.
    $
    We hence have a PQP computation, as desired. We thus now show that $\Delta>0$.

    To ease the presentation, we begin by making two assumptions (to be removed later): (i) $V$ is zero-error and (ii) $\Pi$ makes only valid queries. In this case, assumption (i) implies $C=\emptyset$ (i.e. all incorrect query strings belong to $B$), and (ii) implies $A$ is a singleton (i.e. there is a unique correct query string $y^*$). Thus, here $\Delta=\pr[Y\in A]-\pr[Y\in B]$.

    To begin, note that for any $y\in\set{0,1}^m$, we have\vspace{-1mm}
    \begin{equation}\label{eqn:prob}
        \pr[Y=y]=\pr[y \text{ chosen in Step 2 }] \cdot \left(\frac{1}{2^q}\right)^{(n^c-1)-\abs{y}} .
    \end{equation}
    \noindent Let $\hw(x)$ denote the Hamming weight of $x\in\set{0,1}^m$. Since each query corresponds to a verifier on $M$ proof qubits, we have for (the unique) $y^*\in A$ that\vspace{-1mm}
    \begin{equation}\label{eqn:LB}
        \pr[y^* \text{ chosen in Step 2 }]\geq 2^{-M\cdot\hw(y^*)}\geq 2^{-Mm}
    \end{equation}
    (recall from Section~\ref{scn:preliminaries} that setting $\rho=I/2^M$ simulates ``guessing'' a correct proof with probability at least $1/2^M$). It follows by Equations~(\ref{eqn:prob}) and~(\ref{eqn:LB}) that
    \begin{eqnarray}
          \Delta &\geq&\left(\frac{1}{2^q}\right)^{(n^c-1)-\abs{y^*}}\left[\frac{1}{2^{Mm}}-\sum_{y\in B}\pr[y \text{ chosen in Step 2 }]\cdot \left(\frac{1}{2^q}\right)^{\abs{y^*}-\abs{y}}\right]\nonumber\\
          &\geq&\left(\frac{1}{2^q}\right)^{(n^c-1)-\abs{y^*}}\left[\frac{1}{2^{Mm}}- (2^m)\left(\frac{1}{2^q}\right)\right]
          \geq\left(\frac{1}{2^q}\right)^{(n^c-1)}\frac{1}{2^{Mm}}\left[1- \frac{1}{2^{m}}\right],\label{eqn:zeroerror}
    \end{eqnarray}
    where the second inequality follows from the trivial bound $\pr[y \text{ chosen in Step 2 }]\leq 1$ and since $y\in B$ if and only if $\abs{y}<\abs{y^*}$, and the third since $q=(M+2)m$. Thus, $\Delta>0$ as desired.\\

    \noindent\textbf{Removing assumption (i).} We now remove the assumption that $V$ is zero error. In this case, $A$ is still a singleton; let $y^*\in A$. We can now also have strongly incorrect query strings, i.e. $C\neq \emptyset$ necessarily. Assume without loss of generality that $V$ acts on $M$ proof qubits, and by strong error reduction~\cite{MW05} has completeness $c:=1-2^{-p(n)}$ and soundness $s:=2^{-p(n)}$, for $p$ a polynomial to be chosen as needed. Then, since $V$ can err, Equation~(\ref{eqn:LB}) becomes
    \begin{eqnarray}
        \pr[y^*\text{ chosen in Step 2 }]&\geq&\left(\frac{c}{2^M}\right)^{\hw(y^*)}\left(1-s\right)^{m-\hw(y^*)}=\frac{1}{2^M}^{\hw(y^*)}e^{m\ln(1-\frac{1}{2^p})}\nonumber\\
        &\geq& \frac{1}{2^{Mm}}\left(1-\frac{m}{2^p-1}\right),\label{eqn:8}
    \end{eqnarray}
    where the equality follows by the definitions of $c$ and $s$, and the second inequality by applying the Maclaurin series expansion $\ln(1+x)=\sum_{n=1}^{\infty}(-1)^{n+1}\frac{x^n}{n}$ for $\abs{x}<1$ and the fact that $e^t\geq 1+t$ for all $t\in \reals$. Thus, the analysis of Equation~(\ref{eqn:zeroerror}) yields that
    \begin{equation}\label{eqn:7}
        \pr[Y\in A]-\pr[Y\in B]\geq \left(\frac{1}{2^q}\right)^{(n^c-1)}\frac{1}{2^{Mm}}\left[1- \frac{1}{2^{m}}-\frac{m}{2^p-1}\right],
    \end{equation}
    i.e. the additive error introduced when assumption $(i)$ is dropped scales as $\approx2^{-p}$. Crucially, Equation~(\ref{eqn:7}) holds for all $y\in B$ even with assumption (i) dropped since the analysis of Equation~(\ref{eqn:zeroerror}) used only the trivial bound $\pr[y \text{ chosen in Step 2 }]\leq 1$ for any $y\in B$.

     Next, we upper bound the probability of obtaining $y\in C$ in Step 2. Next, we upper bound the probability of obtaining $y\in C$ in Step 2.
     \begin{lemma}\label{l:last}
        $\pr(Y\in C)\leq \frac{2^m}{2^p}$.
     \end{lemma}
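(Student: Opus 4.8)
The plan is to control $\pr[Y\in C]$ by a union bound over the at most $2^m$ strings in $C$, bounding each term by the soundness error of the strongly amplified verifier $V$.

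First I would observe that Step~3 only ever overwrites the working string $y$ with the symbol $\#$ (and only in the case $\abs{y}<i$); it never turns the string produced in Step~2 into a \emph{different} element of $\set{0,1}^m$. Consequently $Y\in C$ can happen only if the string chosen in Step~2 already lies in $C$, so
\[
\pr[Y\in C]\;\le\;\pr[\text{string chosen in Step~2 lies in }C]\;=\;\sum_{y\in C}\pr[y\text{ chosen in Step~2}],
\]
the equality holding because the Step~2 outcome is a single string per run of the protocol.

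Next I would bound $\pr[y\text{ chosen in Step~2}]$ for a fixed $y\in C$. This probability factors as $\prod_{i=1}^m\pr[\text{query }i\text{ returns }y_i\mid y_1\cdots y_{i-1}]$, a product of conditional probabilities each at most $1$, so it suffices to exhibit a single index $k$ whose factor is at most $2^{-p}$. Since $C=\set{0,1}^m\setminus(A\cup B)$, the definition of $B$ tells us that at least one bit $y_k$ of $y$ encoding an incorrect query answer is set to $1$; fix such a $k$. An invalid query has no incorrect answer, so query $k$ (as determined by the prefix $y_1\cdots y_{k-1}$) is a \emph{valid} query, and since $y_k=1$ is the wrong answer, its unique correct answer is $0$, i.e.\ query $k$ is a NO instance of $2$-LH. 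By the soundness bound $s=2^{-p}$, the verifier $V$ run on query $k$ with the maximally mixed proof $\rho=I/2^M$ accepts with probability at most $2^{-p}$; since Step~1(b) sets $y_k=1$ exactly when $V$ accepts, the corresponding conditional factor, and hence the whole product, is at most $2^{-p}$.

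Combining these gives $\pr[Y\in C]\le\sum_{y\in C}2^{-p}\le\abs{C}\cdot 2^{-p}\le 2^m/2^p$, as claimed. The argument is short; the only point needing care is keeping straight \emph{which} query an ``incorrect'' bit $y_k$ refers to, namely query $k$ as actually issued along the computation path $y_1\cdots y_{k-1}$, so that the soundness of $V$ applies to exactly the conditional distribution appearing in the product above --- but this is precisely how the sets $B$ and $C$ were defined, so no real obstacle arises. (Pivoting on an arbitrary incorrect $1$-bit rather than on the \emph{first} mistake is what keeps this clean, since the first mistake in a string of $C$ need not itself be the $1$-bit.)
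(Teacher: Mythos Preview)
Your proof is correct and follows essentially the same approach as the paper's: union bound over $y\in C$, then for each such $y$ locate an index $k$ where $y_k=1$ is the wrong answer to a valid NO query and invoke the soundness $s=2^{-p}$ of $V$. You are more explicit than the paper about why Step~3 cannot create new strings in $C$, about the factorization into conditional probabilities, and about why one should pivot on an arbitrary incorrect $1$-bit rather than the first mistake; the paper compresses all of this into two sentences, but the content is the same.
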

\begin{proof}
    Any $y\in C$ must have a bit $j$ incorrectly set to $1$, whereas the correct query answer (given bits $1$ through $j-1$ of $y$) should have been $0$. The probability of this occurring for bit $j$ in Step 1(b) is at most $2^{-p}$, by the soundness property of $V$. Since $\abs{C}\leq 2^m$, the claim follows.
\end{proof}
     Lemma~\ref{l:last} now implies
    \begin{equation}\label{eqn:Delta2}
        \Delta\geq\left(\frac{1}{2^q}\right)^{(n^c-1)}\frac{1}{2^{Mm}}\left[1- \frac{1}{2^{m}}-\frac{m}{2^p-1}\right] - \frac{2^m}{2^p}.
    \end{equation}
    We conclude that setting $p$ to a sufficiently large fixed polynomial ensures $\Delta>0$, as desired.\\

    \noindent\textbf{Removing assumption (ii).}  We now remove the assumption that $\Pi$ only makes valid queries, which is the most involved step. Here, $A$ is no longer necessarily a singleton. The naive approach would be to let $y^*$ denote the \emph{lexicographically largest} string in $A$, and attempt to run a similar analysis as before. Unfortunately, this no longer necessarily works for the following reason. For any invalid query $i$, we do not have strong bounds on the probability that $V$ accepts in Step 1(b); in principle, this value can lie in the range $(2^{-p},1-2^{-p})$. Thus, running the previous analysis with the lexicographically largest $y^*\in A$ may cause Equation~(\ref{eqn:Delta2}) to yield a negative quantity.
    We hence require a more delicate analysis.

    We begin by showing the following lower bound.
    \begin{lemma}\label{l:LB}
    Define $\Delta':=\pr[Y\in A]-\pr[Y \in B]$. Then,
    \[
        \Delta'\geq \left(\frac{1}{2^q}\right)^{(n^c-1)}\frac{1}{2^{Mm}}\left[1- \frac{1}{2^{m}}-\frac{m}{2^p-1}\right].
    \]
    \end{lemma}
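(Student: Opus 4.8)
The plan is to track how the distribution produced by Step~2 is reshaped by the hierarchical voting of Step~3, isolating the contribution of the set $A$ of correct query strings. Write $D$ for the distribution of the string $Y^{(2)}$ output at the end of Step~2; by Equation~(\ref{eqn:prob}), $\pr[Y=y]=D(y)\,(2^{-q})^{(n^c-1)-|y|}$ for every $y\in\{0,1\}^m$, so a string of integer value $|y|$ survives Step~3 with probability $(2^{-q})^{(n^c-1)-|y|}$, which is non-decreasing in $|y|$ and at least $(2^{-q})^{n^c-1}$ for every string. I would first handle the ``$A$ side''. Let $E$ be the event that in Step~2 every query which turns out to be valid (relative to the prefix realized so far) is answered correctly; since at an invalid query \emph{both} answers are correct, $E$ forces $Y^{(2)}\in A$. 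By the chain rule, conditioned on ``no error so far'' the next step produces the correct answer with probability at least $c/2^M$ (valid-YES query), at least $1-s$ (valid-NO), or with certainty (invalid); using $c/2^M\le 1-s$ this is at least $c/2^M$ in all cases, so $\pr[E]\ge(c/2^M)^m=(1-2^{-p})^m 2^{-Mm}\ge 2^{-Mm}(1-m/(2^p-1))$ by Bernoulli's inequality. Combining with the monotonicity remark gives $\pr[Y\in A]\ge(2^{-q})^{n^c-1}\pr[Y^{(2)}\in A]\ge(2^{-q})^{n^c-1}\pr[E]\ge(2^{-q})^{n^c-1}2^{-Mm}(1-m/(2^p-1))$.

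It then suffices to prove $\pr[Y\in B]\le 2^{-m}\pr[Y\in A]$, since then $\Delta'=\pr[Y\in A]-\pr[Y\in B]\ge\pr[Y\in A](1-2^{-m})\ge(2^{-q})^{n^c-1}2^{-Mm}(1-m/(2^p-1))(1-2^{-m})\ge(2^{-q})^{n^c-1}2^{-Mm}(1-2^{-m}-m/(2^p-1))$ by $(1-a)(1-b)\ge1-a-b$. To bound $\pr[Y\in B]$ I would partition $B$ according to the position $j$ of its first erroneous bit and the (necessarily correct) prefix $w:=y_1\cdots y_{j-1}$ preceding it. Since $y\in B$ forces $y_j=0$ while the correct answer to query $j$ relative to $w$ is $1$, that query is a valid-YES instance, $w1$ is itself a correct prefix, and the $B$-strings with this branch point are of the form $w0t$, $t\in\{0,1\}^{m-j}$. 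Two estimates are then needed. First, $D(w0t)\le P_w$, where $P_w$ is the probability that Step~2 produces $w$ as its first $j-1$ bits; summing over $t$ and using that the integer values $|w0t|$ fill an interval of length $2^{m-j}$ with top endpoint $|w10^{m-j}|-1$ (here $|w10^{m-j}|$ is the integer whose binary expansion is $w$, then $1$, then $m-j$ zeros), a geometric-series bound gives $\sum_t\pr[Y=w0t]\le\frac{2^{-q}}{1-2^{-q}}P_w(2^{-q})^{(n^c-1)-|w10^{m-j}|}$. Second, every correct string extending $w1$ has integer value at least $|w10^{m-j}|$, and $\pr[Y^{(2)}\in A\text{ and extends }w1]\ge P_{w1}(c/2^M)^{m-j}\ge P_w(c/2^M)^{m-j+1}$ (the prefix $w1$ is reached with probability at least $P_w\cdot c/2^M$ since query $j$ is valid-YES, and the remaining $\le m-j$ valid queries are each answered correctly with probability $\ge c/2^M$ by the chain-rule argument above); hence $\pr[Y\in A\text{ and extends }w1]\ge P_w(c/2^M)^{m-j+1}(2^{-q})^{(n^c-1)-|w10^{m-j}|}$.

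Dividing, the $B$-mass at the branch point $(w,j)$ is at most $\frac{2^{-q}}{(1-2^{-q})(c/2^M)^{m-j+1}}$ times $\pr[Y\in A\text{ and extends }w1]$, and since $c/2^M=(1-2^{-p})/2^M$ and $q=(M+2)m$ this factor is at most $\frac{2^{-2m}}{(1-2^{-q})(1-m2^{-p})}$ for every $j$. Summing over branch points: for fixed $j$ the events ``$Y\in A$ and $Y$ extends $w1$'', over the distinct length-$j$ correct prefixes $w1$, are disjoint sub-events of $\{Y\in A\}$ and so sum to at most $\pr[Y\in A]$; summing further over $j=1,\dots,m$ yields $\pr[Y\in B]\le\frac{m\,2^{-2m}}{(1-2^{-q})(1-m2^{-p})}\pr[Y\in A]$. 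Taking $p$ a sufficiently large polynomial so that $1-m2^{-p}\ge\tfrac34$ (and noting $1-2^{-q}\ge\tfrac34$ automatically), this coefficient is at most $\tfrac{16}{9}m\,2^{-2m}\le 2^{-m}$ for all $m\ge1$, which is the desired bound $\pr[Y\in B]\le 2^{-m}\pr[Y\in A]$.

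The step that will require the most care is this bound on $\pr[Y\in B]$: distinct correct prefixes $w1$ can be nested and a single correct completion can be charged against by branch points at several positions $j$, so a naive union bound over branch points loses a factor of $m$. What makes this harmless is the slack in the choice $q=(M+2)m$: the per-branch ratio of $B$-mass to $A$-mass is not merely $2^{-m}$ but $O(2^{-2m})$, which absorbs the factor-$m$ loss from summing over first-error positions, together with the $\frac1{1-2^{-q}}$ and $\frac1{1-m2^{-p}}$ corrections, provided $p$ is taken a large enough polynomial. The remaining ingredients---the chain-rule lower bound on $\pr[E]$, the value-monotonicity of the Step~3 survival probability, and the geometric-series estimates---are routine.
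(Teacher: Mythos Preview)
Your argument is correct and takes a genuinely different route from the paper's. The paper proves the lemma via an iterative process: it orders the correct query strings $y_1^*,y_2^*,\ldots\in A$ by a carefully chosen notion of ``divergence probability'' (the product of the Step~2 acceptance probabilities over the \emph{invalid} queries along a string), partitions $B$ into intervals $B_{y_i^*}=\{y\in B:\abs{y_{i-1}^*}<\abs{y}<\abs{y_i^*}\}$, and shows by induction that each increment $\zeta_i:=\pr[Y=y_i^*]-\pr[Y\in B_{y_i^*}]$ is nonnegative, with $\zeta_1$ alone already achieving the claimed bound. Your proof bypasses this machinery entirely: you lower bound $\pr[Y\in A]$ directly by the event ``every valid query is answered correctly,'' and you control $\pr[Y\in B]$ by decomposing $B$ according to the first-error branch point $(w,j)$ and charging each branch against the $A$-mass that extends the correct prefix $w1$. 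The key observation that makes your charging scheme work---that the choice $q=(M+2)m$ gives a per-branch ratio of order $2^{-2m}$ rather than $2^{-m}$, leaving room to absorb the factor-$m$ loss from the union bound over $j$---is a nice simplification; the paper instead avoids any over-counting by its more elaborate interval partition of $B$. Your approach is more elementary and arguably cleaner; the paper's approach, by tracking divergence probabilities, gives slightly tighter intermediate control and does not need the extra $2^{-m}$ of slack in $q$.

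One minor point: your bound $\pr[Y\in B]\le 2^{-m}\pr[Y\in A]$ invokes ``$p$ a sufficiently large polynomial,'' which is not literally part of the lemma's hypotheses. This is harmless here---the paper's own inductive step also needs the bracket $1-2^{-m}-m/(2^p-1)$ to be positive for $\zeta_i>0$, so both proofs implicitly use that $p$ may be chosen large, as is explicitly permitted in the surrounding proof of Theorem~\ref{thm:inPP}.
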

    \begin{proof}
            For any string $y\in\set{0,1}^m$, let $I_y\subseteq\set{1,\ldots, m}$ denote the indices of all bits of $y$ set by invalid queries. We call each such $i\in I_y$ a \emph{divergence point}. (This name is chosen since on any invalid query, the computation can diverge into either of two correct computation paths.) Let $p_{y,i}$ denote the probability that (invalid) query $i$ (defined given answers to queries $1$ through $i-1$) outputs bit $y_i$, i.e. $p_{y,i}$ denotes the probability that at divergence point $i$, we go in the direction of bit $y_i$. We define the \emph{divergence probability} of $y\in\set{0,1}^m$ as $p_y = \Pi_{i\in I_{y}}p_{y,i}$, i.e. $p_y$ is the probability of answering all invalid queries as $y$ did.

The proof now proceeds by giving for each $1\leq i\leq \abs{A}$, which denotes the \emph{iteration number}, a $3$-tuple $(y_{i-1}^*, y_i^*, B_{y_i^*})\in \set{0,1}^m\times\set{0,1}^m\times \mathcal{P}(B)$, where $\mathcal{P}(X)$ denotes the power set of set $X$. Set
            \[
                \Delta'_i:=\pr[Y\in\set{y_1^*,\ldots,y_i^*}]-\pr[Y\in B_{y_1^*}\cup\cdots\cup B_{y_i^*}],
            \]
            where it will be the case that $\set{B_{y_i^*}}_{i=1}^{\abs{A}}$ is a partition of $B$. Thus, we have $\Delta'\geq \Delta'_{\abs{A}}$, implying that a lower bound on $\Delta'_{\abs{A}}$ suffices to prove our claim. We hence prove via induction on the iteration number $i$ that for all $1\leq i\leq \abs{A}$,
            \[
                \Delta'_i\geq \left(\frac{1}{2^q}\right)^{(n^c-1)}\frac{1}{2^{Mm}}\left[1- \frac{1}{2^{m}}-\frac{m}{2^p-1}\right].
            \]
            \\\vspace{-2mm}

            \noindent\emph{Base case (i=1).} In this case $y_0^*$ is undefined. Set $y_1^*$ to any string in $A$ with divergence probability  at least
    \begin{equation}\label{eqn:9}
        p_1^* = \prod_{i\in I_{y_1^*}}p_{y_1^*,i}\geq {2^{-\abs{I_{y_1^*}}}}.
    \end{equation}
    Such a string must exist, since at each divergence point $i$, at least one of the outcomes in $\set{0,1}$ occurs with probability at least $1/2$. (Note that queries are not being made to a QMA oracle in hierarchical voting (since our PQP machine does not have access to a QMA oracle), but to a QMA verifier $V$ with a maximally mixed proof as in Step 1(a). Whereas in the former case the output of the oracle on an {invalid} query does not have to consistently output a value with any particular probability, in the latter case, there is some fixed probability $p$ with which $V$ outputs $1$ each time it is run on a fixed proof.) Finally, define $B_{y_1^*}:=\set{y\in B\mid \abs{y}<\abs{y_1^*}}$ (recall $\abs{y}$ is the non-negative integer with binary encoding $y$).

    Let $k_*$ denote the number of divergence points of $y_1^*$ (i.e. $k_*=\abs{I_{y_1^*}}$), and $k_0$ ($k_1$) the number of zeros (ones) of $y_1^*$ arising from valid queries. Thus, $k_*+k_0+k_1=m$. Then, Equation~(\ref{eqn:8}) becomes
    \begin{align}
        \pr[y_1^*\text{ in Step 2 }] &\geq \left(\frac{c}{2^M}\right)^{k_1}\left(1-s\right)^{k_0}p_1^*
\nonumber
        \geq\left(\frac{1}{2^M}\right)^{k_1}\left(\frac{1}{2}\right)^{k_*}\left(1-\frac{m-k_*}{2^p-1}\right)\\ &\geq \frac{1}{2^{Mm}}\left(1-\frac{m}{2^p-1}\right),\label{eqn:10}
    \end{align}
    where the second inequality follows from Equation~(\ref{eqn:9}), and the third since $k_*\geq 0$ and $k_1+k_*\leq m$. Thus, $\Delta'_1$ is lower bounded by the expression in Equation~(\ref{eqn:7}) via an analogous analysis for $y_1^*$ and $B_{y_1^*}$.\\

\noindent\emph{Inductive step.} Assume the claim holds for $1\leq i-1<\abs{A}$. We show it holds for $i$. Let $y_{i-1}^*$ be the choice of $y^*$ in the previous iteration $i-1$. Define $A_{y_i^*}:=\set{y\in A\mid \abs{y}>\abs{y_{i-1}^*}}$. Partition $A_{y_i^*}$ into sets $S_{k}$ for $k\in[m]$, such that $S_k$ is the subset of strings in $A_{y_i^*}$ which agrees with $y_{i-1}^*$ on the first $k-1$ bits, but disagrees on bit $k$. Note that if $S_k\neq\emptyset$, then bit $k$ of $y_{i-1}^*$ is $0$ and bit $k$ of any string in $S_k$ is $1$. For each $S_k\neq\emptyset$, choose  arbitrary representative $z_k\in S_k$, and define \emph{bounded} divergence probability
\[
    q_{i}(k):=\prod_{t\in I^{\leq k}_{z_k}}p_{z_k,t}
\]
where $I^{\leq k}_{z_k}:= \set{t\in I_{z_k}\mid t \leq k}$. Note that $q_i(k)>0$ (since $S_k\neq\emptyset$). Else if $S_k = \emptyset$, set $q_i(k)=0$. Let $q_i^*$ be the max such bounded divergence probability:
\begin{equation}\label{eqn:q}
    q_i^*=\max_{k\in[m]} q_i(k) \qquad\qquad\text{and}\qquad\qquad k_i^*=\argmax_{k\in[m]} q_i(k).
\end{equation}
Let $y_i^*$ be the lexicographically largest query string in $S_{k_i^*}$ with divergence probability $p_i^*$ s.t.:
\begin{equation}\label{eqn:99}
        p_i^* \geq q_i^*\cdot 2^{-\abs{I_{y_i^*}}+\abs{I^{\leq {k_i^*}}_{y_{i}^*}}}.
\end{equation}
 That such a $y_i^*\in S_{k_i^*}$ exists follows from an argument similar to Equation~(\ref{eqn:9}): By definition, $q_i^*$ denotes the bounded divergence probability for all invalid queries up to and including query $k_i^*$, and the term exponential in $\left(-\abs{I_{y_i^*}}+\abs{I^{\leq {k_i^*}}_{y_{i}^*}}\right)$ is obtained by greedily choosing, for all invalid queries of $y_i^*$ \emph{after} query $k_i^*$, the outcome which occurs with probability at least $1/2$. Set $B_{y_i^*}:=\set{y\in B\mid \abs{y^*_{i-1}}<\abs{y}<\abs{y_i^*}}$.  The following lemma will be useful.
\begin{lemma}\label{l:LB2}
    For any $y\in B_{y_i^*}$, $\pr[y\text{ chosen in Step 2}]\leq q_i^*$, where recall $q_i^*$ is the probability from Equation~(\ref{eqn:q}).
\end{lemma}
\begin{proof}
    Fix any $y\in B_{y_i^*}$. Since $\abs{y}>\abs{y_{i-1}^*}$, there must be an index $k$ such that the $k$-th bit of $y$ is $1$ and that of $y_{i-1}^*$ is $0$. Let $k$ denote the first such index. Since $y\not\in C$ (because $B_{y_i^*}\cap C=\emptyset$), it must be that query $k$ (defined given bits $y_1\cdots y_{k-1}$) is invalid. Thus, bit $k$ is a divergence point of $y_{i-1}^*$,  and there exists a correct query string $y'\in S_k$. By Equation~(\ref{eqn:q}), $q_i^*$ was chosen as the maximum over all bounded diverge probabilities. Thus, $q_i^*\geq q_i(k)$, where recall $q_i(k)$ is the bounded divergence probability for $S_k$, where $y'\in S_k$. But since $y$ and $y'$ agree on bits $1$ through $k$ inclusive, we have $\pr[y\text{ chosen in Step 2}]\leq\prod_{t\in I^{\leq k}_{y}}p_{y,t}=q_i(k)$, from which the claim follows.
\end{proof}

    To continue with the inductive step, again consider $k_*$, $k_0$, and $k_1$, now corresponding to $y_i^*$. Then, an argument similar to Equation~(\ref{eqn:10}) says $\pr[y_i^*\text{ chosen in Step 2 }]$ is at least
        \begin{align}
        \left(\frac{c}{2^M}\right)^{k_1}\left(1-s\right)^{k_0}p_i^* &\geq \left(\frac{1}{2^M}\right)^{k_1}\left(1-\frac{m-k_*}{2^p-1}\right)q_i^*\left(\frac{1}{2}\right)^{{\abs{I_{y_i^*}}-\abs{I^{\leq {k_i^*}}_{y_{i}^*}}}} \nonumber\\
        &\geq\frac{q_i^*}{2^{Mm}}\left(1-\frac{m}{2^p-1}\right),\label{eqn:12}
    \end{align}
        where the first inequality follows from Equation~(\ref{eqn:99}), and the second since ${\abs{I_{y_i^*}}-\abs{I^{\leq {k_i^*}}_{y_{i}^*}}}\leq k_*$. Now, define $\zeta_i:= \pr[Y=y_i^*]-\pr[Y\in B_{y_i^*}]$. Applying the argument of Equation~(\ref{eqn:zeroerror}) yields
        \[
            \zeta_i\geq \left(\frac{1}{2^q}\right)^{(n^c-1)-\abs{y_i^*}}\left[\frac{q^*_i}{2^{Mm}}\left(1-\frac{m}{2^p-1}\right)-q^*_i\sum_{y\in B_{y^*_i}}\left(\frac{1}{2^q}\right)^{\abs{y_i^*}-\abs{y}}\right],
        \]
                where the first $q_i^*$ is due to Equation~(\ref{eqn:12}), and the second $q_i^*$ to Lemma~\ref{l:LB2}. Thus, similar to Equation~(\ref{eqn:7}),
        \[
            \zeta_i \geq \left(\frac{1}{2^q}\right)^{(n^c-1)}\frac{q_i^*}{2^{Mm}}\left[1- \frac{1}{2^{m}}-\frac{m}{2^p-1}\right]>0.
        \]
        Observing the recurrence that for all $i$, $\Delta'_{i}\geq\Delta'_{i-1}+\zeta_i$, unrolling this recurrence yields $\Delta'_i\geq \Delta_1$, which by the base case yields the claim.
    \end{proof}
    Finally, combining Lemmas~\ref{l:last} and~\ref{l:LB} (recall Lemma~\ref{l:last} shows $\pr(Y\in C)\leq \frac{2^m}{2^p}$; note the proof of Lemma~\ref{l:last} still applies after assumption (ii) is dropped) yields that $\pr[Y\in A]-\pr[Y \in B\cup C]$ is lower bounded by
    \[
       \pr[Y\in A]-\pr[Y \in B]-\pr[Y\in C] \geq \left(\frac{1}{2^q}\right)^{(n^c-1)}\frac{1}{2^{Mm}}\left[1- \frac{1}{2^{m}}-\frac{m}{2^p}\right] - \frac{2^m}{2^p}.
    \]
	For sufficiently large fixed $p$, this quantity is strictly positive, yielding Theorem~\ref{thm:inPP}.
\end{proof}

\section*{Acknowledgments}
We thank Xiaodi Wu for stimulating discussions which helped motivate this project, including suggesting to think about two-point correlation functions (which arose via discussions with Aram Harrow, whom we also thank). We also thank Andris Ambainis and Norbert Schuch for helpful discussions, and remark they independently conceived of some of the ideas behind Lemma~\ref{l:amb} and Theorem~\ref{thm:main1}, respectively (private communication). We thank an anonymous referee for pointing out a minor error in the proof of Lemma~\ref{l:LB} in a previous version of this draft. Part of this work was completed while SG was supported by a Government of Canada NSERC Banting Postdoctoral Fellowship and the Simons Institute for the Theory of Computing at UC Berkeley. SG acknowledges support from NSF grants CCF-1526189 and CCF-1617710. JY was supported by a Virginia Commonwealth University Presidential Scholarship.

\bibliographystyle{plainnat}
\bibliography{GY16_arxiv_v3_bibliography}

\appendix
\section{Additional proofs}\label{scn:appendix}

\begin{lemma}\label{l:proof}
   Assume the terminology of Lemma~\ref{l:no}. Then, there exists a valid history state $\ket{\psi'}$ on $W$, $C$, $Q$, and $Q'$ such that $\abs{\brakett{\psi}{\psi'}}^2\geq 1-O(\gamma^{-2})$.
\end{lemma}
\begin{proof}
    Recall that the smallest non-zero eigenvalue of $H_1=\Delta(\hin+\hprop+\hstab)$ is at least $J:=\pi^2\Delta/(64L^3)$ (by Lemma~\ref{l:GKgap}), $\delta=1/\Delta$, and $\eta\geq \max(\snorm{H_2},1)=:K$. Lemma~\ref{cor:kkr} now implies there exists a valid history state $\ket{\psi'}$ satisfying
    \begin{eqnarray*}
        \abs{\brakett{\psi}{\psi'}}^2&\geq&1-\left(\frac{K+\sqrt{K^2+\delta(J-2K)}}{J-2K}\right)^2\\
        &\geq&1-\left(\frac{\eta+\sqrt{\eta^2+\frac{\pi^2}{64L^3}}}{\frac{\pi^2\Delta}{64L^3}-2\eta}\right)^2\\
        &\geq&1-\left(\frac{(1+\sqrt{2})\eta}{\frac{\pi^2\Delta}{64L^3}-2\eta}\right)^2\\
        &=&1-\left(\frac{(1+\sqrt{2})64L^3\eta}{\pi^2\Delta-128L^3\eta}\right)^2\\
        &\geq&1-\Theta\left(\frac{1}{\gamma^2}\right),
    \end{eqnarray*}
    where the second inequality follows since $0\leq K\leq \eta$, the third since $\pi^2/(64L^3)\leq 1$ for $L\geq 1$ and $1\leq \eta^2$, and the last since $\Delta=128L^3\eta\gamma$ and $\gamma\geq 1$.
\end{proof}

\end{document}